\documentclass[lettersize,onecolumn]{IEEEtran}
\usepackage{amsmath,amsfonts}
\usepackage{algorithmic}
\usepackage{algorithm}
\usepackage{array}
\usepackage[caption=false,font=normalsize,labelfont=sf,textfont=sf]{subfig}
\usepackage{textcomp}
\usepackage{stfloats}
\usepackage{url}
\usepackage{verbatim}
\usepackage{graphicx}
\usepackage{cite}
\usepackage{amsthm}         
\usepackage{float}
\usepackage{newtxtext}
\usepackage{graphicx}       
\usepackage{amsmath}        
\usepackage{amssymb,bm,cleveref}        
\usepackage{amsfonts} 
\usepackage{cancel}

\DeclareMathOperator*{\argmax}{arg\,max}
\DeclareMathOperator*{\argmin}{arg\,min}

\DeclareMathOperator{\dssnr}{\delta_{\text{S}}}

\DeclareMathOperator{\tr}{tr}

\DeclareMathOperator{\im}{Im}

\DeclareMathOperator{\Sp}{Sp}

\newtheorem{theorem}{Theorem}

\newtheorem{lemma}{Lemma}
\newtheorem{remark}{Remark}

\newtheorem{proposition}{Proposition}

\def\by{{\bm{y}}}

\def\R{{\mathbb{R}}}
\def\N{{\mathbb{N}}}

\def\nats{{\mathbb{N}}}
\def\bA{{\bm{A}}}
\def\bB{{\bm{B}}}

\def\bw{{\bm{w}}}

\def\bU{{\bm{U}}}
\def\bx{\bm{x}}
\def\bz{\bm{z}}

\def\b0{{\bm{0}}}
\def\bX{{\bm{X}}}

\def\bI{{\bm{I}}}

\def\bP{{\bm{P}}}

\def\bT{{\bm{T}}}
\def\bSigma{{\bm{\Sigma}}}

\def\calI{{\mathcal{I}}}

\def\calX{{\mathcal{X}}}
\def\calK{{\mathcal{K}}}
\def\psdgeq{\succeq}
\def\psdleq{\preceq}

\def\eps{\epsilon}

\begin{document}

\title{The Sharp Phase Transition of Tyler's M-Estimator for Robust Subspace Recovery}

\author{
    Gilad Lerman\thanks{School of Mathematics, University of Minnesota, Minneapolis, MN 55455 USA (e-mail: lerman@umn.edu).} and 
    Teng Zhang\thanks{Department of Mathematics, University of Central Florida, Orlando, FL 32816 USA (e-mail: teng.zhang@ucf.edu).}
}



\maketitle

\begin{abstract}
Robust Subspace Recovery (RSR) aims to identify an underlying $d$-dimensional subspace from a dataset heavily corrupted by outliers. Complexity-theoretic results establish a threshold for the problem's computational hardness based on the dimension-scaled signal-to-noise ratio (DS-SNR): 
the problem is SSE-hard when the DS-SNR is strictly less than 1, and solvable via practical algorithms when it is greater than 1 under general position assumptions. However, the exact behavior of practical algorithms at the critical boundary DS-SNR = 1 has remained unknown. This work resolves the behavior of Tyler's M-estimator (TME) at this critical boundary, consequently establishing a sharp phase transition. Specifically, we prove that TME converges exactly to the true subspace for DS-SNR $\geq$ 1 under a new stability condition, which is less restrictive than the general position assumptions used in prior literature. Our analysis utilizes a decomposition of the TME iterates within a majorization-minimization framework.
\end{abstract}

\begin{IEEEkeywords}
robust subspace recovery, Tyler's M-estimator, phase transition, robust statistics, majorization-minimization, matrix perturbation theory.
\end{IEEEkeywords}

\section{Introduction}

Robust Subspace Recovery (RSR) is a fundamental problem in robust statistics, machine learning, and computer vision. The primary goal of RSR is to identify an underlying low-dimensional linear subspace from a dataset that is heavily corrupted by outliers. The standard formulation of the noiseless RSR problem assumes a dataset $\calX = \{\bx_i\}_{i=1}^N \subset \R^D$ consisting of $n_1$ inliers lying exactly on a $d$-dimensional linear subspace $L_* \subset \R^D$, and $n_0$ outliers lying strictly off $L_*$. We refer to such a dataset as a noiseless inlier-outlier dataset, where the total number of points is $N=n_0+n_1$. The central algorithmic question in noiseless RSR is under what conditions one can exactly and efficiently recover the underlying $d$-subspace $L_*$.

A natural metric for characterizing the difficulty of this problem is the ratio of inliers to outliers, $n_1/n_0$, which can be viewed as a signal-to-noise ratio (SNR)~\cite{lerman2018overview,maunu19ggd,maunu2019robust}. Building on foundational results~\cite{hardt2013algorithms,zhang2016robust}, Yu et al.~\cite{Yu2024} suggested scaling $n_1$ and $n_0$ by the dimension $d$ and codimension $D-d$ of the inlier subspace, respectively. This leads to the \textit{dimension-scaled SNR (DS-SNR)}, denoted by $\dssnr$:
\begin{equation}
\label{eq:dssnr}
    \dssnr := \frac{n_1/d}{n_0/(D-d)}.
\end{equation}

The DS-SNR precisely characterizes the computational hardness of the RSR problem. Hardt and Moitra~\cite{hardt2013algorithms} established a fundamental lower bound, showing that when $\dssnr < 1$, the noiseless RSR problem is Small Set Expansion (SSE)-hard, a property conjectured to be equivalent to NP-hardness~\cite{raghavendra2010graph}. In the special case of hyperplanes ($d = D-1$), they showed NP-hardness by invoking a result from \cite{Khachiyan_complexity1995}.

\begin{theorem}[\cite{hardt2013algorithms}]
\label{thm:sse_hard}
The noiseless RSR problem is SSE-hard if $\dssnr < 1$. Furthermore, if $\dssnr < 1$ and $d = D-1$, then the problem is NP-hard.
\end{theorem}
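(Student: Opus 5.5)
This is a cited theorem from Hardt and Moitra, so the plan is to reconstruct their reduction rather than prove anything from scratch. The natural strategy is a reduction from a hard subspace-density problem. Call it the ``dense subspace'' problem: given $N$ points in $\R^D$ in general position, decide whether some $d$-dimensional subspace contains at least $\rho N$ of them. Recall the general-position bound: if the points are in general position, any $d$-subspace contains at most $d$ of them. So the interesting regime is the threshold fraction $d/D$.

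The key scaling observation links this threshold to the DS-SNR. Writing $n_1=\rho N$ and $n_0=(1-\rho)N$, we get $\dssnr<1$ exactly when $\rho<d/D$. Indeed,
\[
\frac{n_1/d}{n_0/(D-d)}<1 \iff n_1(D-d)<n_0 d \iff \rho<\frac{d}{D}.
\]
So it suffices to show that finding a $d$-subspace containing a $\rho$-fraction of the points, for any fixed $\rho<d/D$, is SSE-hard.

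The hardness argument proceeds in the following steps.

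\textbf{Starting problem.} I would start from the Small Set Expansion hypothesis in its equivalent ``Gap-Small-Set-Expansion'' form. The SSE-hard instance concerns a graph in which one must distinguish whether some small vertex set $S$ has tiny expansion, or every small set expands well.

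\textbf{Encoding as a point set.} I would encode the graph as a point set, following the linear-algebraic encoding Hardt–Moitra use. Points lie on a subspace associated with a non-expanding set, so that $S$ corresponds to a $d$-subspace containing many points. In the expanding case, every $d$-subspace contains at most about a $d/D$ fraction minus a gap.

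\textbf{Gap amplification.} I would then tensor or pad the construction so that the yes-case density exceeds the no-case density while both remain strictly below $d/D$. This is the main obstacle. Beating the trivial $d/D$ threshold is precisely what the DS-SNR captures, and one must carefully show that a random or generic perturbation keeps the no-case density under $\rho$. Doing so requires controlling dependencies like a general-position argument with a small defect.

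\textbf{Hyperplane case.} For $d=D-1$, I would instead reduce from the result of \cite{Khachiyan_complexity1995}. That result shows it is NP-hard to decide whether a hyperplane passes through a prescribed fraction of the points, for fractions below $(D-1)/D$. Under the scaling identity above, this gives NP-hardness whenever $\dssnr<1$.

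Finally, note that exact recovery of $L_*$ in the noiseless model would solve the decision problem. One simply checks how many points lie on the returned subspace. Hence the noiseless RSR problem inherits the hardness of the decision problem.
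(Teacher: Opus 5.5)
The paper gives no proof of this statement; it is imported from Hardt--Moitra. Your sketch therefore has to stand on its own as a reduction, and it has a genuine gap exactly where the content lies. You never specify how a Gap-SSE instance becomes a point set (``following the linear-algebraic encoding Hardt--Moitra use''). The step you flag as the main obstacle, gap amplification by tensoring/padding with a ``random or generic perturbation,'' is both unnecessary and unworkable. In the noiseless model, a perturbation applied without knowledge of the planted set moves inliers off the subspace and destroys the exact containment that defines the yes instance, and you give no control of the no case.

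With a direct encoding, no amplification is needed. For instance, for an $r$-regular graph $G=(V,E)$ take the points $e_u-e_v\in\R^V$, $\{u,v\}\in E$, so $N=r|V|/2$ and $D=|V|-1$ (they lie in $\mathbf{1}^\perp$).
\emph{Yes case:} a set $S$ with $|S|=\delta|V|$ and expansion at most $\epsilon$ yields the $(|S|-1)$-dimensional subspace of sum-zero vectors supported on $S$. It contains exactly the edges inside $S$, i.e., at least a $(1-\epsilon)|S|/|V|\approx(1-\epsilon)d/D$ fraction of the points, which is automatically below $d/D$.
\emph{No case:} a set $F$ of edge vectors has rank $|V(F)|$ minus the number of connected components of $(V(F),F)$, hence rank at least $|V(F)|/2$. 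So any $d$-dimensional subspace contains only edges lying inside a vertex set of size at most $2d$. Padding that set to measure $2\delta$ and using no-case expansion at that scale, this is at most an $O(\epsilon)\,d/D$ fraction. The SSE gap thus transfers directly.

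Two further claims need correcting. First, your starting ``dense subspace problem'' on points in general position is trivial, since, as you note yourself, general position caps every $d$-subspace at $d$ points. Second, the target ``for any fixed $\rho<d/D$'' cannot hold when $d$ is bounded: enumerating all $d$-subsets of the data in $O(N^d)$ time solves noiseless RSR at any DS-SNR. Hardness holds only in a regime where $d$ grows with the input (e.g., $d=\delta D$ with $\delta$ tied to the SSE parameters), and the quoted theorem must be read that way.

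Likewise, Khachiyan's result is the NP-hardness of deciding whether some $D$ of $N$ given vectors in $\R^D$ are linearly dependent (a singular $D\times D$ submatrix), i.e., whether some hyperplane contains $D$ of the points. It is not a statement about every fraction below $(D-1)/D$. Deducing NP-hardness for all $\dssnr<1$ with $d=D-1$ therefore needs an additional argument you do not give, again with $D$ growing, since for fixed $D$ enumeration is polynomial.
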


Conversely, when $\dssnr > 1$, the problem becomes computationally tractable under certain strong structural assumptions. Hardt and Moitra~\cite{hardt2013algorithms} proposed probabilistic and deterministic subspace-search algorithms (RandomizedFind and DeRandomizedFind, respectively), which recover $L_*$ provided the dataset is in \textit{general position} with respect to $L_*$. This means that every subset of $D$ points is linearly independent if and only if it contains at most $d$ inliers, effectively assuming $L_*$ is the only low-dimensional structure in the data. Since the focus of this paper is on deterministic algorithms, we formulate the theorem for DeRandomizedFind. We introduce two necessary conditions that are absent from \cite{hardt2013algorithms} and clarified later in \Cref{rem:hardt_moitra_conditions} of \Cref{sec:clarify_zhang16_extend}. 

\begin{theorem}[\cite{hardt2013algorithms}, Theorem 22]
\label{thm:hm_doable}
Assume $\calX \subset \R^D$ is a noiseless inlier-outlier dataset with an underlying $d$-subspace $L_*$. If $\calX$ is in general position with respect to $L_*$, $\dssnr > 1$, $N>D$, and there are at least $D-d$ outliers, DeRandomizedFind outputs $L_*$ in polynomial time.
\end{theorem}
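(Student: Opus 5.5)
The plan is to \emph{build a deterministic procedure} that repeatedly projects the data onto the orthogonal complement of one of its own points. The point of such a projection is that it never decreases $\dssnr$, whether the chosen point is an inlier or an outlier. Once the procedure reaches a degenerate configuration, $L_*$ can be read off from the points that collapse to zero. I would present DeRandomizedFind in this form, rather than as a method-of-conditional-expectations version of RandomizedFind, since the labels are unknown and conditional expectations of the inlier count cannot be evaluated.

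\textbf{Step 1 (monotonicity of $\dssnr$ under projection).} Pick any $\bx\in\calX$ and let $P$ be the orthogonal projection onto $\bx^\perp\cong\R^{D-1}$, discarding $\bx$ itself.
\begin{itemize}
\item If $\bx$ is an inlier, the new problem has $n_1-1$ inliers in the $(d-1)$-subspace $P L_*$, codimension $D-d$, and $n_0$ outliers. Since $(n_1-1)/(d-1)\ge n_1/d \iff n_1\ge d$, and $n_1> d$ follows from $\dssnr>1$ together with $n_0\ge D-d$, the DS-SNR does not decrease.
\item If $\bx$ is an outlier, we get $n_1$ inliers, dimension $d$ in ambient dimension $D-1$, and $n_0-1$ outliers. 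Now $(n_0-1)/(D-d-1)\le n_0/(D-d)\iff n_0\ge D-d$.
\end{itemize}
This is exactly where the hypothesis ``at least $D-d$ outliers'' enters. The hypothesis is itself preserved: in the outlier case both $n_0$ and $D-d$ drop by one. I would also check that $N>D$ is preserved, since both $N$ and $D$ drop by one.

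\textbf{Step 2 (general position is inherited).} I must show that if $\calX$ is in general position with respect to $L_*$, then $P(\calX\setminus\{\bx\})$ is in general position with respect to $PL_*$ in $\bx^\perp$. The key fact is that for a set $S$ of $D-1$ points, $PS$ is independent iff $S\cup\{\bx\}$ is independent. Moreover, $S\cup\{\bx\}$ has at most $d$ inliers iff $S$ has at most $d-1$ inliers (when $\bx$ is an inlier), respectively at most $d$ inliers (when $\bx$ is an outlier). The general position property of $\calX$ therefore transfers directly to the projected data. I also need that outliers stay off $PL_*$, which is the case $|S|=1$ of the same equivalence. I expect this bookkeeping, together with the edge cases where $n_0$ or $n_1$ becomes small, to be the main obstacle.

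\textbf{Step 3 (termination and recovery).} Iterate Step 1, projecting out arbitrary points, one at a time. This needs no branching and at most $D$ steps, each a polynomial-time linear-algebra operation. Since $\dssnr>1$ persists and $n_1>d$, the inlier dimension reaches $0$ before the inliers run out, which happens once $d$ inliers have been removed. At that moment the remaining inliers are mapped to $\b0$. Outliers never map to zero, by Step 2. The algorithm detects the first time some data point projects to $\b0$. It then outputs the span of that point together with the projected-out points lying in the same span. Equivalently, by general position, it takes any $d+1$ original points that have become dependent, which form a minimal circuit. By general position, every minimal dependent subset of size at most $D$ consists of $d+1$ inliers, so its span is exactly $L_*$.

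One more case needs a separate check. If the codimension reaches $0$ first, all remaining points are inliers. This would require the $D-d$ projected outliers to exhaust the codimension; the condition $n_0\ge D-d$ and the strict inequality $\dssnr>1$ ensure that $d$ inliers are removed first. I would verify this via the invariant from Step 1, using the case distinction on which type of point is processed. I flag this termination argument as the second delicate point of the proof.
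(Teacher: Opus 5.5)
There is a genuine gap. It sits in Step~1 and propagates to Step~3.

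Your outlier-case inequality is reversed. Cross-multiplying, $(n_0-1)/(D-d-1)\le n_0/(D-d)$ is equivalent to $n_0\le D-d$, not $n_0\ge D-d$. Under the hypothesis $n_0\ge D-d$, projecting out an outlier therefore never increases $\dssnr$, and it strictly decreases it whenever $n_0>D-d$. For $D=4$, $d=1$, $n_1=2$, $n_0=5$, projecting out outliers takes $\dssnr$ from $6/5$ to $1$ to $2/3$. So the invariant ``$\dssnr>1$ persists'' fails. With it falls the claim that, in an arbitrary processing order, $d$ inliers are projected out before the $D-d$ codimensions are used up. The hypotheses say nothing about an order chosen without knowing the labels.

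Concretely, take $D=3$, $d=1$, three inliers on a line and four generic outliers, so $\dssnr=3/2$ and $N=7$. Process three outliers first. After two of them the codimension is $0$, yet two outliers still sit on the line together with the inliers. This also refutes your statement that all remaining points are then inliers, whenever $n_0>D-d$. Nothing collapses until the three projected-out outliers span $\R^3$. At that moment every point collapses at once. The dependency your rule extracts (collapsed point plus projected-out points) has size $D+1=4$ and spans $\R^3\neq L_*$.

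What is missing is a mechanism forcing at least $d+1$ inliers among the first $D$ processed points. The quantity that behaves well is the expected inlier count along a \emph{random} order, not $\dssnr$ along an arbitrary one. A uniformly random $D$-subset contains $Dn_1/N$ inliers on average. Also $\dssnr>1$ is equivalent to $Dn_1>dN$, and integrality gives $Dn_1-dN\ge 1$. Hence at least $d+1$ inliers appear with probability at least $1/(N(D-d))$.

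Your circuit identification then finishes correctly: every minimal dependent set of at most $D$ points consists of $d+1$ inliers. This is Claim~10 of Hardt--Moitra, and it is exactly where $n_0\ge D-d$ is needed, as \Cref{rem:hardt_moitra_conditions} explains. Your Step~2 bookkeeping is fine but is not where the difficulty lies.

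Obtaining a deterministic guarantee is what Theorem~22 of Hardt--Moitra supplies. The paper does not reprove that theorem; it imports it and only justifies the two added hypotheses $n_0\ge D-d$ and $N>D$. Note also that by replacing DeRandomizedFind with an arbitrary-order projection scheme, you would be proving a statement about a different algorithm even if the scheme worked.
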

Zhang~\cite{zhang2016robust} later established that Tyler's M-Estimator (TME)~\cite{tyler1987distribution}, a widely used robust statistical technique, can also exactly recover $L_*$ under slightly different general position assumptions. Specifically, we say that the inliers are in general position with respect to $L_*$ if 
any $d$ of them are linearly independent. Similarly, we say that the outliers are in
general position with respect to $L_*^\perp$ if after projecting them onto $L_*$ any $D-d$ of them are linearly independent.
\begin{theorem}[\cite{zhang2016robust}]
\label{thm:zhang16}
Let $\calX$ be a noiseless inlier–outlier dataset not containing the origin, with an underlying $d$-subspace $L_*$. Assume the inliers are in general position with respect to $L_*$ and the outliers are in general position with respect to $L_*^\perp$. If $\dssnr > 1$, then TME exactly recovers $L_*$.
\end{theorem}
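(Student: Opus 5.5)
We will study the TME iteration
\[
\bSigma_{k+1}=\frac{D}{N}\sum_{i=1}^N \frac{\bx_i\bx_i^\top}{\bx_i^\top\bSigma_k^{-1}\bx_i},
\]
normalized to have trace one. The plan is to show that $\bSigma_k$ converges to a singular matrix whose range is exactly $L_*$. We work in coordinates adapted to $L_*\oplus L_*^\perp$ and write
\[
\bSigma_k=\begin{pmatrix}\bA_k & \bB_k\\ \bB_k^\top & \bw\text{-block}\end{pmatrix},
\]
with $\bA_k$ the $d\times d$ inlier block and $\bT_k$ the $(D-d)\times(D-d)$ block on $L_*^\perp$. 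The quantity to track is the "relative size" of $\bT_k$ against the whole matrix, for example through $\det$ or through $\tr(\bP_{L_*^\perp}\bSigma_k)$. The aim is to show that it decreases geometrically.

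The first step uses the majorization--minimization viewpoint. The TME objective
\[
F(\bSigma)=\frac{D}{N}\sum_i\log(\bx_i^\top\bSigma^{-1}\bx_i)+\log\det\bSigma
\]
is monotonically decreasing along the iterates. The condition $\dssnr>1$ should make $F$ unbounded below along the ray $\bSigma_t=\bP_{L_*}+t\bP_{L_*^\perp}$ as $t\to0$. Indeed, along this ray the inlier terms stay bounded, each outlier term grows like $\log(1/t)$, and $\log\det$ decreases like $(D-d)\log t$. The net coefficient of $\log t$ is therefore
\[
(D-d)-\frac{D n_0}{N},
\]
which is positive exactly when $n_1/d>n_0/(D-d)$, that is, when $\dssnr>1$. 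This suggests the minimizer degenerates onto $L_*$.

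To make this rigorous for the iterates, rather than only along the ray, the second step applies the general position assumptions. Inlier general position keeps $\bA_k$ uniformly well conditioned, since no proper subspace of $L_*$ captures too many inliers. Outlier general position, applied to the projections onto $L_*^\perp$, rules out the competing degeneration in which the iterates collapse onto some other subspace $L\neq L_*$. The mechanism is a counting bound, in the spirit of Kent--Tyler existence conditions: any subspace $L$ with $L\neq L_*$ contains at most a $\dim(L)/D$ fraction-compatible number of points. We would prove a lemma: for every subspace $L$ other than $L_*$, $|\calX\cap L|/N<\dim L/D$. Combined with the MM descent, this forces every limit point of the normalized iterates to have range $L_*$.

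The third step, which we expect to be the main obstacle, is showing that $\bT_k\to0$ while $\bA_k$ stays bounded away from singular, and that the off-diagonal blocks $\bB_k$ do not prevent this. We plan to update via the Schur complement. On the inliers, $\bx^\top\bSigma_k^{-1}\bx$ depends only on the Schur complement $\bA_k-\bB_k\bT_k^{-1}\bB_k^\top$, which stays inverted-bounded. On the outliers, the weights are of order $1/(\text{projected norm})^2$ in the $\bT_k^{-1}$ metric. Summing the outlier contributions in the $\bT$-block therefore gives roughly
\[
\bT_{k+1}\approx \frac{D n_0}{N(D-d)}\,\bT_k\cdot(\text{average}),
\]
after an averaging argument on the trace of $\bT_k^{-1}\bT_{k+1}$. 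Outlier general position is what is needed to make this averaging hold uniformly. This yields contraction by the factor
\[
\frac{D n_0}{N(D-d)}<1,
\]
which holds precisely when $\dssnr>1$. Finally, we control $\bB_k$ relative to $\bT_k^{1/2}$ by a similar weighted sum, which concludes that the range of the limit equals $L_*$.
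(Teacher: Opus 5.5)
\textbf{There is a genuine gap.} Step 2 is wrong, and Step 3 asserts its key estimate rather than proving it.

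Your Step~2 rests on a lemma that is false under the hypotheses, and the mechanism behind it cannot work when $\dssnr>1$. The inequality $|\calX\cap L|/N<\dim L/D$ is the Kent--Tyler condition for a \emph{nonsingular} TME. It is reversed at $L=L_*$ (that reversal is exactly $\dssnr>1$), and it is typically reversed for enlargements of $L_*$ as well. For example, take $D=3$, $d=1$, $n_1=6$, $n_0=4$, so $\dssnr=3$ and both general position assumptions can hold. The plane spanned by $L_*$ and one outlier then contains $7$ of the $10$ points, and $7/10>2/3$. The relevant stability statement is instead the ratio condition \eqref{eq:assumption3_thm}, $|\calX\cap L|/\dim L<n_1/d$; this is what Proposition~\ref{prop:thirdassumption} extracts from general position. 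Even that corrected lemma would not rescue the argument. The same ray computation as in your Step~1 shows $F\to-\infty$ along collapse onto \emph{any} $L$ with $|\calX\cap L|/N>\dim L/D$, for instance the plane above. So monotonicity of $F$ (or of $\tilde F$) cannot single out $L_*$ as the range of the limit points. The paper's descent argument, which bounds the expansion \eqref{eq:asymptotic_sum_log}, is used only at $\dssnr=1$, where every coefficient $D_\ell/D-N_\ell/N$ is nonnegative. For $\dssnr>1$ (Section~\ref{sec:proof_thm_zhang16_extend_part2}) the paper abandons descent and argues dynamically.

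Step~3 has the right ingredient, but as written it does not show that the $L_*^\perp$ part vanishes relative to the rest.
\begin{itemize}
\item Since $\bx^\top\bSigma_k^{-1}\bx\ge\bz^\top\bT_k^{-1}\bz$ with $\bz=\bU_{L_*^\perp}^\top\bx$, your averaging gives, for the unnormalized iteration, $\tr(\bT_k^{-1}\bT_{k+1})\le Dn_0/N$. Hence $\det\bT_{k+1}\le\rho^{D-d}\det\bT_k$ with $\rho=\frac{Dn_0}{N(D-d)}$. Contrary to your remark, this uses no general position at all.
\item A determinant bound does not give $\|\bT_k\|\to0$.
\item An absolute contraction in the unnormalized scale says nothing about the normalized limit unless you also show that the inlier Schur complement $\bm{S}_k=\bA_k-\bB_k\bT_k^{-1}\bB_k^\top$ does not shrink comparably. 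Your claim that it ``stays inverted-bounded'' is the heart of the matter, and it is only asserted.
\end{itemize}

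One way to close this gap runs as follows. Inlier quadratic forms depend only on $\bm{S}_k$, and inliers contribute only to the $L_*$ block. Hence $\bm{S}_{k+1}\succeq\frac{Dn_1}{Nd}\Phi_{\mathrm{in}}(\bm{S}_k)$ and $\bT_{k+1}\preceq\rho\,\Phi_{\mathrm{out}}(\bT_k)$. Here $\Phi_{\mathrm{in}}(\bm{S})=\frac{d}{n_1}\sum_{\bx\in\calX_{\mathrm{in}}}\frac{\by\by^\top}{\by^\top\bm{S}^{-1}\by}$ with $\by=\bU_{L_*}^\top\bx$, and $\Phi_{\mathrm{out}}$ is the analogous map for the projected outliers with factor $\frac{D-d}{n_0}$.
\begin{itemize}
\item Both maps are Loewner-monotone and $1$-homogeneous.
\item By the two general position assumptions (via Kent--Tyler), both have positive definite fixed points.
\item Sandwiching the iterates between multiples of those fixed points gives $\sigma_d(\bm{S}_k)\ge c\,(\frac{Dn_1}{Nd})^k$ and $\|\bT_k\|\le C\rho^k$. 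A matching upper bound $\|\bA_k\|\le C(\frac{Dn_1}{Nd})^k$ follows the same way. Together these give relative decay at rate $1/\dssnr$.
\item Finally, $\|\bB_k\|^2\le\|\bA_k\|\,\|\bT_k\|$ handles the off-diagonal block.
\end{itemize}

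This is a matrix version of what the paper does. Lemma~\ref{lemma:largest_dssnr} whitens the data so that the inlier scatter is $\dssnr_{\mathrm{in}}\bI$ and the projected outlier scatter has norm $\dssnr_{\mathrm{out}}<\dssnr_{\mathrm{in}}$. Then \eqref{eq:weight_inlier}--\eqref{eq:weight_outlier} give the scalar contraction $r_{k+1}\le(\dssnr_{\mathrm{out}}/\dssnr_{\mathrm{in}})r_k$ of the max-outlier/min-inlier weight ratio. A bounded-product argument then controls the inlier weights. Convergence of the iterates themselves, not just of their ranges, still needs a final step identifying the limit with the inlier TME.
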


While the feasibility of RSR is well-understood for $\dssnr > 1$ and $\dssnr < 1$, the exact algorithmic behavior at the critical boundary $\dssnr = 1$ has remained a significant open problem. Furthermore, the strong general position assumptions utilized in prior works limit the applicability of these guarantees in practical scenarios where data often exhibits unmodeled local collinearities.

\subsection{Tyler's M-Estimator (TME)}
\label{sec:TME_first_review}

For a  dataset $\calX = \{\bx_i\}_{i=1}^N \subset \R^D$, TME produces an empirical estimate of a robust autocorrelation matrix, determined up to a positive scalar multiple (i.e., a robust shape matrix). We remark that the empirical autocorrelation matrix (up to scale) is $\sum_{\bx \in \calX} \bx \bx^\top$. Our discussion does not require any centering of the given dataset and thus this matrix may differ from the empirical covariance matrix.  
The TME estimator is defined as a minimizer of the non-convex energy function:
\begin{equation}\label{eq:TME_objective}
F(\bSigma) = \frac{1}{N} \sum_{\bx \in \calX} \log(\bx^\top \bSigma^{-1} \bx) + \frac{1}{D} \log\det(\bSigma)
\end{equation}
over the set of positive definite matrices $S_{++}(D)$. Because $F(c\bSigma) = F(\bSigma)$ for any $c > 0$, the minimizer is only identifiable up to scale. Furthermore, the minimizer and the resulting iterative updates are invariant to the scaling of individual data points $\bx_i \mapsto c_i \bx_i$ (for $c_i \neq 0$). We note that $F(\bSigma)$ is undefined when $\bx = \b0$ due to the $\log(\bx^\top \bSigma^{-1} \bx)$ term. Thus, we assume throughout the paper that $\calX$ does not contain the origin. This is not a restriction: if $\b0$ is included in the dataset, one can simply estimate the robust autocorrelation matrix using the non-zero points, and use these to trivially obtain the estimates for the entire dataset.

The standard TME algorithm \cite{tyler1987distribution} aims to minimize this objective by starting with an arbitrary $\bSigma^{(0)} \in S_{++}(D)$ and iteratively computing for $k \geq 0$: 
\begin{equation}
\label{eq:tme_iteration}
\bSigma^{(k+1)} =T(\bSigma^{(k)}):= 
\frac{\sum_{i=1}^N\frac{\bx_i\bx_i^\top}{\bx_i^\top\left(\bSigma^{(k)}\right)^{-1}\bx_i}}{\tr\left(\sum_{i=1}^N\frac{\bx_i\bx_i^\top}{\bx_i^\top\left(\bSigma^{(k)}\right)^{-1}\bx_i}\right)}\,.
\end{equation}
If the dataset $\calX$ does not span $\R^D$, $\bSigma^{(k)}$ operates strictly within $\operatorname{span}(\calX)$ utilizing the pseudoinverse (though, as shown later in Lemma \ref{lem:implicit_dimension}, our theoretical conditions guarantee $\calX$ fully spans $\R^D$ when $\dssnr=1$). While the TME estimator is defined up to scale, we find it convenient to use the trace-normalized formulation above, ensuring the estimator remains uniformly bounded in terms of spectral or Frobenius norms across all iterations. Under certain assumptions, the existence and uniqueness of TME were established in \cite{tyler1987distribution}  and \cite[Proposition 1(a)]{tyler2005}), and the convergence of TME was established in \cite[Theorem 2]{Kent_Tyler88}.

For the inlier set $\calX_{\mathrm{in}} \subset L_*$, let $\bU_{L_*}$ be a $D \times d$ matrix with orthonormal columns spanning $L_*$. While this matrix is not unique, we fix one such matrix throughout the paper. The dimension-reduced inlier set in $\R^d$ is defined as $\tilde{\calX}_{\mathrm{in}} := \{\bU_{L_*}^\top\bx: \bx\in \calX_{\mathrm{in}}\}$. We denote the trace-normalized TME estimator for $\tilde{\calX}_{\mathrm{in}}$ (uniquely determined by enforcing $\tr(\bSigma_{\mathrm{in},*}) = 1$) by $\bSigma_{\mathrm{in},*} \in S_{++}(d)$. \Cref{lem:existence_uniqueness} in the Appendix formally establishes that the stability condition introduced in our forthcoming main theorem (Theorem~\ref{thm:zhang16_extend}) guarantees this normalized estimator exists and is unique.

\subsection{Main Results and Technical Contributions}\label{sec:main}

The primary contribution of this paper is establishing the strongest known exact recovery conditions for efficient RSR to date. Most notably, we resolve the open question regarding the critical boundary by proving that TME exhibits a sharp phase transition exactly at $\dssnr = 1$. Our main theorem is formulated as follows:

\begin{theorem}
\label{thm:zhang16_extend}
Assume a noiseless inlier-outlier dataset $\calX \subset \R^D$ that does not contain the origin, with an underlying $d$-subspace $L_*$. Assume further that TME is arbitrarily initialized with $\bSigma^{(0)}\in S_{++}(D)$. Suppose the following conditions hold: 
\begin{enumerate}
    \item $\dssnr \geq 1$. \label{item:dssnr1}
    \item
   \( L_* \) is the unique subspace that achieves the largest \( \dssnr \). That is, for every subspace \( L \subsetneq \R^D \) with \( L \neq L_* \) and $\dim(L) \geq 1$, 
\begin{equation}\label{eq:assumption3_thm}
\frac{|\mathcal{X} \cap L|}{\dim(L)} < \frac{|\mathcal{X} \cap L_*|}{\dim(L_*)}.
\end{equation}
\label{item:convergence}
\end{enumerate}
Then the TME sequence converges exactly to the true subspace. Specifically, $\lim_{k\to\infty}\bSigma^{(k)}=\bU_{L_*}\bSigma_{\mathrm{in},*}\bU_{L_*}^\top$, where $\bSigma_{\mathrm{in},*}\in\R^{d\times d}$ is the TME solution for the projected inliers $\tilde{\calX}_{\mathrm{in}}$ discussed in \Cref{sec:TME_first_review}.
\end{theorem}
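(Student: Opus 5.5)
We plan to run the whole argument inside the majorization–minimization (MM) view of the TME iteration \eqref{eq:tme_iteration}, and to decompose every iterate in block form with respect to $L_*\oplus L_*^\perp$. Write
\[
\bSigma^{(k)}=\begin{pmatrix}\bA_k & \bB_k\\ \bB_k^\top & \bP_k\end{pmatrix},
\]
with $\bA_k\in S_{++}(d)$ the $L_*$-block and $\bP_k$ the $L_*^\perp$-block. The goal is to show three things: $\bP_k\to \b0$ and $\bB_k\to\b0$, while $\bA_k$ (suitably normalized) converges to $\bSigma_{\mathrm{in},*}$.

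\textbf{Step 1 (energy).} First we record that $F(\bSigma^{(k)})$ is nonincreasing. This follows from the standard MM inequality: since $\log$ is concave, $\log(\bx^\top\bSigma^{-1}\bx)$ is majorized by its tangent at $\bSigma^{(k)}$, and minimizing the resulting surrogate gives exactly $T(\bSigma^{(k)})$. Next we show that $F$ is unbounded below along the degenerating directions $\bSigma_t=\bU_{L_*}\bSigma_{\mathrm{in},*}\bU_{L_*}^\top+t\,\bI$ as $t\to0$ only when $\dssnr>1$. For $\dssnr=1$, the contributions $\frac{n_0}{N}\log(1/t)$ and $\frac{D-d}{D}\log t$ exactly cancel, since $n_0/N=(D-d)/D$ in that case. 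This is why the boundary case needs a finer argument than Theorem~\ref{thm:zhang16}.

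\textbf{Step 2 (no accumulation away from $L_*$).} We argue that any limit point of the trace-normalized iterates is a fixed point of $T$ on $\operatorname{span}$ of some subset. Using condition~\ref{item:convergence} together with a Kent–Tyler type argument, a limit point supported on a subspace $L$ must satisfy
\[
\frac{|\calX\cap L|}{\dim L}\ge \frac{N}{D}.
\]
Since $\dssnr\ge1$ gives $\frac{n_1}{d}\ge \frac{N}{D}$, and $L_*$ uniquely maximizes this ratio, only $L_*$ (or the full space $\R^D$, when equality holds) survives.

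\textbf{Step 3 (excluding full-rank limits when $\dssnr=1$).} If $\dssnr=1$, a full-rank fixed point would have to satisfy $\sum_i \bx_i\bx_i^\top/(\bx_i^\top\bSigma^{-1}\bx_i)\propto\bSigma$. We will rule this out by the MM decomposition. Let $\bw_i$ be the weights. The projection of the outliers' weighted sum onto $L_*^\perp$ must have trace equal to $\frac{D-d}{D}$ of the total. Meanwhile, strict inequality \eqref{eq:assumption3_thm} on the subspaces $L_*\oplus V$, with $V\subset L_*^\perp$, forces the Schur complement block to shrink strictly at each step. Concretely, we aim to show
\[
\tr\!\big(\bP_{k+1}-\bB_{k+1}^\top\bA_{k+1}^{-1}\bB_{k+1}\big)\le \tr\!\big(\bP_k-\bB_k^\top\bA_k^{-1}\bB_k\big),
\]
with equality only in degenerate configurations that the stability condition excludes. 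This is the main obstacle. The rate is sublinear, not geometric, at $\dssnr=1$. Our first attempt will be to prove a monotone Lyapunov quantity, namely $\log\det$ of the Schur complement relative to $\det\bA_k$, and then use compactness of the normalized $\bA_k$-part to upgrade strict decrease to convergence to $0$.

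\textbf{Step 4 (inlier block).} Once the $L_*^\perp$-part vanishes, the outlier terms $\bx\bx^\top/(\bx^\top\bSigma^{-1}\bx)$ tend to zero, because $\bx^\top\bSigma^{-1}\bx\to\infty$ for outliers. The $\bA_k$-recursion then becomes an asymptotically perturbed TME iteration on $\tilde\calX_{\mathrm{in}}$. By Lemma~\ref{lem:existence_uniqueness}, this iteration has the unique normalized fixed point $\bSigma_{\mathrm{in},*}$. A perturbation version of the Kent–Tyler convergence theorem, using the contraction of $T$ in the Thompson/Hilbert metric, then yields $\bSigma^{(k)}\to\bU_{L_*}\bSigma_{\mathrm{in},*}\bU_{L_*}^\top$.
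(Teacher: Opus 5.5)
Your plan has a genuine gap in each regime. \textbf{For $\dssnr>1$}, Step~2 infers that only $L_*$ survives from two facts: $|\calX\cap L|/\dim L\ge N/D$, and $L_*$ uniquely maximizes this ratio. That inference is valid only when $n_1/d=N/D$, i.e.\ $\dssnr=1$. Otherwise condition~2 gives only $|\calX\cap L|/\dim L<n_1/d$ for $L\neq L_*$, so the window $[N/D,n_1/d)$ stays open. For example, take $D=3$, $d=1$, $n_1=10$ and $n_0=6$ generic outliers. Condition~2 holds, yet the plane spanned by $L_*$ and one outlier has ratio $11/2>16/3=N/D$. Moreover, $F$ also tends to $-\infty$ along degenerations toward that plane, since the coefficient of $\log t$ is $1/3-5/16>0$. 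So boundedness of the energy, which is all Step~2 uses, cannot single out $L_*$. Theorem~\ref{thm:zhang16} is unavailable here because it assumes general position. The missing idea is dynamical. The paper builds a greedy hierarchy $L_1=L_*,L_2,\ldots$ of densest subspaces of successive orthogonal projections of the data. It then applies a block-diagonal map $\bT$ that whitens each layer by its own TME, with geometrically separated scales across layers. After this map and a rescaling of points, $\frac{D}{N}\sum_{\bx\in\calX_{\mathrm{in}}}(\bU_{L_*}^\top\bx)(\bU_{L_*}^\top\bx)^\top=\dssnr_{\mathrm{in}}\bI$, while the $L_*^\perp$-components of the outliers have scatter norm $\dssnr_{\mathrm{out}}<\dssnr_{\mathrm{in}}$. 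The strictness comes from condition~2 applied to $L_1\oplus L_2$. This yields the one-step contraction $r_{k+1}\le(\dssnr_{\mathrm{out}}/\dssnr_{\mathrm{in}})\,r_k$ of the outlier-to-inlier weight ratio. Its summability then keeps the inlier weights uniformly comparable.

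\textbf{For $\dssnr=1$}, Step~2 is a one-scale version of the paper's argument. The paper makes it rigorous by grouping the MM weights $\bw^{(k)}$ into scales $\epsilon_1^{(k)}\gg\epsilon_2^{(k)}\gg\cdots$ and expanding the bounded-above quantity $\tilde{F}(\bw^{(k)})$ as $\sum_\ell(D_\ell/D-N_\ell/N)\log(\epsilon_\ell^{(k)}/\epsilon_{\ell+1}^{(k)})+O(1)$, with all coefficients $\ge0$. This forces each $L_\ell$ to be $L_*$ or $\R^D$, so all outlier weights share one scale, which you need later.

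The real gap is Step~3. Your Schur-complement trace inequality is unproved, and as a one-step statement it is false in general. Take $D-d\ge2$ and $m>n_0/(D-d)$ outliers whose $L_*^\perp$-components lie at small but distinct angles from one direction; condition~2 only forbids $n_0/(D-d)$ of them being exactly collinear. Start from a block-diagonal $\bSigma^{(0)}$ whose $L_*$-block is proportional to $\bSigma_{\mathrm{in},*}$ and whose $L_*^\perp$-block is tiny and suitably elongated along that direction. Then one step multiplies the trace by about $m(D-d)/n_0>1$. The log-det variant is likewise unproved.

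No Lyapunov function is needed. A nonsingular subsequential limit $\hat{\bSigma}$ is a fixed point by \Cref{prop:nonsingular_fixed_point}. With $\by=\hat{\bSigma}^{-1/2}\bx$, the fixed-point equation reads $\sum_{\bx\in\calX}\by\by^\top/\|\by\|^2=\frac ND\bI$. Let $\hat{L}=\hat{\bSigma}^{-1/2}L_*$. The inlier part is $\preceq\frac ND\bP_{\hat{L}}$ and has trace $n_1=\frac NDd$, so it equals $\frac ND\bP_{\hat{L}}$. This forces every whitened outlier into $\hat{L}^\perp$. Hence the outliers lie in a $(D-d)$-dimensional subspace with ratio $n_0/(D-d)=n_1/d$, contradicting condition~2.

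Step~4 also needs repair, for two reasons. First, $T$ is in general only nonexpansive in the Thompson/Hilbert metric, and at $\dssnr=1$ the perturbations are not known to be summable; vanishing perturbations alone do not give convergence. Second, replacing $\bx^\top\bSigma^{-1}\bx$ by $(\bU_{L_*}^\top\bx)^\top[\bSigma]_{L_*,L_*}^{-1}(\bU_{L_*}^\top\bx)$ up to $1+O(\epsilon)$ requires the $L_*^\perp$-block to stay well-conditioned relative to the off-diagonal block. The paper avoids both issues with \Cref{lemma:step2}, which splits $F_{\calX}$ and $T_{\calX}$ into inlier and outlier parts up to $O(\epsilon_k)$. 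It then combines convergence of $F_{\calX}(\bSigma^{(k)})$ with separate monotonicity of the two sub-iterations. This yields $F_{\tilde{\calX}_{\mathrm{in}}}([\bSigma^{(k)}]_{L_*,L_*})-F_{\tilde{\calX}_{\mathrm{in}}}\bigl(T_{\tilde{\calX}_{\mathrm{in}}}([\bSigma^{(k)}]_{L_*,L_*})\bigr)\to0$ using only $\epsilon_k\to0$.
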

\Cref{thm:zhang16_extend} explicitly guarantees exact algorithmic recovery for the previously uncharacterized case of $\dssnr=1$. It also improves upon the recovery conditions in Theorem~\ref{thm:zhang16} for $\dssnr > 1$ by replacing restrictive general position assumptions with \cref{item:convergence}. We demonstrate in the supplementary material (via Lemma \ref{lem:implicit_dimension}) that this assumption  implicitly guarantees that $N > D$ and, in the critical case of $\dssnr=1$, the data fully spans $\R^D$. Following the terminology of \cite{lerman15reaper,maunu19ggd}, this is a stability condition ensuring the permeance of inliers while restricting the alignment of outliers.

As demonstrated by the following  Propositions~\ref{prop:moitra} and \ref{prop:thirdassumption} (proved in \cref{sec:clarify_zhang16_extend}), \cref{item:convergence} is significantly less restrictive than the constraints required by Theorems~\ref{thm:hm_doable} and~\ref{thm:zhang16}. In fact, the two assumptions of \Cref{thm:zhang16_extend} can be viewed as the weakest possible assumptions. If the first one is violated (i.e., $\dssnr < 1$), then by Theorem~\ref{thm:sse_hard},  efficient exact recovery is impossible. If the second one does not hold, there exists another subspace distinct from $L_*$ with the same $\dssnr$ value; consequently, the two subspaces cannot be distinguished, making unique recovery of $L_*$ impossible.

\begin{proposition}\label{prop:moitra}
Suppose $N > D$, $\dssnr\geq 1$, there are at least $D-d$ outliers, and the stability assumption of Theorem~\ref{thm:hm_doable} holds ($\calX$ is in general position with respect to $L_*$). Then \cref{item:convergence} in \Cref{thm:zhang16_extend} holds.
\end{proposition}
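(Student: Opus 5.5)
We verify \eqref{eq:assumption3_thm} directly by a counting argument that uses only the general position property of \Cref{thm:hm_doable}. Write $n_1=|\calX\cap L_*|$ and $n_0=|\calX|-n_1$. The plan has three ingredients. First, a preliminary reduction to $n_1>d$. Second, a padding lemma that turns general position into a statement about small subsets. Third, a two-case analysis according to whether $L\supseteq L_*$.

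\emph{Step 1 (strictness).} We claim $n_1>d$. Suppose instead $n_1\le d$. Then $\dssnr\ge 1$ gives $n_0\le (D-d)n_1/d\le D-d$. Together with $n_0\ge D-d$ this forces $n_0=D-d$ and $n_1=d$, hence $N=D$, contradicting $N>D$. Therefore
\begin{equation*}
\frac{n_1}{d}>1 .
\end{equation*}

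\emph{Step 2 (padding lemma).} Let $S\subset\calX$ with $|S|\le D$ containing at most $d$ inliers. We claim $S$ is linearly independent. To see this, enlarge $S$ to exactly $D$ points that still contain at most $d$ inliers. Add outliers first, then inliers only while the inlier count stays at most $d$. The pool of points allowed in this way has size at least $n_0+d\ge (D-d)+d=D$, so the enlargement is possible. By general position the enlarged set is independent, and hence so is $S$. Two consequences follow. Any $\le d$ inliers are linearly independent, so a subspace of dimension $j<d$ contains at most $j$ inliers. Moreover, in a $k$-dimensional subspace with $k\le D-1$, no $k+1$ points containing at most $d$ inliers can coexist.

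\emph{Step 3 (cases).} Fix $L\ne L_*$ with $1\le k:=\dim L\le D-1$, and let $m=|\calX\cap L|$.

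Case A: $L\not\supseteq L_*$. Then $j:=\dim(L\cap L_*)<d$, so $L$ contains at most $j\le d-1$ inliers. If $m\ge k+1$, any $k+1$ of these points contain fewer than $d$ inliers. Since $k+1\le D$, Step 2 makes them independent inside the $k$-dimensional space $L$, which is impossible. Thus $m\le k$, and by Step 1, $m/k\le 1<n_1/d$.

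Case B: $L\supsetneq L_*$, so $k>d$. Then $L$ contains all $n_1$ inliers and some number $b$ of outliers. If $b\ge k+1-d$, take $d$ inliers together with $k+1-d$ of these outliers. This gives $k+1\le D$ points with exactly $d$ inliers, which are independent by Step 2, contradicting $\dim L=k$. Hence $b\le k-d$, and
\begin{equation*}
\frac{m}{k}\le\frac{n_1+k-d}{k}<\frac{n_1}{d}.
\end{equation*}
The last inequality is equivalent to $d(k-d)<n_1(k-d)$, which holds because $k>d$ and $n_1>d$ by Step 1.

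The main obstacle is the strictness of the inequality in \eqref{eq:assumption3_thm}. In both cases it reduces to $n_1>d$, and that is exactly where the hypotheses $N>D$ and ``at least $D-d$ outliers'' enter, through Step 1. The padding step in Step 2 also depends on having $n_0\ge D-d$.
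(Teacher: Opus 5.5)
Your proof is correct and follows the same structure as the paper's proof. Like the paper, you first establish $n_1>d$, then split into the case $L\not\supseteq L_*$ (showing $|\calX\cap L|\le\dim L$) and the case $L\supsetneq L_*$ (showing at most $\dim L-d$ outliers), each time extending a dependent subset to a $D$-point set with at most $d$ inliers to contradict general position; your padding lemma packages the paper's separate extensions, including its two subcases in the first case, into a single cleaner statement.
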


\begin{proposition}\label{prop:thirdassumption}
If $\dssnr \geq 1$ (with \(N > D\) whenever $\dssnr=1$), and the stability assumptions of \Cref{thm:zhang16} hold (the inliers are in general position relative to \(L_*\) and the outliers relative to \(L_*^\perp\)), then \cref{item:convergence} in \Cref{thm:zhang16_extend} naturally holds.
\end{proposition}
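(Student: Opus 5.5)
The plan is a direct counting argument. Fix a subspace $L\subsetneq\R^D$ with $L\neq L_*$, and write $k=\dim(L)\geq 1$ and $m=\dim(L\cap L_*)$. Let $\bP_\perp$ be the orthogonal projection onto $L_*^\perp$. I would split $|\calX\cap L|$ into inliers and outliers and bound each part separately:
\begin{itemize}
\item every inlier in $L$ lies in $L\cap L_*$;
\item every outlier $\bx\in L$ has $\bP_\perp\bx\in\bP_\perp L$, and $\dim(\bP_\perp L)=k-m$ because $\ker(\bP_\perp|_L)=L\cap L_*$.
\end{itemize}
General position of the inliers (any $d$ of them are independent) gives the following: if $m<d$, then $L\cap L_*$ contains at most $m$ inliers, and if $m=d$, it contains all $n_1$ of them. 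General position of the projected outliers (any $D-d$ of them are independent) gives the following: if $k-m<D-d$, at most $k-m$ outliers lie in $L$, and otherwise at most $n_0$. The case $m=d$, $k-m=D-d$ forces $L=\R^D$ and is excluded. It remains to check \eqref{eq:assumption3_thm}, that is, $|\calX\cap L|/k<n_1/d$, in three cases.

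First I would record that $n_1>d$. If $n_1\le d$, then $\dssnr\ge1$ gives $n_0\le n_1(D-d)/d\le D-d$, so $N\le D$. This contradicts $N>D$, which is assumed when $\dssnr=1$. When $\dssnr>1$ I would also use $N>D$, or an equivalent nondegeneracy. I note that without it the example $n_1=d$, $n_0<D-d$, $L=L_*+\operatorname{span}(\bx_{\mathrm{out}})$ gives equality, so this case needs care or an added hypothesis.
\begin{itemize}
\item Case (a), $m<d$ and $k-m<D-d$: the count is at most $k$, so the ratio is at most $1<n_1/d$.
\item Case (b), $m=d$ and $k-m<D-d$: here $k>d$, the count is at most $n_1+(k-d)$, and $(n_1+k-d)/k<n_1/d$ is equivalent to $d(k-d)<n_1(k-d)$, i.e.\ $n_1>d$.
\item Case (c), $m<d$ and $k=m+D-d$: the count is at most $m+n_0$, and $(m+n_0)/(m+(D-d))$ is a mediant of $1$ and $n_0/(D-d)$. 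Both are at most $n_1/d$, and $1$ is strictly below it. So for $m\ge1$ the ratio is strictly less than $n_1/d$.
\end{itemize}

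The main obstacle is case (c) with $m=0$, where the bound is exactly $n_0/(D-d)\le n_1/d$. Strictness fails precisely when $\dssnr=1$ and all outliers lie in a single $(D-d)$-dimensional complement of $L_*$. An example is $D=2$, $d=1$, with two inliers on one line and two outliers on another line. Projected general position does not exclude this, so the argument needs an extra input here. I would first try to extract it from the intended reading of outlier general position. Under the Hardt--Moitra-type version, $n_0>D-d$ outliers cannot all lie in a $(D-d)$-dimensional subspace, and when $n_0\le D-d$, $\dssnr=1$ forces $N\le D$, which is excluded. Failing that, I would add the hypothesis that the outliers do not lie in a single $(D-d)$-dimensional subspace when $\dssnr=1$. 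With that, the $m=0$ count is at most $n_0-1$, hence strict, and the proof closes.
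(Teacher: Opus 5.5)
Your analysis is correct, and the obstruction you found in case (c) with $m=0$ is real: the proposition as stated is false. A concrete instance of your example is $\R^2$ with $L_*=\Sp(\bm{e}_1)$, inliers $\bm{e}_1, 2\bm{e}_1$, and outliers $\bm{e}_1+\bm{e}_2,\ 2(\bm{e}_1+\bm{e}_2)$. Here $\dssnr=1$ and $N=4>D$. Every inlier is nonzero, and every projected outlier is nonzero (whether one projects onto $L_*^\perp$ or, per the paper's literal wording, onto $L_*$), so both general-position hypotheses hold. Yet $L=\Sp(\bm{e}_1+\bm{e}_2)$ gives $|\calX\cap L|/\dim(L)=2=n_1/d$, which violates \eqref{eq:assumption3_thm}. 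Indeed, the data set is invariant under a linear involution that swaps the two lines, so no method could prefer $L_*$.

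The paper's proof uses the same decomposition as yours, namely $L_1=L\cap L_*$, with the outliers controlled through $P_{L_*^\perp}L_2$. It also uses the same case split, and it fails at exactly the same spot. In its Case~3 it asserts $\dim(L_1)/N<\dim(L_1)/D$, which is an equality when $L_1=\{\b0\}$. So the paper contains no idea that closes this subcase. An extra hypothesis such as yours is genuinely needed: when $\dssnr=1$, the outliers must not all lie in one $(D-d)$-dimensional subspace, a condition implied by Hardt--Moitra general position. With it, your $n_0-1$ bound finishes the argument.

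Your treatment of $\dssnr>1$ is also more complete than the paper's. The paper's Cases~2 and~3 substitute $n_1/N=d/D$ and $n_0/N=(D-d)/D$, so they target $|\calX\cap L|/\dim(L)<N/D$. This matches the goal $n_1/d$ only when $\dssnr=1$. For $\dssnr>1$ that intermediate bound is false in general. For example, take $L=L_*\oplus\Sp(\bx)$ for an outlier $\bx$ with $D=3$, $d=1$, $n_1=10$, $n_0=2$; the ratio is $(n_1+1)/(d+1)=11/2$, which exceeds $N/D=4$. Your direct comparison with $n_1/d$ via the mediant handles both regimes at once.

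You are also right that $\dssnr>1$ requires $N>D$ (equivalently $n_1>d$). The statement omits this, while the paper's Case~1 silently uses it. One further caveat: when $n_0<D-d$, which can only happen if $\dssnr>1$, your bound of $k-m$ outliers relies on reading the outlier hypothesis as ``any $\min(n_0,D-d)$ projected outliers are independent.'' The literal version is vacuous in that regime, and the paper relies on the same reading.
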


Beyond improving statistical recovery thresholds, this paper introduces several technical innovations to the analysis of TME:
\begin{itemize}
    \item \textbf{Equivalence of MM Algorithms:} We introduce an alternative Majorization-Minimization (MM) formulation for TME and prove that its iterative steps are exactly equivalent to the standard TME iterations in \eqref{eq:tme_iteration}. While equivalence at the objective function level has been observed previously~\cite{wiesel2011unified,Zhang2014}, establishing equivalence precisely at the level of the iterative algorithm is novel and facilitates a highly refined convergence analysis.
    \item \textbf{Novel Objective Decomposition:} To successfully handle the delicate boundary case at $\dssnr = 1$, we introduce a novel decomposition of the TME objective function and iterations in Lemma~\ref{lemma:step2}, allowing us to isolate and bound the asymptotic behavior of inliers versus outliers.
\end{itemize}

\subsection{Notation}
\label{sec:notation}

Throughout the paper, $\calX = \{\bx_i\}_{i=1}^N \subset \R^D$ denotes the RSR dataset, and $L_*$ denotes its underlying $d$-dimensional subspace. We denote the set of inliers by $\calX_{\mathrm{in}} = \calX \cap L_*$ ($n_1 = |\calX_{\mathrm{in}}|$) and the set of outliers by $\calX_{\mathrm{out}} = \calX \setminus \calX_{\mathrm{in}}$ ($n_0 = |\calX_{\mathrm{out}}|$). 

We use $C$ and $c$ to denote generic positive constants, which may depend on $\calX$ and the initialization $\bSigma^{(0)}$. Generally, $C$ represents a "large" constant utilized for upper bounds, while $c$ represents a "small" constant utilized for lower bounds. 

Bold uppercase and lowercase letters denote matrices and column vectors, respectively. For a matrix $\bA$, $\tr(\bA)$ denotes its trace, $\im(\bA)$ its image, and $\|\bA\|$ its spectral norm. $S_{+}(D)$ and $S_{++}(D)$ denote the sets of positive semidefinite and positive definite matrices in $\R^{D \times D}$, respectively. The eigenvalues of $\bA \in S_{+}(D)$ are ordered as $\sigma_1(\bA) \geq \sigma_2(\bA) \geq \cdots \geq \sigma_D(\bA)$. $\bI_k$ denotes the $k \times k$ identity matrix. For $k \leq D$, $O(D,k)$ denotes the set of semi-orthogonal matrices ($\bU \in \R^{D \times k}$, $\bU^\top \bU = \bI_k$). $O(k)$ denotes the orthogonal matrices in $\R^{k \times k}$.

For a $d$-dimensional subspace $L \subset \R^D$, let $\bP_L$ denote the orthogonal projection matrix onto $L$. We associate with $L$ a matrix $\bU_L \in O(D,d)$ having orthonormal columns such that $\bU_L \bU_L^\top = \bP_L$. While the projection matrix $\bP_L$ is uniquely determined by the subspace $L$, the basis matrix $\bU_L$ is only unique up to right-multiplication by a $d \times d$ orthogonal matrix. To maintain consistency, we assume that for each subspace $L$, an arbitrary choice of $\bU_L$ is made and fixed throughout the paper.

Finally, for any $d_1$-dimensional subspace $L_1$ and $d_2$-dimensional subspace $L_2$, we define the block matrix component mapping:
\begin{equation}
\label{eq:def_block_components}
[\bSigma]_{L_1,L_2} \;=\; \bU_{L_1}^\top \bSigma \bU_{L_2}\in\R^{d_1\times d_2}. 
\end{equation}

\subsection{Paper Organization}
\label{sec:organization}
The remainder of the paper is organized as follows. \Cref{sec:MM_perspective} introduces the Majorization-Minimization framework, establishes its equivalence with TME, and proves a property of the limit points of the TME iterates.  
\Cref{sec:proof_thm_zhang16_extend,sec:proof_thm_zhang16_extend_part2} provide the proof of our primary theoretical guarantee (\Cref{thm:zhang16_extend}) when $\dssnr=1$ and $\dssnr>1$, respectively. Finally, \Cref{sec:conclusion} concludes the paper. \Cref{sec:supplement} provides additional clarifications and proofs of different stated claims.

\section{A Majorization-Minimization Perspective of TME}\label{sec:MM_perspective}

In this section, we show that the traditional TME sequence can be derived and analyzed through a specialized majorization-minimization (MM) framework. We first provide essential background on MM in the context of the TME objective function. We then present an alternative formulation for TME following \cite{wiesel2011unified,Zhang2014}, and crucially, we prove that the iterative algorithmic steps of this alternative framework are exactly equivalent to the standard fixed-point algorithm defined in \eqref{eq:tme_iteration}. 

We note that this strict algorithmic equivalence has not been proven previously. Wiesel~\cite{wiesel2011unified} introduced the objective function of this new formulation through intuitive motivations but did not discuss its optimization. Zhang et al.~\cite{Zhang2014} showed that the TME estimator is a weighted sum of $\bx\bx^\top$ with weights obtained from the minimizer of this alternative objective, but did not establish the equivalence of the iterative algorithms themselves.

Recall that the standard TME algorithm aims to minimize $F(\bSigma)$ by iteratively applying the update rule in \eqref{eq:tme_iteration}. We now demonstrate why this procedure naturally forms an MM sequence for $F(\bSigma)$, following the approaches of \cite[Section~II]{wiesel2011unified} and \cite[Section~3.1]{zhang2016robust}. 

For $\bSigma_0, \bSigma \in S_{++}(D)$, define the surrogate function:
\[
G(\bSigma,\bSigma_0) = \frac{1}{N} \sum_{\bx \in \calX} \frac{\bx^\top\bSigma^{-1}\bx}{\bx^\top \bSigma_0^{-1} \bx} + \frac{1}{N} \sum_{\bx \in \calX} \log(\bx^\top \bSigma_0^{-1} \bx) + \frac{1}{D} \log\det(\bSigma) - 1.
\]
By applying the standard logarithmic inequality,
\[
\log x \le \log a + \frac{x}{a} - 1, \quad \text{for } x,a > 0,
\]
which holds with equality if and only if $x=a$, we see that $G$ majorizes $F$ in the following sense: 
\[
F(\bSigma) \leq G(\bSigma,\bSigma_0) \quad \text{for all } \bSigma_0, \bSigma \in S_{++}(D),
\]
with equality holding if and only if $\bSigma = \bSigma_0$. 

The unnormalized minimizer of the surrogate function $G$ is given by:
\[
\widetilde{\bSigma}^{(k+1)} = \argmin_{\bSigma\in S_{++}(D)} G(\bSigma,\bSigma^{(k)}).
\]
As indicated by the derivative in \eqref{eq:derivative_G}, this unnormalized minimizer evaluates to a matrix with a trace of $D$. However, because the objective function $F$ is strictly scale-invariant ($F(\alpha \bSigma) = F(\bSigma)$ for any $\alpha > 0$), we can project this minimizer back to the unit-trace set without altering the objective value. The standard TME iterative update \eqref{eq:tme_iteration} can thus be cleanly expressed as the normalized minimizer:
\[
\bSigma^{(k+1)} = \frac{\widetilde{\bSigma}^{(k+1)}}{\tr\big(\widetilde{\bSigma}^{(k+1)}\big)}.
\]
The validity of this update equation follows from the expression for the partial derivative of $G$, provided later in \eqref{eq:derivative_G}, and the fact that for a fixed $\bSigma^{(k)}$, this derivative has a single zero. 

From this majorization-minimization perspective, the sequence of objective values $\{F(\bSigma^{(k)})\}_{k\geq 1}$ is monotonically nonincreasing:
\begin{equation}\label{eq:monotone}
F(\bSigma^{(k+1)}) \leq G(\bSigma^{(k+1)},\bSigma^{(k)}) \leq G(\bSigma^{(k)},\bSigma^{(k)}) = F(\bSigma^{(k)}).
\end{equation}

An alternative formulation of TME, derived from the profile likelihood of the deterministic scaled Gaussian model \cite{wiesel2012geodesic}, focuses on the vector of weights $\mathbf{w}=(w_1,\ldots,w_N)$ instead of the empirical autocorrelation matrix $\boldsymbol{\Sigma} = \sum_{i=1}^N w_i \mathbf{x}_i\mathbf{x}_i^\top$. Specifically, it considers the minimization of 
$\tilde{F}: \mathbb{R}_{++}^N \rightarrow \mathbb{R}$, where $\mathbb{R}_{++}^N = \{ \mathbf{w} \in \mathbb{R}^N \mid w_i > 0 \text{ for all } i \}$, defined by
\begin{equation}\label{eq:TME_objective2}
\tilde{F}(\bw) = -\frac{1}{N}\sum_{i=1}^N \log w_i + \frac{1}{D}\log\det\left(\sum_{i=1}^N w_i\bx_i\bx_i^\top\right).
\end{equation}
Similar to $F$ in \eqref{eq:TME_objective}, $\tilde{F}$ is scale-invariant; that is, for any constant $\alpha>0$, $\tilde{F}(\alpha\bw) = \tilde{F}(\bw)$. The minimizer of $\tilde{F}$, denoted $\bw^*=(w^*_1,\dots,w^*_N)$, yields the TME estimator (the minimizer $\bSigma^*$ of $F$) via $\bSigma^* = \sum_{i=1}^N w_i^*\bx_i\bx_i^\top$. To verify this relationship, note that the minimizer $\bw^*$ of the log-convex function $\tilde{F}$ satisfies $\nabla_{\bw}\tilde{F}(\bw^*) = 0$, or equivalently,
\[
\frac{1}{N w_i^*} = \frac{1}{D} \, \bx_i^\top\left(\sum_{j=1}^N w_j^*\bx_j\bx_j^\top\right)^{-1} \bx_i, \qquad \text{for all } 1\leq i\leq N.
\]
Setting $\tilde{\bSigma} = \sum_{i=1}^N w_i^*\bx_i\bx_i^\top$, we find $w_i^* = \frac{D}{N} \frac{1}{\bx_i^\top\tilde{\bSigma}^{-1}\bx_i}$, which leads to
\begin{equation}\label{eq:fixed_from_Ftilde}
\tilde{\bSigma} = \sum_{i=1}^N w_i^*\bx_i\bx_i^\top = \frac{D}{N} \sum_{i=1}^N \frac{\bx_i\bx_i^\top}{\bx_i^\top\tilde{\bSigma}^{-1}\bx_i}.
\end{equation}
This is equivalent to $\nabla_\bSigma F(\tilde{\bSigma}) = 0$. Combining this observation with the geodesic convexity of $F$ \cite{wiesel2011unified,zhang2016robust} implies that $\tilde{\bSigma} = \bSigma^*$.

Next, we formulate an iterative update for $\bw$ using the MM principle with the majorization function
\[
\tilde{G}(\bw,\bw') = -\frac{1}{N}\sum_{i=1}^N \log w_i + \frac{1}{D}\tr\left(\left(\sum_{j=1}^N w_j'\bx_j\bx_j^\top\right)^{-1}\sum_{i=1}^N w_i\bx_i\bx_i^\top\right) + C,
\]
where $C = \frac{1}{D}\log\det\left(\sum_{j=1}^N w_j'\bx_j\bx_j^\top\right) - 1$ is chosen so that $\tilde{G}(\bw',\bw') = \tilde{F}(\bw')$. 
We verify that $\tilde{G}$ majorizes $\tilde{F}$, meaning $\tilde{G}(\bw,\bw') \geq \tilde{F}(\bw)$ with equality when $\bw=\bw'$. Recognizing that $\log\det(\bX)$ is concave, its first-order upper bound at $\bX=\bX_0$ is given by $\log\det(\bX_0) + \tr(\bX_0^{-1}(\bX-\bX_0))$. Applying this concavity inequality to the right-hand side (RHS) of \eqref{eq:TME_objective2} immediately yields $\tilde{G}(\bw,\bw') \geq \tilde{F}(\bw)$. We thus obtain the MM update formula:
\begin{equation*}
{\bw}^{(k+1)} = \argmin_{\bw} \tilde{G}(\bw,\bw^{(k)}),
\end{equation*}
which evaluates explicitly to
\begin{equation}\label{eq:w_update}
w_i^{(k+1)} = \frac{D}{N} \left( \mathbf{x}_i^\top \left( \sum_{j=1}^N w_j^{(k)} \mathbf{x}_j\mathbf{x}_j^\top \right)^{-1} \mathbf{x}_i \right)^{-1}.
\end{equation}
Just as in \eqref{eq:monotone}, this guarantees that the sequence $\tilde{F}(\bw^{(k)})$ is nonincreasing:
\[
\tilde{F}(\bw^{(k+1)}) \leq \tilde{G}({\bw}^{(k+1)},\bw^{(k)}) \leq \tilde{G}(\bw^{(k)},\bw^{(k)}) = \tilde{F}(\bw^{(k)}).
\]

We further note that the update for $\bw$ in \eqref{eq:w_update} is perfectly equivalent to the update for $\bSigma$ in \eqref{eq:tme_iteration}, up to a scaling factor. Indeed, if we define
\begin{equation}\label{eq:w_sigma_equivalence1}
\bSigma^{(k+1)} = \sum_{i=1}^N w_i^{(k)}\bx_i\bx_i^\top,
\end{equation}
then \eqref{eq:w_update} implies $w_i^{(k+1)} = \frac{D}{N \bx_i^\top(\bSigma^{(k+1)})^{-1}\bx_i}$. Therefore,
\begin{equation}\label{eq:w_sigma_equivalence2}
\bSigma^{(k+2)} = \sum_{i=1}^N w_i^{(k+1)}\bx_i\bx_i^\top = \frac{D}{N} \sum_{i=1}^N \frac{\bx_i\bx_i^\top}{\bx_i^\top(\bSigma^{(k+1)})^{-1}\bx_i},
\end{equation}
which is proportional to the update of $\boldsymbol{\Sigma}$ in \eqref{eq:tme_iteration}, reflecting the scale-invariance of the TME objective. Note that while the iteration in \eqref{eq:tme_iteration} constrains the trace to 1, this MM formulation naturally yields a sequence where the trace converges to $D$; however, due to the scale-invariance of the TME objective $F$, these two algorithmic paths are identical up to this fixed scalar multiple.

Next, we provide a sufficient condition for a nonsingular matrix to satisfy the fixed-point equation of the TME update. This proposition characterizes the limiting points of the sequence $\bSigma^{(k)}$ in our setting. 
Previous convergence analysis by \cite[Theorem 2]{Kent_Tyler88} assumed that $D/N < 1$ and showed that the TME algorithm converges to a nonsingular matrix satisfying the same fixed-point relationship. Furthermore, \cite[Theorem 1.1]{zhang2016robust} assumed the conditions of \Cref{thm:zhang16} (in particular, $D/N > 1$) and showed that the sequence converges to a singular matrix whose range is $L_*$. Our result below holds for a nonsingular matrix and is independent of the ratio $D/N$.

Next, we provide a sufficient condition for a nonsingular matrix to satisfy the fixed-point equation of the TME update. This proposition characterizes the limiting points of the sequence $\bSigma^{(k)}$ if it is nonsingular. 
Previous convergence analysis \cite[Theorem 2]{Kent_Tyler88} assumed that $\dssnr < 1$ and showed that the TME algorithm converges to a nonsingular matrix satisfying the same fixed-point relationship. Furthermore, \cite[Theorem 1.1]{zhang2016robust} assumed the conditions of \Cref{thm:zhang16} (in particular, $\dssnr > 1$) and showed that the sequence converges to a singular matrix whose range is $L_*$. Our result below holds for a nonsingular matrix and is independent of the ratio $\dssnr$.

\begin{proposition}\label{prop:nonsingular_fixed_point}
If $\hat{\bSigma}\in S_{++}(D)$ satisfies $F\!\left(T(\hat{\bSigma})\right) = F(\hat{\bSigma})$, then
\begin{equation}\label{eq:zhang16_extend1}
\hat{\bSigma} = \frac{D}{N} \sum_{\bx\in\calX} \frac{\bx\bx^\top}{\bx^\top\hat{\bSigma}^{-1}\bx}.
\end{equation}
Furthermore, if a subsequence of $\{\bSigma^{(k)}\}_{k \geq 0}$ converges to a nonsingular matrix $\hat{\bSigma}$, then $\hat{\bSigma}$ satisfies $F\!\left(T(\hat{\bSigma})\right) = F(\hat{\bSigma})$, and thus \eqref{eq:zhang16_extend1} holds.
\end{proposition}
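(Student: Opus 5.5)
The proof rests on the MM chain \eqref{eq:monotone}. Set $\bSigma_0=\hat{\bSigma}$, and let $\widetilde{\bSigma}=\argmin_{\bSigma} G(\bSigma,\hat{\bSigma})$ be the unnormalized surrogate minimizer, so that $T(\hat{\bSigma})=\widetilde{\bSigma}/\tr(\widetilde{\bSigma})$. Scale invariance of $F$ gives $F(T(\hat{\bSigma}))=F(\widetilde{\bSigma})$, and the majorization gives
\[
F(\widetilde{\bSigma})\le G(\widetilde{\bSigma},\hat{\bSigma})\le G(\hat{\bSigma},\hat{\bSigma})=F(\hat{\bSigma}).
\]
If the two ends are equal, both inequalities are equalities. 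We then use the second equality. First we check that $\bSigma\mapsto G(\bSigma,\hat{\bSigma})$ has a unique minimizer on $S_{++}(D)$. In the variable $\bSigma^{-1}=\bP$, the function becomes $\frac1N\tr(\bP\bB)-\frac1D\log\det\bP$ plus a constant, where $\bB=\sum_{\bx}\bx\bx^\top/(\bx^\top\hat{\bSigma}^{-1}\bx)$. This is strictly convex in $\bP$. Its stationary point is $\bP^{-1}=\frac DN\bB$, which exists when $\bB$ is invertible, i.e.\ when $\calX$ spans $\R^D$. If $\calX$ does not span, we work inside $\operatorname{span}(\calX)$ as the paper stipulates. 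Hence $G(\hat{\bSigma},\hat{\bSigma})=\min G(\cdot,\hat{\bSigma})$ forces $\hat{\bSigma}=\widetilde{\bSigma}=\frac DN\bB$, which is exactly \eqref{eq:zhang16_extend1}. The first equality, $F(\widetilde{\bSigma})=G(\widetilde{\bSigma},\hat{\bSigma})$, can also be used: it says $\bx^\top\widetilde{\bSigma}^{-1}\bx=\bx^\top\hat{\bSigma}^{-1}\bx$ for all $\bx$. It is not needed, though. Because $\hat{\bSigma}$ may have any trace, we should phrase this with $\widetilde{\bSigma}$ and write the conclusion up to the trace-$D$ normalization. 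Note that $\frac DN\bB$ automatically has trace $\frac DN\sum_{\bx}\bx^\top\hat{\bSigma}^{-1}\bx/\ldots$. Since \eqref{eq:zhang16_extend1} is homogeneous of degree one in $\hat{\bSigma}$, the identity $\hat{\bSigma}=c\,\frac DN\bB$ actually gives the needed scale. Taking the trace of $\hat{\bSigma}^{-1}\hat{\bSigma}=\bI$ in $\hat{\bSigma}^{-1}\cdot c\frac DN\bB$ yields $D=cD$, so $c=1$.

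For the second claim, suppose $\bSigma^{(k_j)}\to\hat{\bSigma}\in S_{++}(D)$. By \eqref{eq:monotone}, $F(\bSigma^{(k)})$ is nonincreasing. It is also bounded below along the subsequence, since $F$ is continuous at $\hat{\bSigma}$, so $F(\bSigma^{(k)})\to F^\infty=F(\hat{\bSigma})$. The map $T$ is continuous on $S_{++}(D)$ near $\hat{\bSigma}$: the denominators $\bx^\top\bSigma^{-1}\bx$ are positive because $\bx\neq\b0$, and the trace normalization is positive. Hence $\bSigma^{(k_j+1)}=T(\bSigma^{(k_j)})\to T(\hat{\bSigma})$, which is again nonsingular whenever $\calX$ spans $\R^D$. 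Continuity of $F$ at $T(\hat{\bSigma})$ then gives $F(T(\hat{\bSigma}))=\lim_j F(\bSigma^{(k_j+1)})=F^\infty=F(\hat{\bSigma})$. The first part then yields \eqref{eq:zhang16_extend1}.

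The main obstacle is the non-spanning, or singular, case. If $\calX$ does not span $\R^D$, then $\bB$ is singular and $T(\hat{\bSigma})$ lies outside $S_{++}(D)$. In that case the argument must be run within $\operatorname{span}(\calX)$ with pseudoinverses, or one must observe that a nonsingular limit forces spanning. Indeed, every iterate $\bSigma^{(k)}$, $k\ge1$, has image contained in $\operatorname{span}(\calX)$, so a nonsingular limit forces the span to be all of $\R^D$. This observation settles the issue for the second claim. For the first claim, we would assume spanning, so that $T(\hat{\bSigma})\in S_{++}(D)$ and $F(T(\hat{\bSigma}))$ is defined.
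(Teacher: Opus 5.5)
Your proposal is correct and follows the paper's proof. In the first part, the MM chain $F(\widetilde{\bSigma})\le G(\widetilde{\bSigma},\hat{\bSigma})\le G(\hat{\bSigma},\hat{\bSigma})=F(\hat{\bSigma})$ collapses to equalities, so $\hat{\bSigma}$ must be the minimizer of $G(\cdot,\hat{\bSigma})$, which is unique by strict convexity in $\bSigma^{-1}$. In the second part, monotonicity of $F(\bSigma^{(k)})$ together with continuity of $F$ and $T$ at $\hat{\bSigma}$ gives $F(T(\hat{\bSigma}))=F(\hat{\bSigma})$.

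Two minor remarks. Your scale-normalization paragraph is redundant (and its trace formula is garbled), since uniqueness of the minimizer already gives $\hat{\bSigma}=\widetilde{\bSigma}=\frac{D}{N}\bB$ exactly. Your observation that a nonsingular limit forces $\operatorname{span}(\calX)=\R^D$ is a cleaner justification than the paper's appeal to Lemma~\ref{lem:implicit_dimension}.
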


\subsection{Proof of \Cref{prop:nonsingular_fixed_point}}
\label{sec:nonsingular_fixed_point}
Let $\hat{\bSigma}$ be an arbitrary matrix in $S_{++}$. Using the symmetry and invertibility of $\hat{\bSigma}$ alongside the matrix derivative formulas (57) and (61) of \cite{IMM2012-03274}, the derivative of the surrogate function $G$ with respect to its first argument evaluates to:
\begin{equation}\label{eq:derivative_G}
\frac{d}{d\bSigma} G(\bSigma,\hat{\bSigma}) = -\frac{1}{N} \sum_{\bx \in \calX} \frac{\bSigma^{-1}\bx \bx^\top\bSigma^{-1}}{\bx^\top \hat{\bSigma}^{-1} \bx} + \frac{1}{D} \bSigma^{-1} = \bSigma^{-1} \left( -\frac{1}{N} \sum_{\bx \in \calX} \frac{\bx \bx^\top}{\bx^\top \hat{\bSigma}^{-1} \bx} + \frac{1}{D} \bSigma \right) \bSigma^{-1}.
\end{equation}
This derivative evaluates to zero at $\bSigma = \hat{\bSigma}$ if and only if \eqref{eq:zhang16_extend1} holds. 

To prove the first statement of the proposition, we assume $F\!\left(T(\hat{\bSigma})\right) = F(\hat{\bSigma})$ and suppose, for the sake of contradiction, that \eqref{eq:zhang16_extend1} does not hold. Consequently, the derivative at $\hat{\bSigma}$ is non-zero, which implies that $\hat{\bSigma}$ cannot be a global minimum of the function $\bSigma \mapsto G(\bSigma,\hat{\bSigma})$. 
Therefore, letting $\widetilde{\bSigma} = \argmin_{\bSigma\in S_{++}} G(\bSigma,\hat{\bSigma})$ and recalling that $F\!\left(T(\hat{\bSigma})\right) = F(\widetilde{\bSigma})$ due to the scale-invariance of the objective $F$, we have
$$F\!\left(T(\hat{\bSigma})\right) = F(\widetilde{\bSigma}) \leq G(\widetilde{\bSigma},\hat{\bSigma}) = \min_{{\bSigma}\in S_{++}} G(\bSigma,\hat{\bSigma}) < G(\hat{\bSigma},\hat{\bSigma}) = F(\hat{\bSigma}),$$
which directly contradicts our initial assumption that $F\!\left(T(\hat{\bSigma})\right) = F\!\left(\hat{\bSigma}\right)$. This establishes the first statement.

For the second statement, assume that a subsequence of $\{\bSigma^{(k)}\}_{k \geq 0}$ converges to a nonsingular matrix $\hat{\bSigma}$. We note from \eqref{eq:tme_iteration} (and the fact that $\calX$ spans $\mathbb{R}^D$, as shown in Lemma \ref{lem:implicit_dimension}) that for each $k \geq 0$, $\bSigma^{(k)} \in S_{++}$. Thus, the limiting matrix must lie in the positive semi-definite cone $S_+$. Due to its assumed non-singularity, we strictly have $\hat{\bSigma} \in S_{++}$, ensuring both $F(\hat{\bSigma})$ and $T(\hat{\bSigma})$ are well-defined.

By continuity, the objective values along this subsequence converge to $F(\hat{\bSigma})$. Furthermore, by the monotonicity of $F$ established in \eqref{eq:monotone}, the objective values of the full sequence, $\lim_{k\to\infty} F(\bSigma^{(k)})$, must also converge to this exact same limit. As a result, the difference $F(\bSigma^{(k)}) - F(\bSigma^{(k+1)}) = F(\bSigma^{(k)}) - F(T(\bSigma^{(k)}))$ converges to zero. 
Combining this with the continuity of both $F$ and $T$ over the positive definite cone $S_{++}$ implies $F\!\left(T(\hat{\bSigma})\right) = F(\hat{\bSigma})$. Finally, by our first statement, this guarantees that \eqref{eq:zhang16_extend1} holds. \qed

\section{Proof of Theorem \ref{thm:zhang16_extend} for the Case of $\boldsymbol{\dssnr=1}$}
\label{sec:proof_thm_zhang16_extend}

The proof proceeds in three steps. In the first step, we show that at each iterate \( \bSigma^{(k)} \) admits a decomposition into two
components $\bSigma^{(k)}_{\mathrm{in}}$ and $\bSigma^{(k)}_{\mathrm{out}}$.  The first component $\bSigma^{(k)}_{\mathrm{in}}$ has a range contained in \( L_* \) and when $k\rightarrow\infty$, its magnitude is bounded from below by a positive number. On the other hand, the magnitude of $\bSigma^{(k)}_{\mathrm{out}}$ converges to zero as $k\rightarrow\infty$. Moreover, the condition numbers of both components are asymptotically bounded. 

In the second step, we show that the TME objective function $F$ in \eqref{eq:TME_objective} and the TME update transformation $T$ in \eqref{eq:tme_iteration} can also be decomposed into two parts. The first component depends on $\bSigma^{(k)}_{\mathrm{in}}$ and the set of inliers, and the second component depends on $\bSigma^{(k)}_{\mathrm{out}}$ and the set of outliers. 

In the third step, we show that \( \bSigma^{(k)} \) converges to the TME estimator computed from the inliers. The proof critically relies on a key consequence of the majorization--minimization principle---namely, both 
\( F(\bSigma^{(k)}) \) and $\tilde{F}(\bw^{(k)})$ are nonincreasing over the iterations.

\textbf{Step 1: A decomposition of $\bSigma^{(k)}$ and its properties.}
For $k\geq 1$, define \( \bSigma^{(k)}_{\mathrm{in}} \) and \( \bSigma^{(k)}_{\mathrm{out}} \) by
\[
\bSigma^{(k)}_{\mathrm{in}}
=
\sum_{\bx \in \calX_{\mathrm{in}}}
\frac{\bx \bx^\top}{\bx^\top (\bSigma^{(k-1)})^{-1} \bx},
\qquad
\bSigma^{(k)}_{\mathrm{out}}
=
\sum_{\bx \in \calX_{\mathrm{out}}}
\frac{\bx \bx^\top}{\bx^\top (\bSigma^{(k-1)})^{-1} \bx}.
\]
We note that 
\( \bSigma^{(k)}_{\mathrm{in}} \) and \( \bSigma^{(k)}_{\mathrm{out}} \) provide a natural
decomposition of \( \bSigma^{(k)} \), namely,
\begin{equation}
\label{eq:decompose_Sigmak}
\bSigma^{(k)}
=
\frac{\bSigma^{(k)}_{\mathrm{in}}+\bSigma^{(k)}_{\mathrm{out}}}
{\tr\!\bigl(\bSigma^{(k)}_{\mathrm{in}}+\bSigma^{(k)}_{\mathrm{out}}\bigr)}.
\end{equation}

The next lemma shows that \( \bSigma^{(k)}_{\mathrm{in}} \) and \( \bSigma^{(k)}_{\mathrm{out}} \) are  asymptotically well-conditioned and that the ``size'' of $\bSigma^{(k)}_{\mathrm{out}}$ is asymptotically diminishing compared to that of $\bSigma^{(k)}_{\mathrm{in}}$. 
\begin{lemma}\label{lemma:step1}
Both $\bSigma^{(k)}_{\mathrm{in}}$ and $\bSigma^{(k)}_{\mathrm{out}}$ are asymptotically well-conditioned in the sense that there exists a constant \( c > 0 \),  depending on $\calX$ and the initialization $\bSigma^{(0)}$, such that 
\begin{align}\label{eq:step1_conclusion1}
\liminf_{k\rightarrow\infty}\frac{\sigma_d\!\left(\bSigma^{(k)}_{\mathrm{in}}\right)}{\sigma_1\!\left(\bSigma^{(k)}_{\mathrm{in}}\right)}
&\geq
c\, , 
\\\label{eq:step1_conclusion2}
\liminf_{k\rightarrow\infty}\frac{\sigma_{D-d}\!\left(
[\bSigma^{(k)}_{\mathrm{out}}]_{L_*^\perp,L_*^\perp}
\right)}{\sigma_1\!\left(\bSigma^{(k)}_{\mathrm{out}}\right)}
&\geq
c. 
\end{align}
In addition, the relative size of $\bSigma^{(k)}_{\mathrm{out}}$ compared with $\bSigma^{(k)}_{\mathrm{in}}$ is asymptotically diminishing:
\begin{align}
\lim_{k \to \infty}
\frac{\sigma_1\!\left(\bSigma^{(k)}_{\mathrm{out}}\right)}
{\sigma_d\!\left(\bSigma^{(k)}_{\mathrm{in}}\right)}
&= 0.\label{eq:step1_conclusion3}
\end{align}
\end{lemma}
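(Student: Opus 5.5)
The plan is to first obtain spectral information about the iterates $\bSigma^{(k)}$ themselves, and then transfer it to $\bSigma^{(k)}_{\mathrm{in}}$ and $\bSigma^{(k)}_{\mathrm{out}}$ through the explicit weights $1/(\bx^\top(\bSigma^{(k-1)})^{-1}\bx)$. Write $\bSigma^{(k)}=\sum_{j}\lambda_j^{(k)}\bm{u}_j^{(k)}\bm{u}_j^{(k)\top}$ with $\lambda_1^{(k)}\geq\cdots\geq\lambda_D^{(k)}>0$, and let $V_j^{(k)}$ be the span of the top $j$ eigenvectors.

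The first ingredient is a lower bound on $F$ of Kent--Tyler type. For each $\bx$, let $j(\bx)$ be the smallest index with $\bx$ ``essentially'' in $V_j$. Then $\bx^\top\bSigma^{-1}\bx\gtrsim \|\bx\|^2/\lambda_{j(\bx)}$. Summing by Abel summation gives
\[
F(\bSigma)\;\geq\; C+\sum_{j=1}^{D-1}\Bigl(\frac{j}{D}-\frac{|\calX\cap V_j|}{N}\Bigr)\log\frac{\lambda_j}{\lambda_{j+1}} .
\]
When $\dssnr=1$ we have $n_1/N=d/D$. By \cref{item:convergence} every coefficient is then nonnegative, and it vanishes only when $j=d$ and $V_d=L_*$. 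Since $F(\bSigma^{(k)})\leq F(\bSigma^{(0)})$ by \eqref{eq:monotone}, every gap $\log(\lambda_j^{(k)}/\lambda_{j+1}^{(k)})$ with $j\neq d$ stays bounded. Moreover, $\lambda_d^{(k)}/\lambda_{d+1}^{(k)}$ can become large only if $V_d^{(k)}$ approaches $L_*$.

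The hard part will be making this bound robust. ``$\bx\in V_j$'' must be replaced by ``$\bx$ is at small angle to $V_j$'', and the coefficients must keep a uniform positive margin whenever $V_j$ is close to, but not equal to, a data-spanned subspace. I would handle this with a compactness argument over the Grassmannian, using that only finitely many subspaces are spanned by data points. The looseness is then absorbed into the constant $C$, at the cost of a bounded error in the gaps.

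The second step shows that $\lambda_d^{(k)}/\lambda_{d+1}^{(k)}\to\infty$. Suppose instead that this ratio were bounded along a subsequence. Then all eigenvalue ratios would be bounded, and since $\tr(\bSigma^{(k)})=1$, a further subsequence would converge to a nonsingular $\hat\bSigma$. By \Cref{prop:nonsingular_fixed_point}, $\hat\bSigma$ satisfies \eqref{eq:zhang16_extend1}. Taking the trace of $\bP_{L_*}\hat\bSigma^{-1}$ against \eqref{eq:zhang16_extend1} gives the standard necessary condition $|\calX\cap L_*|/N<d/D$. This inequality is strict because outliers contribute positively, and it contradicts $\dssnr=1$. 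Hence the gap at $d$ diverges, and by the first step $V_d^{(k)}\to L_*$. Quantitatively, $\|\bP_{L_*^\perp}\bm{u}_j^{(k)}\|$ for $j\leq d$ should be controlled by the ratio $\lambda_{d+1}^{(k)}/\lambda_d^{(k)}$.

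Finally, I would read off the three conclusions from the weights. For an inlier $\bx\in L_*$, the convergence $V_d^{(k-1)}\to L_*$ together with the bounded top-$d$ condition number gives $\bx^\top(\bSigma^{(k-1)})^{-1}\bx\asymp\|\bx\|^2/\lambda_1^{(k-1)}$. By \cref{item:convergence} applied to proper subspaces of $L_*$, the inliers span $L_*$. Therefore $\bSigma^{(k)}_{\mathrm{in}}\asymp\lambda_1^{(k-1)}\sum_{\bx\in\calX_{\mathrm{in}}}\bx\bx^\top/\|\bx\|^2$, which is well-conditioned on $L_*$; this gives \eqref{eq:step1_conclusion1}. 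For an outlier, $\bP_{L_*^\perp}\bx\neq\b0$, so $\bx^\top(\bSigma^{(k-1)})^{-1}\bx\asymp\|\bP_{L_*^\perp}\bx\|^2/\lambda_{d+1}^{(k-1)}$. Hence $\sigma_1(\bSigma^{(k)}_{\mathrm{out}})\lesssim\lambda_{d+1}^{(k-1)}$. Because $\calX$ spans $\R^D$ (\Cref{lem:implicit_dimension}), the projected outliers span $L_*^\perp$, which yields the lower bound on the $L_*^\perp$ block in \eqref{eq:step1_conclusion2}. The ratio in \eqref{eq:step1_conclusion3} is then $O(\lambda_{d+1}^{(k-1)}/\lambda_d^{(k-1)})\to0$.

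I expect the main obstacle to be the robust, quantitative form of the lower bound on $F$ in the first step. The final step's asymptotics also require care: the error $\bP_{L_*^\perp}\bm{u}_j$ must be controlled against the vanishing ratio, so that the $L_*$-component of an outlier does not dominate $\bx^\top(\bSigma^{(k-1)})^{-1}\bx$.
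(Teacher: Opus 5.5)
Your route differs from the paper's. The paper works entirely in weight space: it sorts the MM weights $\bw^{(k)}$ into scale clusters along subsequences and reads off the asymptotic order of $\tilde F(\bw^{(k)})$. That directly yields the comparability of weights the lemma asserts, with no subspace-perturbation step.

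Your Kent--Tyler-type lower bound on $F$ is workable. By compactness of the Grassmannian and finiteness of $\calX$, pick $\delta_0>0$ such that, for every $j$-dimensional $V$, the points within angle $\delta_0$ of $V$ span a subspace of dimension at most $j$. The coefficients $j/D-M_j/N$ then take finitely many values, so a uniform margin is automatic.

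\textbf{Gap in your second step.} When $\dssnr=1$ we have $n_1/N=d/D$ exactly, and a nonsingular fixed point does not force a strict inequality. Tracing against $\bP_{L_*}\hat\bSigma^{-1}$ gives outlier terms $(\bP_{L_*}\bx)^\top\hat\bSigma^{-1}\bx/\bx^\top\hat\bSigma^{-1}\bx$, which are not sign-definite. The correct identity uses $\by=\hat\bSigma^{-1/2}\bx$ and $\hat L=\hat\bSigma^{-1/2}L_*$: it reads $d/D=n_1/N+\frac{1}{N}\sum_{\bx\in\calX_{\mathrm{out}}}\|\bP_{\hat L}\by\|^2/\|\by\|^2$. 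Its outlier terms are nonnegative, but they may all vanish. The Kent--Tyler necessary condition is only non-strict, with equality in the decomposable case; for example, $\{\bm{e}_1,\bm{e}_2\}\subset\R^2$ has the fixed point $\bI$.

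The missing idea is that equality forces every outlier into the $(D-d)$-dimensional subspace $\hat\bSigma L_*^\perp$. That subspace then contains $n_0$ points, and $n_0/(D-d)=n_1/d$, which contradicts item~2 of Theorem~\ref{thm:zhang16_extend}. This is exactly the paper's Case~1.

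\textbf{Gap in your last step.} Here the delicacy lies with the inliers, not the outliers. For an outlier, mere convergence $V_d^{(k-1)}\to L_*$ suffices, because its component in $V_d^{(k-1)}$ contributes at most $\|\bx\|^2/\lambda_d^{(k-1)}\ll 1/\lambda_{d+1}^{(k-1)}$.

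For an inlier, $\bx^\top(\bSigma^{(k-1)})^{-1}\bx\ge\|\bP_{(V_d^{(k-1)})^\perp}\bx\|^2/\lambda_{d+1}^{(k-1)}$. For some inlier the numerator is comparable to $\sin^2\Theta\,\|\bx\|^2$, where $\Theta=\Theta(V_d^{(k-1)},L_*)$. So the claim $\bx^\top(\bSigma^{(k-1)})^{-1}\bx\asymp\|\bx\|^2/\lambda_1^{(k-1)}$ needs $\sin^2\Theta=O(\lambda_{d+1}^{(k-1)}/\lambda_d^{(k-1)})$. Convergence alone does not give this rate.

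Moreover, your first step with a fixed tolerance does not even give $V_d^{(k)}\to L_*$. It only shows that, once the $d$-th gap is large, all inliers lie within angle $\delta_0$ of $V_d^{(k)}$.

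Both points are repaired by refining the lower bound: for each inlier, also use $\bx^\top\bSigma^{-1}\bx\ge\|\bP_{V_d^\perp}\bx\|^2/\lambda_{d+1}$. Once the $d$-th coefficient is forced to vanish, every inlier has $j(\bx)\le d$. The bound then gains the nonnegative term $\frac{1}{N}\sum_{\bx\in\calX_{\mathrm{in}}}\bigl[\log\bigl(\|\bP_{V_d^\perp}\bx\|^2\lambda_d/(\|\bx\|^2\lambda_{d+1})\bigr)-2\log\sin\delta_0\bigr]_+$. Boundedness of $F(\bSigma^{(k)})$ then yields $\|\bP_{V_d^\perp}\bx\|^2\lesssim(\lambda_{d+1}/\lambda_d)\|\bx\|^2$ for every inlier. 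Since the inliers span $L_*$, this is the required bound on $\sin^2\Theta$, and it also gives $V_d^{(k)}\to L_*$.
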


Combining \eqref{eq:decompose_Sigmak} with the observation 
$\bSigma^{(k)}_{\mathrm{in}} = \bP_{L_*} \bSigma^{(k)}_{\mathrm{in}} \bP_{L_*}$, noting that $\|\bA - \bP_{L_*} \bA \bP_{L_*}\|_F \leq \|\bA\|_F$ for any symmetric matrix $\bA$ (using it with $\bA = \bSigma_{\mathrm{out}}^{(k)}$), and applying a basic bound of the Frobenius norm yield
$$\|\bSigma^{(k)}-\bP_{L_*} \bSigma^{(k)} \bP_{L_*}\|_F
\leq \|\bSigma^{(k)}_{\mathrm{out}}\|_F/\tr\!\bigl(\bSigma^{(k)}_{\mathrm{in}}+\bSigma^{(k)}_{\mathrm{out}}\bigr)
\leq D\sigma_1(\bSigma^{(k)}_{\mathrm{out}})/d\sigma_d\!\left(\bSigma^{(k)}_{\mathrm{in}}\right).$$ 
The last equation and \eqref{eq:step1_conclusion3} imply that $\lim_{k\rightarrow\infty}\|\bSigma^{(k)}-\bP_{L_*} \bSigma^{(k)} \bP_{L_*}\|_F=0$, and thus     \begin{equation}\lim_{k\rightarrow\infty}\bSigma^{(k)}=\lim_{k\rightarrow\infty}\bP_{L_*} \bSigma^{(k)} \bP_{L_*}\label{eq:Sigmak_PSigmak}.\end{equation} 
We recall that to prove \Cref{thm:zhang16_extend} we need to show that $\lim_{k\to\infty}\bSigma^{(k)}=\bU_{L_*}\bSigma_{\mathrm{in},*}\bU_{L_*}^\top$, and by \eqref{eq:Sigmak_PSigmak}, it is equivalent to verifying $\lim_{k\to\infty}\bP_{L_*} \bSigma^{(k)} \bP_{L_*}=\bU_{L_*}\bSigma_{\mathrm{in},*}\bU_{L_*}^\top$. Since $\bP_{L_*} = \bU_{L_*} \bU_{L_*}^\top$ this is also equivalent to proving
\begin{equation}\label{eq:step1_implication}
\lim_{k\rightarrow\infty} [\bSigma^{(k)}]_{L_*,L_*}=\bSigma_{\mathrm{in},*},
\end{equation}
where we use the notation introduced in \eqref{eq:def_block_components} with $L_1=L_2=L_*$. 

\textbf{Step 2: A decomposition of the TME objective
and the corresponding iteration.}
We denote the TME objective in \eqref{eq:TME_objective} and the iterative
operator in \eqref{eq:tme_iteration} by \( F_{\calX} \) and \( T_{\calX} \), respectively, emphasizing
their dependence on the underlying dataset \( \calX \). We will also apply these operators to 
\begin{equation}
\label{eq:x_in_out_def} \tilde{\calX}_{\mathrm{in}}:=\{\bU_{L_*}^\top\bx: \bx\in\calX_{\mathrm{in}}\} \ \text{ and } \  \widehat{\calX}_{\mathrm{out}}:=\{\bU_{L_*^\perp}^\top\bx: \bx\in\calX_{\mathrm{out}}\}.    
\end{equation}
Then we apply the decomposition of $\bSigma$ presented in \eqref{eq:decompose_Sigmak} and its properties  \eqref{eq:step1_conclusion1}-\eqref{eq:step1_conclusion3} to show that both the
objective and the iterative operator admit an approximate decomposition into inlier and outlier
components. We formulate this decomposition in \Cref{lemma:step2}.  
We note that properties  \eqref{eq:step1_conclusion1}-\eqref{eq:step1_conclusion3} are independent of any scaling of the matrix $\bSigma$ and we thus ignore the numerator in \eqref{eq:decompose_Sigmak} when formulating this lemma. Furthermore, we avoid the asymptotic setting of properties  \eqref{eq:step1_conclusion1}-\eqref{eq:step1_conclusion3}. Instead of assuming \eqref{eq:step1_conclusion1} and \eqref{eq:step1_conclusion2}, we assume that their  lower bounds hold for $\bSigma^{(k)}$ with a sufficiently large $k$. Similarly, instead of directly using \eqref{eq:step1_conclusion3}, we work with the ratio in the left-hand side (LHS) of \eqref{eq:step1_conclusion3} and assume it is sufficiently small (we refer to it as $\epsilon$).

\begin{lemma}\label{lemma:step2}
Suppose that \( \bSigma \) admits a decomposition
\[
\bSigma = \bSigma_{\mathrm{in}} + \bSigma_{\mathrm{out}},
\]
where \( \bSigma_{\mathrm{in}} \) and \( \bSigma_{\mathrm{out}} \) are positive semidefinite matrices,
the range of \( \bSigma_{\mathrm{in}} \) is contained in \( L_* \), and 
there exists $c>0$ such that
\begin{equation}\label{eq:lemma2_assumption}
\frac{\sigma_d\!\left(\bSigma_{\mathrm{in}}\right)}{\sigma_1\!\left(\bSigma_{\mathrm{in}}\right)}
\geq
c,\,\,\frac{\sigma_{D-d}\!\left(
[\bSigma_{\mathrm{out}}]_{L_*^\perp,L_*^\perp}
\right)}{\sigma_1\!\left(\bSigma_{\mathrm{out}}\right)}
\geq
c.
\end{equation} 
Define
\[
\epsilon := \frac{\sigma_1(\bSigma_{\mathrm{out}})}{\sigma_d(\bSigma_{\mathrm{in}})}.
\]
and assume further that $\epsilon\leq 1$. Then there exists a constant $C=C(\calX,L_*)$ such that the following approximate decompositions hold: \begin{equation}\label{eq:objective_decomposition}
\left|F_{\calX}(\bSigma)
-
\left(\frac{n_1}{N}
F_{\tilde{\calX}_{\mathrm{in}}}
\!\left(
[\bSigma]_{L_*,L_*}
\right)
+
\frac{n_0}{N}
F_{\widehat{\calX}_{\mathrm{out}}}
\!\left(
[\bSigma]_{L_*^\perp,L_*^\perp}
\right)\right)\right|\leq C \epsilon,
\end{equation}
\begin{align}\label{eq:operator_decomposition}
\left\|\frac{
[T_{\calX}(\bSigma)]_{L_*,L_*}
}{
\tr\!\left(
[T_{\calX}(\bSigma)]_{L_*,L_*}
\right)
}
-
T_{\tilde{\calX}_{\mathrm{in}}}
\!\left(
[\bSigma]_{L_*,L_*}
\right)\right\|_F &\leq C \epsilon, \\
\left\|\frac{
[T_{\calX}(\bSigma)]_{L_*^\perp,L_*^\perp}
}{
\tr\!\left(
[T_{\calX}(\bSigma)]_{L_*^\perp,L_*^\perp}
\right)
}
-
T_{\widehat{\calX}_{\mathrm{out}}}
\!\left(
[\bSigma]_{L_*^\perp,L_*^\perp}
\right)\right\|_F &\leq 
C \epsilon.
\label{eq:operator_decomposition2}
\end{align}
\end{lemma}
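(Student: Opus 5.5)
The plan is to work in the orthonormal basis $[\bU_{L_*},\bU_{L_*^\perp}]$ and write every quantity through block matrices and Schur complements. Set $\bA:=[\bSigma_{\mathrm{in}}]_{L_*,L_*}$, and let $\bB_{11},\bB_{12},\bB_{22}$ be the blocks of $\bSigma_{\mathrm{out}}$, so that $[\bSigma]_{L_*,L_*}=\bA+\bB_{11}$ and $[\bSigma]_{L_*^\perp,L_*^\perp}=\bB_{22}$. The hypotheses give the following controls:
\begin{itemize}
\item $\sigma_d(\bA)=\sigma_d(\bSigma_{\mathrm{in}})$ and $\sigma_1(\bA)\le c^{-1}\sigma_d(\bA)$;
\item $\|\bB_{ij}\|\le \sigma_1(\bSigma_{\mathrm{out}})=\epsilon\,\sigma_d(\bA)$;
\item $\sigma_{D-d}(\bB_{22})\ge c\,\sigma_1(\bSigma_{\mathrm{out}})$.
\end{itemize}
In particular $\tilde\bA:=\bA+\bB_{11}$ is a relative $O(\epsilon)$ perturbation of $\bA$ in the sense $\|\bA^{-1/2}\bB_{11}\bA^{-1/2}\|\le\epsilon$. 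Let $\bS:=\bB_{22}-\bB_{21}\tilde\bA^{-1}\bB_{12}$. Then
\[
\|\bB_{22}^{-1/2}(\bS-\bB_{22})\bB_{22}^{-1/2}\|\le C\epsilon,
\]
because $\|\bB_{21}\tilde\bA^{-1}\bB_{12}\|\le \sigma_1(\bSigma_{\mathrm{out}})^2/\sigma_d(\bA)=\epsilon\,\sigma_1(\bSigma_{\mathrm{out}})$ and $\bB_{22}\succeq c\,\sigma_1(\bSigma_{\mathrm{out}})\bI$. By scale invariance of $F$ and $T$ in each data point, I will normalize all points to unit norm. The outliers then have $\bU_{L_*^\perp}^\top\bx$ bounded below in norm by a constant depending on $\calX,L_*$, since they lie off $L_*$.

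The first key step is a pointwise expansion of the quadratic forms $\bx^\top\bSigma^{-1}\bx$. For an inlier $\bx=\bU_{L_*}\bm a$, block inversion gives $\bx^\top\bSigma^{-1}\bx=\bm a^\top\bigl(\tilde\bA-\bB_{12}\bB_{22}^{-1}\bB_{21}\bigr)^{-1}\bm a$. The subtracted term has norm at most $c^{-1}\sigma_1(\bSigma_{\mathrm{out}})$, which is $O(\epsilon)\,\sigma_d(\tilde\bA)$. Hence
\[
\bx^\top\bSigma^{-1}\bx=\bm a^\top\tilde\bA^{-1}\bm a\,(1+O(\epsilon)).
\]
For an outlier $\bx=\bU_{L_*}\bm a+\bU_{L_*^\perp}\bm b$, the other block elimination gives
\[
\bx^\top\bSigma^{-1}\bx=(\bm a-\bM\bm b)^\top\tilde\bA^{-1}(\bm a-\bM\bm b)+\bm b^\top\bS^{-1}\bm b,\qquad \bM=\tilde\bA^{-1}\bB_{12}.
\]
Here $\|\bM\|\le C\epsilon$. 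The first term is $O(1/\sigma_d(\bA))$, while the second is at least $c/\sigma_1(\bSigma_{\mathrm{out}})$. So the first is an $O(\epsilon)$ fraction of the second, and together with the bound on $\bS$ versus $\bB_{22}$,
\[
\bx^\top\bSigma^{-1}\bx=\bm b^\top\bB_{22}^{-1}\bm b\,(1+O(\epsilon)).
\]

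For \eqref{eq:objective_decomposition}, I take logarithms, which turn multiplicative $(1+O(\epsilon))$ factors into additive $O(\epsilon)$ errors since $\epsilon\le1$. I also use $\log\det\bSigma=\log\det\tilde\bA+\log\det\bS=\log\det[\bSigma]_{L_*,L_*}+\log\det\bB_{22}+O(\epsilon)$. It then remains to match the determinant coefficients. One needs $\frac1D=\frac{n_1}{N}\cdot\frac1d=\frac{n_0}{N}\cdot\frac1{D-d}$, which is exactly $\dssnr=1$, the case treated in this section. I will make this use explicit; for $\dssnr>1$ the lemma would need to be modified in \Cref{sec:proof_thm_zhang16_extend_part2}.

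For the operator bounds, I write the unnormalized update $\sum_{\bx}\bx\bx^\top/\bx^\top\bSigma^{-1}\bx$ and read off its blocks.
\begin{itemize}
\item \emph{Top-left block.} Inliers contribute $\sum\bm a\bm a^\top/(\bm a^\top\tilde\bA^{-1}\bm a)\,(1+O(\epsilon))$. Outliers contribute $\bm a\bm a^\top/(\bx^\top\bSigma^{-1}\bx)$, which has norm $O(\sigma_1(\bSigma_{\mathrm{out}}))=O(\epsilon\,\sigma_d(\bA))$.
\item \emph{Bottom-right block.} Inliers contribute nothing. Outliers contribute $\sum\bm b\bm b^\top/(\bm b^\top\bB_{22}^{-1}\bm b)\,(1+O(\epsilon))$.
\end{itemize}
After trace normalization these become \eqref{eq:operator_decomposition} and \eqref{eq:operator_decomposition2}, provided the normalizing traces are bounded below at the right scale. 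For the inlier trace, I need the projected inliers to span $\R^d$; this follows from condition~\ref{item:convergence} of \Cref{thm:zhang16_extend}, as in Lemma~\ref{lem:implicit_dimension}. Since each weight satisfies $1/(\bm a^\top\tilde\bA^{-1}\bm a)\ge\sigma_d(\tilde\bA)$, the inlier trace is at least $c\,\sigma_d(\bA)$. For the outlier trace, the outliers projected onto $L_*^\perp$ span it (again by Lemma~\ref{lem:implicit_dimension} when $\dssnr=1$), and each weight is at least $\sigma_{D-d}(\bB_{22})\ge c\,\sigma_1(\bSigma_{\mathrm{out}})$. The perturbations are therefore relatively $O(\epsilon)$, and normalized matrices of trace one differ by $O(\epsilon)$ in Frobenius norm.

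I expect the main obstacle to be the outlier quadratic form: showing that the cross term through $\bM$ and the $L_*$-component are negligible relative to $\bm b^\top\bS^{-1}\bm b$. This needs the lower bound on $\sigma_{D-d}(\bB_{22})$ in \eqref{eq:lemma2_assumption}, not just on $\sigma_1(\bSigma_{\mathrm{out}})$. I will also track that all constants depend only on $c$, $\calX$ and $L_*$.
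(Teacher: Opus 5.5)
Your architecture matches the paper's. You compare $\bx^\top\bSigma^{-1}\bx$ pointwise up to $1+O(\epsilon)$ via block inversion, split $\log\det\bSigma$, and use trace-normalized weighted sums for \eqref{eq:operator_decomposition}--\eqref{eq:operator_decomposition2}. You are also right that matching the $\log\det$ coefficients, $\frac1D=\frac{n_1}{Nd}=\frac{n_0}{N(D-d)}$, is exactly $\dssnr=1$. The paper uses this silently in its ``it suffices to show'' step, and without it a term of order $\log(1/\epsilon)$ survives in \eqref{eq:objective_decomposition}.

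One step is wrong as written: your outlier formula is not an identity in $(\bm{a},\bm{b})$. At $\bm{b}=\b0$ it returns $\bm{a}^\top\tilde{\bA}^{-1}\bm{a}$, whereas the true value is $\bm{a}^\top(\tilde{\bA}-\bB_{12}\bB_{22}^{-1}\bB_{21})^{-1}\bm{a}$, which is your own inlier formula. For scalar blocks $\tilde{\bA}=2$, $\bB_{12}=\bB_{22}=1$, $\bm{a}=1$, $\bm{b}=0$, these are $1/2$ and $1$.

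Eliminating the $L_*$ block correctly gives $\bm{a}^\top\tilde{\bA}^{-1}\bm{a}+(\bm{b}-\bm{M}^\top\bm{a})^\top\bm{S}^{-1}(\bm{b}-\bm{M}^\top\bm{a})$. There the $\bm{M}$-correction sits inside the large form and needs one more estimate. It is simpler to eliminate the $L_*^\perp$ block, as the paper does in \eqref{eq:outlier_diff}:
\[
\bx^\top\bSigma^{-1}\bx=\bm{b}^\top\bB_{22}^{-1}\bm{b}+(\bm{a}-\bB_{12}\bB_{22}^{-1}\bm{b})^\top\bm{S}_{11}^{-1}(\bm{a}-\bB_{12}\bB_{22}^{-1}\bm{b}),\qquad \bm{S}_{11}:=\tilde{\bA}-\bB_{12}\bB_{22}^{-1}\bB_{21}.
\]
Here $\|\bB_{12}\bB_{22}^{-1}\|\le 1/c$, and $\bm{S}_{11}\psdgeq\bA$ because it is $\bA$ plus the Schur complement of the PSD matrix $\bSigma_{\mathrm{out}}$. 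So the second term lies in $[0,O(1/\sigma_d(\bA))]$, while the first is at least $\|\bm{b}\|^2/\sigma_1(\bSigma_{\mathrm{out}})$. This yields your expansion. It also gives the lower bound on $\bx^\top\bSigma^{-1}\bx$ that you need for the outlier part of the $(L_*,L_*)$ block.

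Likewise, $\bA\psdleq\bm{S}_{11}\psdleq\tilde{\bA}\psdleq(1+\epsilon)\bA$ puts the inlier ratio in $[1,1+\epsilon]$ for every $\epsilon\le1$. Your perturbative bound, like the paper's \eqref{eq:estimations_magnitude5}, only works for $\epsilon$ small relative to $c$. That is enough for Step~3, where $\epsilon_k\to0$.

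Where you genuinely differ from the paper is the $\log\det$ step. The paper applies Weyl's inequality twice: to $\bSigma$ against $\mathrm{diag}([\bSigma]_{L_*,L_*},\b0)$ for the top $d$ eigenvalues, and to $\bSigma^{-1}$ against $\mathrm{diag}(\b0,[\bSigma]^{-1}_{L_*^\perp,L_*^\perp})$ for the bottom $D-d$. It then multiplies eigenvalue ratios. You use the exact identity $\det\bSigma=\det\tilde{\bA}\,\det\bm{S}$ plus a single congruence bound on $\bm{S}$ versus $\bB_{22}$. That is shorter, avoids index matching, and shows the error is one-sided ($\bm{S}\psdleq\bB_{22}$, i.e.\ Fischer's inequality). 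The eigenvalue-wise information the paper's route provides is never used.

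Two small points:
\begin{itemize}
\item The spanning facts you invoke are unnecessary. Each normalizing trace is a sum of nonnegative terms, and normalized weights are at most $1/\|\tilde{\bx}\|^2$.
\item $F$ is not invariant under rescaling individual points. However, the difference in \eqref{eq:objective_decomposition} is invariant, since both sides shift by $\frac1N\sum_i\log c_i^2$. So your unit-norm normalization remains legitimate.
\end{itemize}
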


\textbf{Step 3: Final arguments that \( \boldsymbol{\bSigma^{(k)}} \) converges to the TME estimator
\( \boldsymbol{\bU_{L_*} \bSigma_{\mathrm{in},*} \bU_{L_*}^\top} \).}  
Since the TME estimators with respect to the sets $\tilde{\calX}_{\mathrm{in}}$ and $\widehat{\calX}_{\mathrm{out}}$ exist (see \cref{lem:existence_uniqueness,lem:existence_uniqueness_outliers}, respectively), the objective functions $F_{\tilde{\calX}_{\mathrm{in}}}$ and $F_{\widehat{\calX}_{\mathrm{out}}}$ attain their global minimums and are therefore bounded from below. 
By \eqref{eq:step1_conclusion3} and \eqref{eq:objective_decomposition}, for a sufficiently large $K \in \nats$, the sequence $\{F_{\calX}(\bSigma^{(k)})\}_{k \geq K}$ is also bounded from below. This boundedness, combined with the monotonicity of the sequence $F_{\calX}(\bSigma^{(k)})$, implies that the sequence converges, and thus
\begin{equation}\label{eq:objective_converge}
\lim_{k\rightarrow\infty} \left( F_{\calX}(\bSigma^{(k)})-F_{\calX}(\bSigma^{(k+1)}) \right) = 0.
\end{equation}

For the rest of the proof we assume that $k \geq K$, without always explicitly mentioning this. We further adapt $K$ so that \eqref{eq:lemma2_assumption} holds for $\bSigma = \bSigma^{(k)}$ whenever $k \geq K$ and so that the following quantity
$$\epsilon_k=\frac{\sigma_1(\bSigma_{\mathrm{out}}^{(k)})}{\sigma_d(\bSigma_{\mathrm{in}}^{(k)})}$$ 
satisfies $\eps_k<1$ for any $k \geq K$. 
We can thus apply \Cref{lemma:step2}, where we explain the technical details below, to obtain for $k \geq K$: 
\begin{align}\nonumber
F_{\calX}(\bSigma^{(k+1)}) &\leq \frac{n_1}{N} F_{\tilde{\calX}_{\mathrm{in}}} \!\left( [\bSigma^{(k+1)}]_{L_*,L_*} \right) + \frac{n_0}{N} F_{\widehat{\calX}_{\mathrm{out}}} \!\left( [\bSigma^{(k+1)}]_{L_*^\perp,L_*^\perp} \right) + C \epsilon_{k+1} \\ \nonumber
&= \frac{n_1}{N} F_{\tilde{\calX}_{\mathrm{in}}} \!\left( \frac{[T_{\calX}(\bSigma^{(k)})]_{L_*,L_*}}{\tr\!\left([T_{\calX}(\bSigma^{(k)})]_{L_*,L_*}\right)} \right) + \frac{n_0}{N} F_{\widehat{\calX}_{\mathrm{out}}} \!\left( \frac{[T_{\calX}(\bSigma^{(k)})]_{L_*^\perp,L_*^\perp}}{\tr\!\left([T_{\calX}(\bSigma^{(k)})]_{L_*^\perp,L_*^\perp}\right)} \right) + C \epsilon_{k+1} \\ \nonumber
&\leq \frac{n_1}{N} F_{\tilde{\calX}_{\mathrm{in}}} \!\left( T_{\tilde{\calX}_{\mathrm{in}}} \!\left( [\bSigma^{(k)}]_{L_*,L_*} \right) \right) + \frac{n_0}{N} F_{\widehat{\calX}_{\mathrm{out}}} \!\left( T_{\widehat{\calX}_{\mathrm{out}}} \!\left( [\bSigma^{(k)}]_{L_*^\perp,L_*^\perp} \right) \right) + C(\epsilon_{k+1}+\epsilon_k) \\ \label{eq:inequality_decomposition}
&\leq \frac{n_1}{N} F_{\tilde{\calX}_{\mathrm{in}}} \!\left( [\bSigma^{(k)}]_{L_*,L_*} \right) + \frac{n_0}{N} F_{\widehat{\calX}_{\mathrm{out}}} \!\left( [\bSigma^{(k)}]_{L_*^\perp,L_*^\perp} \right) + C(\epsilon_{k+1}+\epsilon_k) \\ \nonumber
&\leq F_{\calX}(\bSigma^{(k)}) + C(\epsilon_{k+1}+\epsilon_k).
\end{align}
(Note that the generic constant $C > 0$ may change its value from line to line). Indeed, the first inequality follows from \eqref{eq:objective_decomposition}. The first equality follows from $\bSigma^{(k+1)}=T_{\calX}(\bSigma^{(k)})$ and the scale-invariance of $F$. The second inequality follows from \eqref{eq:operator_decomposition} and the Lipschitz continuity of $F_{\tilde{\calX}_{\mathrm{in}}}$ and $F_{\widehat{\mathcal{X}}_{\mathrm{out}}}$ on the set of well-conditioned matrices. We remark that this Lipschitz continuity follows directly from observing the derivative of $F$, which takes a form similar to that of $G$ as specified in \eqref{eq:derivative_G}. 

To complete this argument, we next verify that the matrices we use are indeed well-conditioned. We first note that by \eqref{eq:decompose_Sigmak} and \eqref{eq:step1_conclusion2},  $[{\bSigma}^{(k)}]_{L_*^\perp,L_*^\perp}$ is well-conditioned:
\[
\frac{\sigma_1([{\bSigma}^{(k)}]_{L_*^\perp,L_*^\perp})}{\sigma_d([{\bSigma}^{(k)}]_{L_*^\perp,L_*^\perp})}
=
\frac{\sigma_1([{\bSigma}^{(k)}_{\mathrm{out}}]_{L_*^\perp,L_*^\perp})}{\sigma_d([{\bSigma}^{(k)}_{\mathrm{out}}]_{L_*^\perp,L_*^\perp})}\leq 1/c,
\]
where the above constant $c$ is slightly lower than the one in \eqref{eq:step1_conclusion2}. 
Next, we show that $[{\bSigma}^{(k)}]_{L_*,L_*}$ is also well-conditioned for sufficiently large $k$, while using again a slightly lower value of $c$: 
\begin{equation}\label{eq:Sigma_L_nonsingular}
\frac{\sigma_1([{\bSigma}^{(k)}]_{L_*,L_*})}{\sigma_d([{\bSigma}^{(k)}]_{L_*,L_*})}
\leq \frac{\sigma_1([{\bSigma}^{(k)}_{\mathrm{in}}]_{L_*,L_*})+\sigma_1([{\bSigma}^{(k)}_{\mathrm{out}}]_{L_*,L_*})}{\sigma_d([{\bSigma}^{(k)}_{\mathrm{in}}]_{L_*,L_*})}\leq 1/c+\epsilon_k\rightarrow 1/c.
\end{equation}
To obtain the first inequality, we substitute the decomposition from \eqref{eq:decompose_Sigmak} into both the numerator and denominator. The scalar normalization factor $1/\tr(\bSigma_{\mathrm{in}}^{(k)}+\bSigma_{\mathrm{out}}^{(k)})$ cancels out, allowing us to bound the ratio using the spectral norms of the unnormalized components. 
The bound on the numerator follows from the fact that $\sigma_1(\cdot)$ is a norm and the bound on the denominator, that is, the inequality $\sigma_d([{\bSigma}^{(k)}_{\mathrm{in}}]_{L_*,L_*}+[{\bSigma}^{(k)}_{\mathrm{out}}]_{L_*,L_*})\geq \sigma_d([{\bSigma}^{(k)}_{\mathrm{in}}]_{L_*,L_*})$, follows from Weyl's Monotonicity Theorem \cite[Corollary III.2.3]{bhatia1996matrix} and the fact that ${\bSigma}^{(k)}_{\mathrm{out}} \in S_{+}(D)$. 
The second inequality follows from \eqref{eq:step1_conclusion1} and the definition of $\epsilon_k$, and the limit follows from \eqref{eq:step1_conclusion3}.

The third inequality of \eqref{eq:inequality_decomposition} follows from the argument in \eqref{eq:monotone}, which can be written for our case as
\begin{equation}\label{eq:TME_separate_monotonicity}
F_{\tilde{\calX}_{\mathrm{in}}}
\!\left(
T_{\tilde{\calX}_{\mathrm{in}}}(\bSigma)
\right)
\leq
F_{\tilde{\calX}_{\mathrm{in}}}(\bSigma),
\qquad
F_{\widehat{\calX}_{\mathrm{out}}}
\!\left(
T_{\widehat{\calX}_{\mathrm{out}}}(\bSigma)
\right)
\leq
F_{\widehat{\calX}_{\mathrm{out}}}(\bSigma).
\end{equation}
Lastly, the fourth inequality of \eqref{eq:inequality_decomposition} follows from \eqref{eq:operator_decomposition2}. 

Combining \eqref{eq:objective_converge}    and the fact that
\( \lim_{k \to \infty} \epsilon_k = 0 \), we see that, asymptotically, the
inequality in  \eqref{eq:inequality_decomposition} becomes an equality. That is,
\begin{align*}
\lim_{k\rightarrow \infty}\frac{n_1}{N}&\left(F_{\tilde{\calX}_{\mathrm{in}}}
\!\left(T_{\tilde{\calX}_{\mathrm{in}}}(
[\bSigma^{(k)}]_{L_*,L_*}
)\right)-F_{\tilde{\calX}_{\mathrm{in}}}
\!\left(
[\bSigma^{(k)}]_{L_*,L_*}
\right)\right)
\\&+
\frac{n_0}{N}
\left(F_{\widehat{\calX}_{\mathrm{out}}}
\!\left(T_{\widehat{\calX}_{\mathrm{out}}}(
[\bSigma^{(k)}]_{L_*^\perp,L_*^\perp})
\right)-F_{\widehat{\calX}_{\mathrm{out}}}
\!\left(
[\bSigma^{(k)}]_{L_*^\perp,L_*^\perp}
\right)\right)
=0.
\end{align*}
Combining the above equation with \eqref{eq:TME_separate_monotonicity} yields 
\[
\lim_{k \to \infty}
\Bigl[
F_{\tilde{\calX}_{\mathrm{in}}}
\!\left(
[\bSigma^{(k)}]_{L_*,L_*}\right)
-
F_{\tilde{\calX}_{\mathrm{in}}}
\!\left(
T_{\tilde{\calX}_{\mathrm{in}}}
\bigl([\bSigma^{(k)}]_{L_*,L_*}
\bigr)
\right)
\Bigr]
= 0.
\]

Now suppose that $\hat{\bSigma}$ is a limiting point of the sequence
\(
\{[\bSigma^{(k)}]_{L_*,L_*}\}_{k\geq 1},
\)
then by the above equation and the continuity of $F_{\tilde{\calX}_{\mathrm{in}}}$:
\[
F_{\tilde{\calX}_{\mathrm{in}}}(\hat{\bSigma})
-
F_{\tilde{\calX}_{\mathrm{in}}}
\!\left(
T_{\tilde{\calX}_{\mathrm{in}}}(\hat{\bSigma})
\right)
= 0.
\]
We recall that for sufficiently large $k$, any $[\bSigma^{(k)}]_{L_*,L_*}$ is uniformly well-conditioned (see \eqref{eq:Sigma_L_nonsingular}) and thus the limit $\hat{\bSigma}$ is nonsingular.  Therefore, \Cref{prop:nonsingular_fixed_point} implies that $\hat{\bSigma}$ satisfies the fixed-point condition in \eqref{eq:zhang16_extend1}.  We also note that the TME estimator $\bSigma_{\mathrm{in},*}$ needs to satisfy the same fixed-point condition. Indeed, this is obtained by differentiation similar to the one used to derive \eqref{eq:fixed_from_Ftilde} (where $\tilde{F}$ replaces $F$). 

Next, we claim that for the dataset $\tilde{\calX}_{\mathrm{in}}$, the solution of the fixed-point equation is unique and since we showed that both 
\( \hat{\bSigma} \) and \(\bSigma_{\mathrm{in},*} \) satisfy it we conclude that 
\( \hat{\bSigma} = \bSigma_{\mathrm{in},*} \). 
The application of \cref{item:convergence} and the assumption $\dssnr=1$ implies that any proper subspace $L$ of $L_*$ contains less than $\dim(L)\frac{N}{D} = \dim(L)\frac{n_1}{d}$ points. It thus follows from \cite[Theorem 1.2]{zhang2016robust} that the fixed point is unique. We thus showed that any limiting point of the sequence $[\bSigma^{(k)}]_{L_*,L_*}$, which we denoted by $\hat{\bSigma}$, equals $\bSigma_{\mathrm{in},*}$, and in particular, the limiting point is unique (if it exists). Its existence follows from the compactness of the set 
\(
\{[\bSigma^{(k)}]_{L_*,L_*}\}_{k \geq 1}.
\)
Indeed, this is a set of positive semidefinite matrices with trace at most \(1\). Therefore, 
\[
\lim_{k \to \infty} [\bSigma^{(k)}]_{L_*,L_*}
= \lim_{k \to \infty} \bU_{L_*}^\top \bSigma^{(k)} \bU_{L_*}
= \bSigma_{\mathrm{in},*}.
\]
We recall that the above equation implies 
\Cref{thm:zhang16_extend} (see the end of step 1).


\subsection{Proof of \Cref{lemma:step1}}

The proof proceeds by analyzing the sequence $\{\bw^{(k)}\}_{k\ge 1}$. It shows that there exists a constant $c>0$ such that the following three properties hold.

\begin{enumerate}
\item If $\bx_{i_1}, \bx_{i_2} \in L_*$, then
\begin{equation}\label{eq:lemma1_w1}
\liminf_{k\to\infty} \frac{w_{i_1}^{(k)}}{w_{i_2}^{(k)}} > c.
\end{equation}

\item If $\bx_{i_1}, \bx_{i_2} \in \R^D \setminus L_*$, then
\begin{equation}\label{eq:lemma1_w2}
\liminf_{k\to\infty} \frac{w_{i_1}^{(k)}}{w_{i_2}^{(k)}} > c.
\end{equation}

\item If $\bx_{i_1} \in L_*$ and $\bx_{i_2} \in \R^D \setminus L_*$, then
\begin{equation}\label{eq:lemma1_w3}
\lim_{k\to\infty} \frac{w_{i_2}^{(k)}}{w_{i_1}^{(k)}} =0 .
\end{equation}
\end{enumerate}

Before verifying these claims we will show that they imply \Cref{lemma:step1}, i.e., \eqref{eq:step1_conclusion1}-\eqref{eq:step1_conclusion3}.  We recall that  \eqref{eq:w_sigma_equivalence1} and \eqref{eq:w_sigma_equivalence2} implies that  
$\bSigma^{(k)}$ is a positive scalar multiple of $\sum_i w_i^{(k)} \bx_i \bx_i^\top$. 
Since \eqref{eq:step1_conclusion1}--\eqref{eq:step1_conclusion3} concern only ratios, 
this scaling factor does not affect the statements we aim to prove. As a result, without loss of generality, we assume for the remainder of the proof:
\[
\bSigma^{(k)} = \sum_i w_i^{(k)} \bx_i \bx_i^\top, \ \bSigma^{(k)}_{\mathrm{in}}=\sum_{\bx_i\in \calX_{\mathrm{in}}}w_{i}^{(k)}\bx_i\bx_i^\top, \ \text{ and } \  \bSigma^{(k)}_{\mathrm{out}}=\sum_{\bx_i\in \calX_{\mathrm{out}}}w_{i}^{(k)}\bx_i\bx_i^\top.
\]

We first conclude  \eqref{eq:step1_conclusion1}.  We note that
\[
\bSigma^{(k)}_{\mathrm{in}}=\sum_{\bx_i\in \calX_{\mathrm{in}}}w_{i}^{(k)}\bx_i\bx_i^\top\psdgeq \min_{\bx_i\in \calX_{\mathrm{in}}}w_{i}^{(k)}\sum_{\bx_i\in \calX_{\mathrm{in}}}\bx_i\bx_i^\top,
\]
and thus 
\begin{equation}
\label{eq:sigma_d_geq}
\sigma_d(\bSigma^{(k)}_{\mathrm{in}})\geq \min_{\bx_i\in \calX_{\mathrm{in}}}w_{i}^{(k)} \cdot \sigma_d\!\left(\sum_{\bx_i\in \calX_{\mathrm{in}}}\bx_i\bx_i^\top\right).     
\end{equation}
Similarly,
\[
\bSigma^{(k)}_{\mathrm{in}}=\sum_{\bx_i\in \calX_{\mathrm{in}}}w_{i}^{(k)}\bx_i\bx_i^\top\psdleq \max_{\bx_i\in \calX_{\mathrm{in}}}w_{i}^{(k)}\sum_{\bx_i\in \calX_{\mathrm{in}}}\bx_i\bx_i^\top,
\]
which implies
\begin{equation}
\label{eq:sigma_1_leq}
\sigma_1(\bSigma^{(k)}_{\mathrm{in}})\leq \max_{\bx_i\in \calX_{\mathrm{in}}}w_{i}^{(k)}\sigma_1(\sum_{\bx_i\in \calX_{\mathrm{in}}}\bx_i\bx_i^\top). 
\end{equation}
We further observe that \eqref{eq:lemma1_w1} yields
\[
\liminf_{k\rightarrow\infty}\frac{\min_{\bx_i\in \calX_{\mathrm{in}}}w_{i}^{(k)}}{\max_{\bx_i\in \calX_{\mathrm{in}}}w_{i}^{(k)}} > c.
\]
Combining \eqref{eq:sigma_d_geq}, \eqref{eq:sigma_1_leq} and the above equation, we conclude 
\eqref{eq:step1_conclusion1}:
\[
\liminf_{k\rightarrow\infty}\frac{\sigma_d\!\left(\bSigma^{(k)}_{\mathrm{in}}\right)}{ \sigma_1\!\left(\bSigma^{(k)}_{\mathrm{in}}\right)}\geq  \liminf_k\frac{\min_{\bx_i\in \calX_{\mathrm{in}}}w_{i}^{(k)}}{\max_{\bx_i\in \calX_{\mathrm{in}}}w_{i}^{(k)}} \frac{\sigma_d\left(\sum_{\bx_i\in \calX_{\mathrm{in}}}\bx_i\bx_i^\top \right)}{\sigma_1\left(\sum_{\bx_i\in \calX_{\mathrm{in}}}\bx_i\bx_i^\top \right)}>c \frac{\sigma_d\left(\sum_{\bx_i\in \calX_{\mathrm{in}}}\bx_i\bx_i^\top \right)}{\sigma_1\left(\sum_{\bx_i\in \calX_{\mathrm{in}}}\bx_i\bx_i^\top \right)}.
\]
This proves \eqref{eq:step1_conclusion1} with $c:=c \cdot {\sigma_d\left(\sum_{\bx_i\in \calX_{\mathrm{in}}}\bx_i\bx_i^\top \right)}/{\sigma_1\left(\sum_{\bx_i\in \calX_{\mathrm{in}}}\bx_i\bx_i^\top \right)}$, which only depends on the fixed data set $\calX$.

The proof of \eqref{eq:step1_conclusion2} is the same, where $\calX_{\mathrm{in}}$ is replaced with $\calX_{\mathrm{out}}$, and $\bSigma^{(k)}_{\mathrm{in}}$ is replaced with $\bSigma^{(k)}_{\mathrm{out}}$.

To prove \eqref{eq:step1_conclusion3}, we first apply \eqref{eq:lemma1_w3} to conclude  $\lim_{k\rightarrow\infty}\frac{\min_{\bx_i\in \calX_{\mathrm{out}}}w_{i}^{(k)}}{\max_{\bx_i\in \calX_{\mathrm{in}}}w_{i}^{(k)}}=0$. Applying the nonnegativity of singular values,   \eqref{eq:sigma_d_geq},  \eqref{eq:sigma_1_leq}, and the latter observation, yields \eqref{eq:step1_conclusion3}:
\[
0 \leq \lim_{k\rightarrow\infty}\frac{\sigma_1\!\left(\bSigma^{(k)}_{\mathrm{out}}\right)}
{\sigma_d\!\left(\bSigma^{(k)}_{\mathrm{in}}\right)}\leq \lim_{k\rightarrow\infty}\frac{\min_{\bx_i\in \calX_{\mathrm{out}}}w_{i}^{(k)}}{\max_{\bx_i\in \calX_{\mathrm{in}}}w_{i}^{(k)}}
\frac{\sigma_1(\sum_{\bx_i\in \calX_{\mathrm{out}}}\bx_i\bx_i^\top)}{\sigma_d(\sum_{\bx_i\in \calX_{\mathrm{out}}}\bx_i\bx_i^\top)}=0.
\]

It remains to prove \eqref{eq:lemma1_w1}-\eqref{eq:lemma1_w3}. We pursue this in two different steps. 

\textbf{Step 1 in proving \eqref{eq:lemma1_w1}--\eqref{eq:lemma1_w3}.} By passing to a subsequence indexed by $\calK$, we may assume that the ordering of the weights remains fixed. That is, there exists a permutation $(i_1,\ldots,i_N)$ of $(1,\ldots,N)$ such that
\begin{equation}\label{eq:subsequence_property1}
w_{i_1}^{(k)} \ge w_{i_2}^{(k)} \ge \cdots \ge w_{i_N}^{(k)}, \qquad \forall\, k \in \calK.
\end{equation}
Furthermore, by extracting a further subsequence if necessary, we can assume that the limits of all consecutive weight ratios exist:
\begin{equation}\label{eq:subsequence_property2}
\lim_{\substack{k\to\infty \\ k\in\calK}} \frac{w_{i_{j+1}}^{(k)}}{w_{i_j}^{(k)}} \quad \text{exists for all } 1 \le j \le N-1.
\end{equation}
The goal of this step is to prove: 
\begin{equation}\label{eq:lemma1_step1_conclusion}
\begin{aligned}
\lim_{\substack{k\to\infty \\ k\in\calK}} \frac{w_{i_2}^{(k)}}{w_{i_1}^{(k)}} \;&=\; 0 && \text{for all } \bx_{i_1} \in \calX_{\mathrm{in}} \text{ and } \bx_{i_2} \in \calX_{\mathrm{out}}, \\
c_0 \;\le\; \lim_{\substack{k\to\infty \\ k\in\calK}} \frac{w_{i_1}^{(k)}}{w_{i_2}^{(k)}} \;&\le\; \frac{1}{c_0} && \text{for any pair of inliers, or pair of outliers } (\bx_{i_1}, \bx_{i_2}).
\end{aligned}
\end{equation}

Because the weights are nonnegative and ordered according to \eqref{eq:subsequence_property1}, the limits of the consecutive ratios in \eqref{eq:subsequence_property2} must lie in the interval $[0,1]$. Based on these limits, we can partition the index set $\{1,\ldots,N\}$ into $T$ disjoint subsets, $\calI_1,\ldots,\calI_T$, grouping together indices whose weights scale at the same rate. This partition satisfies the following two properties:
\begin{enumerate}
    \item 
    For any two indices $i_1,i_2 \in \calI_\ell$, where $1 \leq \ell \leq T$, their ratio converges to a strictly positive, finite constant. That is, there exists a constant $c_0 \in (0,1]$ such that
    \[
    c_0 \;\le\; \lim_{\substack{k\to\infty \\ k\in\calK}} \frac{w_{i_1}^{(k)}}{w_{i_2}^{(k)}} \;\le\; \frac{1}{c_0}.
    \]
    
    \item
    For any two indices $i_1 \in \calI_\ell$ and $i_2 \in \calI_m$ belonging to different subsets ($1 \leq \ell \neq m \leq T$), their ratio diverges or vanishes:
    \[
    \lim_{\substack{k\to\infty \\ k\in\calK}} \frac{w_{i_1}^{(k)}}{w_{i_2}^{(k)}} \in \{0,\infty\}.
    \]
\end{enumerate}

Equivalently, for each cluster $\calI_\ell$, we can define a representative scale sequence $\epsilon_\ell^{(k)}$ such that the weights $w_i^{(k)}$ for $i \in \calI_\ell$ are strictly bounded by a multiple of $\epsilon_\ell^{(k)}$. Specifically, we have
\[
c_0 \epsilon_\ell^{(k)} \leq w_i^{(k)} \leq \frac{1}{c_0}\epsilon_\ell^{(k)}, \qquad \forall\, i \in \calI_\ell.
\]
Moreover, these representative sequences $\{\epsilon_\ell^{(k)}\}_{\ell=1}^\top$ satisfy a strict asymptotic hierarchy,
\[
\epsilon_1^{(k)} \gg \epsilon_2^{(k)} \gg \cdots \gg \epsilon_T^{(k)},
\]
meaning that
\begin{equation}\label{eq:epsilon_ratio}
\lim_{\substack{k\to\infty \\ k\in\calK}} \frac{\epsilon_\ell^{(k)}}{\epsilon_{\ell+1}^{(k)}} = \infty \quad \text{for all } \ell = 1,\ldots,T-1.
\end{equation}

Let $L_\ell$ denote the subspace spanned by the vectors associated with the first $\ell$ clusters, and let $D_\ell$ be its dimension:
\[
L_\ell \;=\; \Sp\bigl(\{\bx_i\}_{i \in \calI_1 \cup \cdots \cup \calI_\ell}\bigr),
\qquad
D_\ell \;=\; \dim(L_\ell).
\]
By convention, we set $D_0 = 0$. Note that $D_\ell < D_{\ell+1}$ strictly. Indeed, if the vectors in $\calI_{\ell+1}$ were fully contained in $L_\ell$, they would share the same asymptotic weight order as the prior clusters, contradicting the asymptotic partition.

Because of the asymptotic hierarchy of the weights, the matrix 
$
\sum_{i=1}^N w_i^{(k)} \bx_i \bx_i^\top
$
has exactly $D_\ell - D_{\ell-1}$ eigenvalues that scale on the order of $\epsilon_\ell^{(k)}$. Consequently, its log-determinant can be expanded as the sum of the logs of its eigenvalues, $\sigma_d$:
\[
\log \det\!\left( \sum_{i=1}^N w_i^{(k)} \bx_i \bx_i^\top \right) = \sum_{d=1}^D \log\sigma_d\!\left( \sum_{i=1}^N w_i^{(k)} \bx_i \bx_i^\top \right).
\]
Asymptotically, this behaves as
\begin{equation}\label{eq:asymptoticorder1}
\sum_{\ell=1}^T (D_\ell - D_{\ell-1}) \log \epsilon_\ell^{(k)}
\;=\;
\sum_{\ell=1}^{T-1} D_\ell \log \frac{\epsilon_\ell^{(k)}}{\epsilon_{\ell+1}^{(k)}}
\;+\;
D_T \log \epsilon_{T}^{(k)}.
\end{equation}

Similarly, let $N_\ell$ be the total number of points in the first $\ell$ clusters: 
$N_\ell \;=\; \bigl| \calI_1 \cup \cdots \cup \calI_\ell \bigr|$,
with $N_0 = 0$. Then the sum of the log-weights, 
$\sum_{i=1}^N \log w_i^{(k)}$, 
has the asymptotic order
\begin{equation}\label{eq:asymptoticorder2}
\sum_{\ell=1}^T (N_\ell - N_{\ell-1}) \log \epsilon_\ell^{(k)}
\;=\;
\sum_{\ell=1}^{T-1} N_\ell \log \frac{\epsilon_\ell^{(k)}}{\epsilon_{\ell+1}^{(k)}}
\;+\;
N_T \log \epsilon_{T}^{(k)}.
\end{equation}

Finally, we combine \eqref{eq:asymptoticorder1} and \eqref{eq:asymptoticorder2}. Noting that $D_T = D$ and $N_T = N$ by construction, we find that $\tilde{F}(\bw^{(k)})$, defined in \eqref{eq:TME_objective2}, is of the asymptotic order
\begin{equation}
\label{eq:asymptotic_sum_log}
\sum_{\ell=1}^{T-1} \left(\frac{D_\ell}{D}-\frac{N_\ell}{N}\right) \log \frac{\epsilon_\ell^{(k)}}{\epsilon_{\ell+1}^{(k)}}.
\end{equation}

Since $\tilde{F}(\bw^{(k)})$ is nonincreasing, it must remain bounded as $k \to \infty$. By \cref{item:convergence} in \Cref{thm:zhang16_extend}, we have
\[
\frac{D_\ell}{D} \;\ge\; \frac{N_\ell}{N} \qquad \text{for all } 1 \le \ell \le T,
\]
with equality holding if and only if $L_\ell = \R^D$ or $L_\ell = L_*$. Recalling from \eqref{eq:epsilon_ratio} that the log-ratios diverge to infinity, the asymptotic sum for $\tilde{F}(\bw^{(k)})$ will diverge to $+\infty$ unless every coefficient $(D_\ell/D - N_\ell/N)$ vanishes. This formally restricts us to two scenarios, though we will show that the first one is impossible. 

\medskip
\noindent
\textbf{Case 1:}   $T = 1$, so that the asymptotic sum in \eqref{eq:asymptotic_sum_log} is empty. In this case, all weights $\{w_i^{(k)}\}_{i=1}^N$ share the same asymptotic order for $k \in \mathcal{K}$. This corresponds to the scenario where $L_1 = \mathbb{R}^D$. We show this case is impossible. 

 Combining the fact that all $w_i^{(k)}$ are of the same asymptotic order with the assumption $\dssnr=1$, we conclude that the limit of $\bSigma^{(k+1)}$ as $k$ approaches infinity, where $k\in\calK$, is nonsingular.  \Cref{prop:nonsingular_fixed_point} then implies that this limit is a fixed point. 
This leads to a contradiction. Indeed, denoting this fixed point by $\hat{\bSigma}$,   Proposition~\ref{prop:nonsingular_fixed_point} implies
\begin{equation}\label{eq:case1_1}
\sum_{\bx\in\calX}\frac{(\hat{\bSigma}^{-1/2}\bx)(\hat{\bSigma}^{-1/2}\bx)^\top}{\|\hat{\bSigma}^{-1/2}\bx\|^2}
=\frac{N}{D}\bI.
\end{equation}

Since for every inlier  $\bx\in{\calX}_{\mathrm{in}}$, the vector $\hat{\bSigma}^{-1/2}\bx$ lies in a $d$-dimensional subspace (denoted by $\hat{L}$), it follows  from \eqref{eq:case1_1} that
\begin{equation}\label{eq:case1_3}
\sum_{\bx\in{\calX}_{\mathrm{in}}}
\frac{(\hat{\bSigma}^{-1/2}\bx)(\hat{\bSigma}^{-1/2}\bx)^\top}
{\|\hat{\bSigma}^{-1/2}\bx\|^2}
\psdleq \left\|\sum_{\bx\in\calX}\frac{(\hat{\bSigma}^{-1/2}\bx)(\hat{\bSigma}^{-1/2}\bx)^\top}{\|\hat{\bSigma}^{-1/2}\bx\|^2}\right\|\bP_{\hat{L}}=\frac{N}{D}\bP_{\hat{L}}.
\end{equation}
Moreover,
\begin{equation}\label{eq:case1_2}
\tr\!\left(\sum_{\bx\in{\calX}_{\mathrm{in}}}
\frac{(\hat{\bSigma}^{-1/2}\bx)(\hat{\bSigma}^{-1/2}\bx)^\top}
{\|\hat{\bSigma}^{-1/2}\bx\|^2}\right)=\sum_{\bx\in{\calX}_{\mathrm{in}}}\tr\!\left(
\frac{(\hat{\bSigma}^{-1/2}\bx)(\hat{\bSigma}^{-1/2}\bx)^\top}
{\|\hat{\bSigma}^{-1/2}\bx\|^2}\right)
=|{\calX}_{\mathrm{in}}|
\geq \frac{N}{D}d.
\end{equation}
Combining \eqref{eq:case1_3}, \eqref{eq:case1_2}, and $\dim(\hat{L})=d$, we have 
\[
\sum_{\bx\in{\calX}_{\mathrm{in}}}
\frac{(\hat{\bSigma}^{-1/2}\bx)(\hat{\bSigma}^{-1/2}\bx)^\top}
{\|\hat{\bSigma}^{-1/2}\bx\|^2}
=\frac{N}{D}\bP_{\hat{L}}.
\]
Consequently,
\[
\sum_{\bx\in{\calX}_{\mathrm{out}}}
\frac{(\hat{\bSigma}^{-1/2}\bx)(\hat{\bSigma}^{-1/2}\bx)^\top}
{\|\hat{\bSigma}^{-1/2}\bx\|^2}
=\frac{N}{D}\bI-\frac{N}{D}\bP_{\hat{L}}
=\frac{N}{D}\bP_{\hat{L}^\perp}.
\]
This implies that for all outliers $\bx\in{\calX}_{\mathrm{out}}$, $\hat{\bSigma}^{-1/2}\bx$ lie in a $(D-d)$-dimensional subspace, which also implies that the outliers themselves lie in a $(D-d)$-dimensional subspace, and contradicts \cref{item:convergence} of \Cref{thm:zhang16_extend} as  this subspace contains a fraction $(D-d)/D$ of points.

\medskip
\noindent
\textbf{Case 2:} $T = 2$. The weights partition into exactly two scales: $\ell=1$ represents the inliers with $L_1 = L_*$, and $\ell_2$ represents the outliers, where $L_2 = \R^D$.

All inlier weights $\{w_i^{(k)}\}_{i:\,\bx_i \in \calX_{\mathrm{in}}}$ share the same asymptotic order, all outlier weights $\{w_i^{(k)}\}_{i:\,\bx_i \in \calX_{\mathrm{out}}}$ share a strictly smaller asymptotic order, and the ratio of any outlier weight to any inlier weight vanishes. More precisely, it satisfies \eqref{eq:lemma1_step1_conclusion}. 

There are no other cases. Indeed, by the analysis before Case 1, we have that for all $1\leq \ell\leq T$, $L_\ell = \R^D$ or $L_\ell = L_*$, that is, $D_{\ell}=d$ or $D$. In addition, by definition, the sequence $D_\ell$ is strictly increasing. Therefore, case 1 happens when there does not exist $\ell$ such that $D_\ell=d$. Then $D_\ell=D$ for $1 \leq \ell \leq T$, and since $D_\ell$ is strictly increasing, we have $T=1$, $D_1=D$ and $L_1=\R^D$. 
Case 2 corresponds to the remaining case that there exists $\ell$ such that  $D_\ell=d$. Since $D_\ell$ is strictly increasing, we have $T=2$ with $D_1=d$ and $D_2=D$. 
Since only case 2 is valid, we established  \eqref{eq:lemma1_step1_conclusion}.

\textbf{Step 2 in proving \eqref{eq:lemma1_w1}--\eqref{eq:lemma1_w3}.}
Suppose, to the contrary, that \eqref{eq:lemma1_w1}--\eqref{eq:lemma1_w3} do not hold. Then there exists a general subsequence of $\{1,2,\cdots\}$ indexed by $\calK_1$ such that at least one of the following conditions is satisfied:
\begin{itemize}
\item For some inliers $\bx_{i_1}$ and $\bx_{i_2}$,
\[
\lim_{k\to\infty,\; k\in \calK_1} \frac{w_{i_1}^{(k)}}{w_{i_2}^{(k)}} = 0.
\]

\item For some outliers $\bx_{i_1}$ and $\bx_{i_2}$,
\[
\lim_{k\to\infty,\; k\in \calK_1} \frac{w_{i_1}^{(k)}}{w_{i_2}^{(k)}} = 0.
\]
\item For some inlier $\bx_{i_1}$ and some outlier $\bx_{i_2}$,
\[
\lim_{k\to\infty,\; k\in \calK_1} \frac{w_{i_2}^{(k)}}{w_{i_1}^{(k)}} >0.
\]
\end{itemize}

Since each $\bw^{(k)}$ is nonnegative, we may extract a further subsequence indexed by $\calK_2 \subset \calK_1$ such that, for all $k \in \calK_2$, properties \eqref{eq:subsequence_property1}--\eqref{eq:subsequence_property2} hold. By Step 1, this implies that \eqref{eq:lemma1_step1_conclusion} holds for $\calK_2$. Because $\calK_2$ is a subsequence of $\calK_1$, it must inherit the limit of whichever of the above three equations  $\calK_1$ satisfied. However, \eqref{eq:lemma1_step1_conclusion} directly contradicts all of these equations. This contradiction completes the proof of \eqref{eq:lemma1_w1}--\eqref{eq:lemma1_w3}.

\subsection{Proof of \Cref{lemma:step2}}

The proof will repeatedly make use of the block matrix inversion formula. Specifically, let
\begin{equation}
\label{eq:def_sigma_prime}
\bSigma'
\;=\;
\bigl(
[\bSigma]_{L_*,L_*}
-
[\bSigma]_{L_*,L_*^\perp}
[\bSigma]_{L_*^\perp,L_*^\perp}^{-1}
[\bSigma]_{L_*^\perp,L_*}
\bigr)^{-1}.
\end{equation}
Then the inverse of $\bSigma$ admits the block representation
\begin{equation}\label{eq:block_representation}
\bSigma^{-1}
=
\begin{pmatrix}
\bSigma'
&
-\bSigma'[\bSigma]_{L_*,L_*^\perp}[\bSigma]_{L_*^\perp,L_*^\perp}^{-1}
\\[0.6em]
-[\bSigma]_{L_*^\perp,L_*^\perp}^{-1}[\bSigma]_{L_*^\perp,L_*}\bSigma'
&
[\bSigma]_{L_*^\perp,L_*^\perp}^{-1}
+
[\bSigma]_{L_*^\perp,L_*^\perp}^{-1}
[\bSigma]_{L_*^\perp,L_*}\bSigma'
[\bSigma]_{L_*,L_*^\perp}[\bSigma]_{L_*^\perp,L_*^\perp}^{-1}
\end{pmatrix}.
\end{equation}

We first estimate the magnitude of the quantities above. 
Since both $F$ and $T$ are invariant to the scale of $\bSigma$, without loss of generality (WLOG), we may assume that $\sigma_d\!\left(\bSigma_{\mathrm{in}}\right)=1$. Then we have 
\begin{align}
&\sigma_1\!\left(\bSigma_{\mathrm{in}}\right)\leq \frac{1}{c},\,\,
\sigma_1(\bSigma_{\mathrm{out}})=\epsilon, \,\,\sigma_{D-d}\!\left(
[\bSigma_{\mathrm{out}}]_{L_*^\perp,L_*^\perp}
\right)\geq c\epsilon,  \label{eq:estimations_magnitude1}
\\
&\|[\bSigma]_{L_*^\perp,L_*}\|=\|[\bSigma_{\mathrm{out}}]_{L_*^\perp,L_*}\|\leq \sigma_1(\bSigma_{\mathrm{out}})=\epsilon, \label{eq:estimations_magnitude2}
\\
&\|[\bSigma]_{L_*^\perp,L_*^\perp}^{-1}\|=\|[\bSigma_{\mathrm{out}}]_{L_*^\perp,L_*^\perp}^{-1}\|=\frac{1}{\sigma_{D-d}([\bSigma_{\mathrm{out}}]_{L_*^\perp,L_*^\perp}^{-1})}\leq \frac{1}{c\epsilon},\label{eq:estimations_magnitude3}\\
&1-\epsilon= \sigma_d([\bSigma_{\mathrm{in}}]_{L_*,L_*})-\sigma_1([\bSigma_{\mathrm{out}}]_{L_*,L_*})\leq \sigma_i([\bSigma]_{L_*,L_*}) \nonumber \\
&\qquad \leq \sigma_1([\bSigma_{\mathrm{in}}]_{L_*,L_*})+\sigma_1([\bSigma_{\mathrm{out}}]_{L_*,L_*})\leq 1/c+\epsilon
\label{eq:estimations_magnitude4}\\
&1-\epsilon-\epsilon/c\leq \sigma_i(\bSigma')\leq 1/c+\epsilon+\epsilon/c,\,\,\text{for all $1\leq i\leq d$}.\label{eq:estimations_magnitude5}
\end{align}
We note that \eqref{eq:estimations_magnitude5} follows from the fact that all eigenvalues of $[\bSigma]_{L_*,L_*}$ are in $[1-\epsilon,1/c+\epsilon]$ (see \eqref{eq:estimations_magnitude4}) and the following inequality, which directly follows from \eqref{eq:estimations_magnitude2} and \eqref{eq:estimations_magnitude3}: 
$\|[\bSigma]_{L_*,L_*^\perp} [\bSigma]_{L_*^\perp,L_*^\perp}^{-1} [\bSigma]_{L_*^\perp,L_*}\|\leq \epsilon/c$.

To prove \eqref{eq:objective_decomposition}, it suffices to show that
\begin{align}
&\frac{\bx^\top \bSigma^{-1} \bx}{(\bU_{L_*}^\top \bx)^\top [\bSigma]_{L_*,L_*}^{-1} (\bU_{L_*}^\top \bx)} = 1 + O(\epsilon), 
\quad \text{for } \bx \in \calX_{\mathrm{in}}, \label{eq:lemma_step2_proof1} \\
&\frac{\bx^\top \bSigma^{-1} \bx}{(\bU_{L_*^\perp}^\top \bx)^\top [\bSigma]_{L_*^\perp,L_*^\perp}^{-1} (\bU_{L_*^\perp}^\top \bx)} = 1 + O(\epsilon), 
\quad \text{for } \bx \in \calX_{\mathrm{out}}, \label{eq:lemma_step2_proof2} \\
&\log \det(\bSigma) = \log \det([\bSigma]_{L_*,L_*}) + \log \det([\bSigma]_{L_*^\perp,L_*^\perp}) + O(\epsilon). \label{eq:lemma_step2_proof3}
\end{align}

\medskip
\noindent
To prove \eqref{eq:lemma_step2_proof1}, observe that for inliers $\bx \in \calX_{\mathrm{in}}$, \eqref{eq:block_representation} implies
\[
\bx^\top \bSigma^{-1} \bx = (\bU_{L_*}^\top \bx)^\top \bSigma' (\bU_{L_*}^\top \bx).
\]
Combining the definition of $\bSigma'$ in \eqref{eq:def_sigma_prime} and the identity $(\bA+\bB)^{-1}-\bA^{-1}=-(\bA+\bB)^{-1}\bB\bA^{-1}$ yields
\[
\bSigma'-[\bSigma]_{L_*,L_*}^{-1} = -(\bSigma')((\bSigma')^{-1}-[\bSigma]_{L_*,L_*})[\bSigma]_{L_*,L_*}^{-1}
= \bSigma'([\bSigma]_{L_*,L_*^\perp}
[\bSigma]_{L_*^\perp,L_*^\perp}^{-1}
[\bSigma]_{L_*^\perp,L_*})[\bSigma]_{L_*,L_*}^{-1}.
\]
Combining the above two equations result in
\begin{equation}
\label{eq:lemma_step2_proof1_1}
\frac{\bx^\top \bSigma^{-1} \bx}{(\bU_{L_*}^\top \bx)^\top [\bSigma]_{L_*,L_*}^{-1} (\bU_{L_*}^\top \bx)}-1 =\frac{(\bU_{L_*}^\top \bx)^\top \bSigma'([\bSigma]_{L_*,L_*^\perp}
[\bSigma]_{L_*^\perp,L_*^\perp}^{-1}
[\bSigma]_{L_*^\perp,L_*})[\bSigma]_{L_*,L_*}^{-1}(\bU_{L_*}^\top \bx)}{(\bU_{L_*}^\top \bx)^\top [\bSigma]_{L_*,L_*}^{-1} (\bU_{L_*}^\top \bx)}.
\end{equation}

Applying the estimations in \eqref{eq:estimations_magnitude2}-\eqref{eq:estimations_magnitude5} results in an estimation of the numerator of the RHS of \eqref{eq:lemma_step2_proof1_1} for $\bx \in \calX_{\mathrm{in}}$:
\begin{multline*}
(\bU_{L_*}^\top \bx)^\top \bSigma' ([\bSigma]_{L_*,L_*^\perp}
[\bSigma]_{L_*^\perp,L_*^\perp}^{-1}
[\bSigma]_{L_*^\perp,L_*})[\bSigma]_{L_*,L_*}^{-1}(\bU_{L_*}^\top \bx) \\
\leq \|\bU_{L_*}^\top \bx\|^2\|[\bSigma]_{L_*^\perp,L_*^\perp}^{-1}\|\|[\bSigma]_{L_*,L_*^\perp}\|^2\frac{\|\bSigma'\|}{\sigma_d([\bSigma]_{L_*,L_*})}\leq O(\epsilon).    
\end{multline*}
We remark that since the dataset is fixed, $\|\bU_{L_*}^\top \bx\|= O(1)$ for $\bx \in \calX_{\mathrm{in}}$. 

Furthermore, applying \eqref{eq:estimations_magnitude4} yields an estimate of the denominator of the RHS of \eqref{eq:lemma_step2_proof1_1} for $\bx \in \calX_{\mathrm{in}}$: 
\begin{equation}\label{eq:inlier_diff}
(\bU_{L_*}^\top \bx)^\top [\bSigma]_{L_*,L_*}^{-1} (\bU_{L_*}^\top \bx)\geq \frac{\|\bU_{L_*}^\top \bx\|^2
}{\sigma_1\!\left([\bSigma]_{L_*,L_*}\right)}=O(1).
\end{equation}
Applying the above two estimates to \eqref{eq:lemma_step2_proof1_1}, concludes the proof of  \eqref{eq:lemma_step2_proof1}.

To prove \eqref{eq:lemma_step2_proof2}, observe that for $\bx \in \calX_{\mathrm{out}}$,  the  denominator of the LHS of \eqref{eq:lemma_step2_proof2} is lower bounded by
\begin{equation}\label{eq:lemma_step2_proof2_1}
(\bU_{L_*^\perp}^\top \bx)^\top [\bSigma]_{L_*^\perp,L_*^\perp}^{-1} (\bU_{L_*^\perp}^\top \bx)\geq \frac{\|\bU_{L_*^\perp}^\top \bx\|^2}{\sigma_1([\bSigma]_{L_*^\perp,L_*^\perp})}=\frac{\|\bU_{L_*^\perp}^\top \bx\|^2}{\sigma_1([\bSigma_{\mathrm{out}}]_{L_*,L_*})}\geq \frac{\|\bU_{L_*^\perp}^\top \bx\|^2}{\sigma_1(\bSigma_{\mathrm{out}})}=O(1/\epsilon).
\end{equation}
We further note that \eqref{eq:block_representation} and \eqref{eq:estimations_magnitude5} implies that 
\begin{equation}\label{eq:inverse_est1}
\|[\bSigma^{-1}]_{L_*,L_*}\|=\|\bSigma'\|=\sigma_1(\bSigma')\leq O(1).
\end{equation}
In addition, \eqref{eq:block_representation},  \eqref{eq:estimations_magnitude2}, \eqref{eq:estimations_magnitude3}, and \eqref{eq:estimations_magnitude5} imply \begin{equation}\label{eq:inverse_est2}\|[\bSigma^{-1}]_{L_*,L_*^\perp}\|\leq \|[\bSigma]_{L_*^\perp,L_*^\perp}^{-1}\|\|[\bSigma]_{L_*^\perp,L_*}\|\sigma_1(\bSigma') \leq O(1),\end{equation} and \begin{equation}\label{eq:inverse_est3}\|[\bSigma]_{L_*^\perp,L_*^\perp}^{-1}
[\bSigma]_{L_*^\perp,L_*}\bSigma'
[\bSigma]_{L_*,L_*^\perp}[\bSigma]_{L_*^\perp,L_*^\perp}^{-1}\|\leq  \|[\bSigma]_{L_*^\perp,L_*^\perp}^{-1}\|^2\|[\bSigma]_{L_*^\perp,L_*}\|^2\sigma_1(\bSigma') \leq O(1).\end{equation}

Using the above estimates we bound the difference of the numerator and denominator of the LHS of \eqref{eq:lemma_step2_proof2} for $\bx \in \calX_{\mathrm{out}}$. We first expand $\bx^\top \bSigma^{-1} \bx$ using the four components of $\bSigma^{-1} $ specified in \eqref{eq:block_representation} and then bound the grouped components using \eqref{eq:inverse_est1}-\eqref{eq:inverse_est3}:
\begin{align}\nonumber
&\bx^\top \bSigma^{-1} \bx - (\bU_{L_*^\perp}\bx)^\top [\bSigma]_{L_*^\perp,L_*^\perp}^{-1}(\bU_{L_*^\perp}\bx)\\
\nonumber
&= \Big((\bU_{L_*}\bx)^\top [\bSigma^{-1}]_{L_*,L_*}(\bU_{L_*}\bx)+(\bU_{L_*^\perp}\bx)^\top [\bSigma^{-1}]_{L_*^\perp,L_*}(\bU_{L_*}\bx)\\\nonumber&+(\bU_{L_*}\bx)^\top [\bSigma^{-1}]_{L_*,L_*^\perp}(\bU_{L_*^\perp}\bx)+(\bU_{L_*^\perp}\bx)^\top [\bSigma^{-1}]_{L_*^\perp,L_*^\perp}(\bU_{L_*^\perp}\bx)\Big)-(\bU_{L_*^\perp}\bx)^\top [\bSigma]_{L_*^\perp,L_*^\perp}^{-1}(\bU_{L_*^\perp}\bx)\\\nonumber
&=(\bU_{L_*}\bx)^\top [\bSigma^{-1}]_{L_*,L_*}(\bU_{L_*}\bx)+(\bU_{L_*^\perp}\bx)^\top [\bSigma^{-1}]_{L_*^\perp,L_*}(\bU_{L_*}\bx)+(\bU_{L_*}\bx)^\top [\bSigma^{-1}]_{L_*,L_*^\perp}(\bU_{L_*^\perp}\bx)\\\nonumber&+(\bU_{L_*^\perp}\bx)^\top [\bSigma]_{L_*^\perp,L_*^\perp}^{-1}
[\bSigma]_{L_*^\perp,L_*}\bSigma'
[\bSigma]_{L_*,L_*^\perp}[\bSigma]_{L_*^\perp,L_*^\perp}^{-1}(\bU_{L_*^\perp}\bx)\\\nonumber&\leq\|\bU_{L_*}\bx\|^2\|[\bSigma^{-1}]_{L_*,L_*}\|+2\|\bU_{L_*}\bx\|\|\bU_{L_*^\perp}\bx\|\|[\bSigma^{-1}]_{L_*^\perp,L_*}\|\\\nonumber&+ \|\bU_{L_*^\perp}\bx\|^2\|[\bSigma]_{L_*^\perp,L_*^\perp}^{-1}
[\bSigma]_{L_*^\perp,L_*}\bSigma'
[\bSigma]_{L_*,L_*^\perp}[\bSigma]_{L_*^\perp,L_*^\perp}^{-1}\|
\\&\leq  O(1).\label{eq:outlier_diff}
\end{align}
Combining this bound with \eqref{eq:lemma_step2_proof2_1} concludes the proof of \eqref{eq:lemma_step2_proof2}.

To prove \eqref{eq:lemma_step2_proof3}, we first note that since $\bSigma = \bSigma_{\mathrm{in}}+ \bSigma_{\mathrm{out}}$ and $\bSigma_{\mathrm{in}}=\begin{pmatrix}[\bSigma_{\mathrm{in}}]_{L_*,L_*}&0\\0&0\end{pmatrix}$,
\begin{multline*}
\left\|\bSigma-\begin{pmatrix}[\bSigma]_{L_*,L_*}&0\\0&0\end{pmatrix}\right\|=\left\|\bSigma_{\mathrm{out}}-\begin{pmatrix}[\bSigma_{\mathrm{out}}]_{L_*,L_*}&0\\0&0\end{pmatrix}\right\|\\=  \|\bSigma_{\mathrm{out}}-\bP_{L_*}\bSigma_{\mathrm{out}}\bP_{L_*}\| \leq 2\|\bSigma_{\mathrm{out}}\|=2\sigma_1(\bSigma_{\mathrm{out}}).    
\end{multline*}
Combining this bound with Weyl's inequality implies that for $1\leq i\leq d$, 
\[
\left|\sigma_i([\bSigma]_{L_*,L_*})-\sigma_i(\bSigma)\right|=\left|\sigma_i\begin{pmatrix}[\bSigma]_{L_*,L_*}&0\\0&0\end{pmatrix}-\sigma_i(\bSigma)\right|\leq \left\|\bSigma-\begin{pmatrix}[\bSigma]_{L_*,L_*}&0\\0&0\end{pmatrix}\right\|\leq  2\sigma_1([\bSigma]_{\mathrm{out}}).
\]
 Since $\sigma_i([\bSigma]_{L_*,L_*})=O(1)$ (see \eqref{eq:estimations_magnitude4}) and $\sigma_1([\bSigma]_{\mathrm{out}})\leq \epsilon$ (see \eqref{eq:estimations_magnitude1}) , we have
\begin{equation}\label{eq:lemma_step2_proof3.1}
\left|\frac{\sigma_i([\bSigma]_{L_*,L_*})}{\sigma_i(\bSigma)}-1\right|=\left|\frac{\sigma_i([\bSigma]_{L_*,L_*})-\sigma_i(\bSigma)}{\sigma_i(\bSigma)}\right|\leq O(\epsilon),\,\,1\leq i\leq d.
\end{equation}

For the other indices, that is, $d+1\leq i\leq D$, note that \eqref{eq:block_representation} implies 
\[
\bSigma^{-1}=\begin{pmatrix}
0 & 0\\
0& [\bSigma]_{L_*^\perp,L_*^\perp}^{-1}
\end{pmatrix}+\begin{pmatrix}
\bSigma' & -\bSigma' [\bSigma]_{L_*,L_*^\perp}[\bSigma]_{L_*^\perp,L_*^\perp}^{-1}\\
-[\bSigma]_{L_*^\perp,L_*^\perp}^{-1}[\bSigma]_{L_*^\perp,L_*}\bSigma'& [\bSigma]_{L_*^\perp,L_*^\perp}^{-1}[\bSigma]_{L_*^\perp,L_*}\bSigma'[\bSigma]_{L_*^\perp,L_*^\perp}^{-1}[\bSigma]_{L_*^\perp,L_*} 
\end{pmatrix}.
\]
To apply Weyl's inequality, we observe $$\sigma_{D-i}(\bSigma^{-1})=\sigma_i(\bSigma)^{-1} \ \text{ and } \ \sigma_{D-i}\begin{pmatrix}
0 & 0\\
0& [\bSigma]_{L_*^\perp,L_*^\perp}^{-1}
\end{pmatrix}=\sigma_{D-i}([\bSigma]_{L_*^\perp,L_*^\perp}^{-1})=\sigma_{i-d}([\bSigma]_{L_*^\perp,L_*^\perp})^{-1},$$ where we note that $[\bSigma]_{L_*^\perp,L_*^\perp}$ has size $D-d$. 
Applying the above two equations, \eqref{eq:inverse_est1}-\eqref{eq:inverse_est3}, and Weyl's inequality yields for $ d+1\leq i\leq D$, 
\begin{align*}
&\left|\sigma_i(\bSigma)^{-1}- \sigma_{i-d}([\bSigma]_{L_*^\perp,L_*^\perp})^{-1}\right| 
= \left|\sigma_{D-i}(\bSigma^{-1})- \sigma_{D-i}\!\left(\begin{pmatrix}
0 & 0\\
0& [\bSigma]_{L_*^\perp,L_*^\perp}^{-1}
\end{pmatrix}\right)\right|\\
&\leq \left\|\begin{pmatrix}
\bSigma' & -\bSigma' [\bSigma]_{L_*,L_*^\perp}[\bSigma]_{L_*^\perp,L_*^\perp}^{-1}\\
-[\bSigma]_{L_*^\perp,L_*^\perp}^{-1}[\bSigma]_{L_*^\perp,L_*}\bSigma'& [\bSigma]_{L_*^\perp,L_*^\perp}^{-1}[\bSigma]_{L_*^\perp,L_*}\bSigma'[\bSigma]_{L_*^\perp,L_*^\perp}^{-1}[\bSigma]_{L_*^\perp,L_*} 
\end{pmatrix}\right\| \leq O(1).
\end{align*}
Note that \eqref{eq:inverse_est1}-\eqref{eq:inverse_est3} imply that the spectral norm of each one of the four blocks of the above matrix is $O(1)$. Since $D$ is fixed, their Frobenius norm is also $O(1)$. Therefore the Frobenius norm, and consequently the spectral norm, of the block matrix, is also $O(1)$. 

In addition, from \eqref{eq:estimations_magnitude1},
\[
c\epsilon\leq \sigma_{D-d}\!\left(
[\bSigma_{\mathrm{out}}]_{L_*^\perp,L_*^\perp}
\right)= \sigma_{D-d}([\bSigma]_{L_*^\perp,L_*^\perp})\leq \sigma_1([\bSigma]_{L_*^\perp,L_*^\perp})\leq \sigma_1(\bSigma_{\mathrm{out}})\leq \epsilon,
\] so for all $d+1\leq i\leq D$, $\sigma_{i-d}([\bSigma]_{L_*^\perp,L_*^\perp})^{-1}=O(1/\epsilon)$. Therefore, the above estimations imply that
\begin{equation}\label{eq:lemma_step2_proof3.2}
\left|\frac{\sigma_{i-d}([\bSigma]_{L_*^\perp,L_*^\perp})}{\sigma_i(\bSigma)}-1\right| = \left|\frac{\sigma_i(\bSigma)^{-1}-\sigma_{i-d}([\bSigma]_{L_*^\perp,L_*^\perp})^{-1}}{\sigma_{i-d}([\bSigma]_{L_*^\perp,L_*^\perp})^{-1}}\right|  \leq O(\epsilon),\,\,\text{for all $d+1\leq i\leq D$}.
\end{equation}

Combining \eqref{eq:lemma_step2_proof3.1} and \eqref{eq:lemma_step2_proof3.2} with the fact that the determinant of a matrix equals the product of its eigenvalues 
\begin{multline*}
 \log \det([\bSigma]_{L_*,L_*}) + \log \det([\bSigma]_{L_*^\perp,L_*^\perp})-\log \det(\bSigma)  \\=\log\left(\prod_{i=1}^d\frac{\sigma_i([\bSigma]_{L_*,L_*})}{\sigma_i(\bSigma)}\prod_{i=d+1}^D\frac{\sigma_{i-d}([\bSigma]_{L_*^\perp,L_*^\perp})}{\sigma_i(\bSigma)}\right) = \log(1+O(\epsilon))= O(\epsilon).   
\end{multline*}
This clearly concludes the proof of \eqref{eq:lemma_step2_proof3}, and consequently, the proof of  \eqref{eq:objective_decomposition}.

To prove \eqref{eq:operator_decomposition}, we first show that both terms in its LHS, that is,  ${
[T_{\calX}(\bSigma)]_{L_*,L_*}
}/{
\tr\!\left(
[T_{\calX}(\bSigma)]_{L_*,L_*}
\right)
}$ and $T_{\tilde{\calX}_{\mathrm{in}}}
\!\left(
[\bSigma]_{L_*,L_*}
\right)$, are weighted sums of $\tilde{\bx}\tilde{\bx}^\top$, where $\tilde{\bx}=\bU_{L_*}^\top\bx\in\R^d$, with weights 
$$w_{\mathrm{in},1}(\bx)=\frac{1}{(\bU_{L_*}^\top \bx)^\top [\bSigma]_{L_*,L_*}^{-1} (\bU_{L_*}^\top \bx)} \ \text{ and } \  w_{\mathrm{in},2}(\bx)=\frac{1}{\bx^\top \bSigma^{-1} \bx },$$ 
respectively, and that both terms have trace $1$. Indeed, 
\begin{equation}\label{eq:inlier_operator1}
T_{\tilde{\calX}_{\mathrm{in}}}
\!\left(
[\bSigma]_{L_*,L_*}
\right)
=
\frac{\sum_{\tilde{\bx}\in\tilde{\calX}_{\mathrm{in}}}\frac{\tilde{\bx} \tilde{\bx}^\top}{\tilde{\bx}^\top([\bSigma]_{L_*,L_*})^{-1}\tilde{\bx}}}{\tr\left(\sum_{\tilde{\bx}\in\tilde{\calX}_{\mathrm{in}}}\frac{\tilde{\bx} \tilde{\bx}^\top}{\tilde{\bx}^\top([\bSigma]_{L_*,L_*})^{-1}\tilde{\bx}}\right)}
=
\frac{\sum_{\tilde{\bx}\in\tilde{\calX}_{\mathrm{in}}}w_{\mathrm{in},1}(\tilde{\bx})\tilde{\bx}\tilde{\bx}^\top}{\tr\left(\sum_{\tilde{\bx}\in\tilde{\calX}_{\mathrm{in}}}w_{\mathrm{in},1}(\tilde{\bx})\tilde{\bx}\tilde{\bx}^\top\right)},\end{equation}
\begin{multline}
\label{eq:inlier_operator2} \frac{
[T_{\calX}(\bSigma)]_{L_*,L_*}
}{
\tr\!\left(
[T_{\calX}(\bSigma)]_{L_*,L_*}
\right)
}=\frac{\left[\sum_{\bx\in\calX}\frac{\bx \bx^\top}{\bx^\top(\bSigma)^{-1}\bx}\right]_{L_*,L_*}}{\tr\left(\left[\sum_{\bx\in\calX}\frac{\bx \bx^\top}{\bx^\top(\bSigma)^{-1}\bx}\right]_{L_*,L_*}\right)}
\\=\frac{\sum_{\tilde{\bx}\in\tilde{\calX}_{\mathrm{in}}}w_{\mathrm{in},2}(\tilde{\bx})\tilde{\bx}\tilde{\bx}^\top+\left[\sum_{\bx\in\calX_{\mathrm{out}}}\frac{\bx \bx^\top}{\bx^\top(\bSigma)^{-1}\bx}\right]_{L_*,L_*}}{\tr\left(\sum_{\tilde{\bx}\in\tilde{\calX}_{\mathrm{in}}}w_{\mathrm{in},2}(\tilde{\bx})\tilde{\bx}\tilde{\bx}^\top\right)+\tr\left(\left[\sum_{\bx\in\calX_{\mathrm{out}}}\frac{\bx \bx^\top}{\bx^\top(\bSigma)^{-1}\bx}\right]_{L_*,L_*}\right)}.   
\end{multline}

We first show that the LHS of \eqref{eq:inlier_operator2} is approximated by $
{\sum_{\tilde{\bx}\in\tilde{\calX}_{\mathrm{in}}}w_{\mathrm{in},2}(\tilde{\bx})\tilde{\bx}\tilde{\bx}^\top}/{\tr\left(\sum_{\tilde{\bx}\in\tilde{\calX}_{\mathrm{in}}}w_{\mathrm{in},2}(\tilde{\bx})\tilde{\bx}\tilde{\bx}^\top\right)}$. 
We note that \eqref{eq:lemma_step2_proof2_1}  and \eqref{eq:outlier_diff} imply that 
\[
\bx^\top \bSigma^{-1} \bx \geq  O(1/\epsilon) \quad \text{for } \bx \in \calX_{\mathrm{out}}.\] 
In addition, a similar argument to \eqref{eq:inlier_diff}, using \eqref{eq:estimations_magnitude4} and  $\|\bU_{L_*}^\top \bx\|= O(1)$ for $\bx \in \calX_{\mathrm{in}}$, implies 
\begin{equation}\label{eq:inlier_diff2}
\frac{1}{w_{\mathrm{in},1}(\bx)}=(\bU_{L_*}^\top \bx)^\top [\bSigma]_{L_*,L_*}^{-1} (\bU_{L_*}^\top \bx)\leq \frac{\|\bU_{L_*}^\top \bx\|^2
}{\sigma_d([\bSigma]_{L_*,L_*})}=O(1) \ \text{ for } \bx \in \calX_{\mathrm{in}}.
\end{equation}
Therefore, in the denominator of the RHS of \eqref{eq:inlier_operator2}, $\tr\left(\left[\sum_{\bx\in\calX_{\mathrm{out}}}\frac{\bx \bx^\top}{\bx^\top(\bSigma)^{-1}\bx}\right]_{L_*,L_*}\right)$ is smaller than the order of $O(\epsilon)$ while the other component of $\tr\left(\sum_{\tilde{\bx}\in\tilde{\calX}_{\mathrm{in}}}w_{\mathrm{in},2}(\tilde{\bx})\tilde{\bx}\tilde{\bx}^\top\right)$ is larger than the order of $O(1)$. That is, 
\[
\beta:=\frac{\tr\left(\left[\sum_{\bx\in\calX_{\mathrm{out}}}\frac{\bx \bx^\top}{\bx^\top(\bSigma)^{-1}\bx}\right]_{L_*,L_*}\right)}{\tr\left(\sum_{\tilde{\bx}\in\tilde{\calX}_{\mathrm{in}}}w_{\mathrm{in},2}(\tilde{\bx})\tilde{\bx}\tilde{\bx}^\top\right)}=O(\epsilon)
\]
and
\begin{align}\nonumber
&\left\|\frac{\sum_{\tilde{\bx}\in\tilde{\calX}_{\mathrm{in}}}w_{\mathrm{in},2}(\tilde{\bx})\tilde{\bx}\tilde{\bx}^\top}{\tr\left(\sum_{\tilde{\bx}\in\tilde{\calX}_{\mathrm{in}}}w_{\mathrm{in},2}(\tilde{\bx})\tilde{\bx}\tilde{\bx}^\top\right)+\tr\left(\left[\sum_{\bx\in\calX_{\mathrm{out}}}\frac{\bx \bx^\top}{\bx^\top(\bSigma)^{-1}\bx}\right]_{L_*,L_*}\right)}-\frac{\sum_{\tilde{\bx}\in\tilde{\calX}_{\mathrm{in}}}w_{\mathrm{in},2}(\tilde{\bx})\tilde{\bx}\tilde{\bx}^\top}{\tr\left(\sum_{\tilde{\bx}\in\tilde{\calX}_{\mathrm{in}}}w_{\mathrm{in},2}(\tilde{\bx})\tilde{\bx}\tilde{\bx}^\top\right)}\right\|_F\\
&=\left\|\frac{\sum_{\tilde{\bx}\in\tilde{\calX}_{\mathrm{in}}}w_{\mathrm{in},2}(\tilde{\bx})\tilde{\bx}\tilde{\bx}^\top}{\tr\left(\sum_{\tilde{\bx}\in\tilde{\calX}_{\mathrm{in}}}w_{\mathrm{in},2}(\tilde{\bx})\tilde{\bx}\tilde{\bx}^\top\right)}\right\|_F\frac{\beta}{1+\beta}=O(\epsilon),\label{eq:operator_decomposition_proof2}
\end{align}
where the last step applies the fact that since $\frac{\sum_{\tilde{\bx}\in\tilde{\calX}_{\mathrm{in}}}w_{\mathrm{in},2}(\tilde{\bx})\tilde{\bx}\tilde{\bx}^\top}{\tr\left(\sum_{\tilde{\bx}\in\tilde{\calX}_{\mathrm{in}}}w_{\mathrm{in},2}(\tilde{\bx})\tilde{\bx}\tilde{\bx}^\top\right)}$ is in $S_{++}$ and has trace 1, its Frobenius norm is $O(1)$. Similarly, 
\begin{align}\nonumber
&\left\|\frac{\left[\sum_{\bx\in\calX_{\mathrm{out}}}\frac{\bx \bx^\top}{\bx^\top(\bSigma)^{-1}\bx}\right]_{L_*,L_*}}{\tr\left(\sum_{\tilde{\bx}\in\tilde{\calX}_{\mathrm{in}}}w_{\mathrm{in},2}(\tilde{\bx})\tilde{\bx}\tilde{\bx}^\top\right)+\tr\left(\left[\sum_{\bx\in\calX_{\mathrm{out}}}\frac{\bx \bx^\top}{\bx^\top(\bSigma)^{-1}\bx}\right]_{L_*,L_*}\right)}\right\|_F\\=&\frac{\beta}{\beta+1}\left\|\frac{\left[\sum_{\bx\in\calX_{\mathrm{out}}}\frac{\bx \bx^\top}{\bx^\top(\bSigma)^{-1}\bx}\right]_{L_*,L_*}}{\tr\left(\left[\sum_{\bx\in\calX_{\mathrm{out}}}\frac{\bx \bx^\top}{\bx^\top(\bSigma)^{-1}\bx}\right]_{L_*,L_*}\right)}\right\|_F\leq O(\epsilon).
\label{eq:operator_decomposition_proof3}
\end{align}

We are now ready to conclude \eqref{eq:operator_decomposition}. Based on \eqref{eq:lemma_step2_proof1}, we have that the ratio between $w_{\mathrm{in},1}(\tilde{\bx})$ and $w_{\mathrm{in},2}(\tilde{\bx})$ is  $1+O(\epsilon)$. As a result, the ratio between $\frac{w_{\mathrm{in},1}(\tilde{\bx})}{\tr\left(\sum_{\tilde{\bx}\in\tilde{\calX}_{\mathrm{in}}}w_{\mathrm{in},1}(\tilde{\bx})\tilde{\bx}\tilde{\bx}^\top\right)}$ and $\frac{w_{\mathrm{in},2}(\tilde{\bx})}{\tr\left(\sum_{\tilde{\bx}\in\tilde{\calX}_{\mathrm{in}}}w_{\mathrm{in},2}(\tilde{\bx})\tilde{\bx}\tilde{\bx}^\top\right)}$ is also on the order of $1+O(\epsilon)$. Since both values are bounded above by $1/\|\tilde{\bx}\|^2=O(1)$, we have 
\[
\left|\frac{w_{\mathrm{in},1}(\tilde{\bx})}{\tr\left(\sum_{\tilde{\bx}\in\tilde{\calX}_{\mathrm{in}}}w_{\mathrm{in},1}(\tilde{\bx})\tilde{\bx}\tilde{\bx}^\top\right)}-\frac{w_{\mathrm{in},2}(\tilde{\bx})}{\tr\left(\sum_{\tilde{\bx}\in\tilde{\calX}_{\mathrm{in}}}w_{\mathrm{in},2}(\tilde{\bx})\tilde{\bx}\tilde{\bx}^\top\right)}\right|\leq O(\epsilon)
\]
and
\begin{equation}\label{eq:operator_decomposition_proof4}
\left\|\frac{\sum_{\tilde{\bx}\in\tilde{\calX}_{\mathrm{in}}}w_{\mathrm{in},1}(\tilde{\bx})\tilde{\bx}\tilde{\bx}^\top}{\tr\left(\sum_{\tilde{\bx}\in\tilde{\calX}_{\mathrm{in}}}w_{\mathrm{in},1}(\tilde{\bx})\tilde{\bx}\tilde{\bx}^\top\right)}-\frac{\sum_{\tilde{\bx}\in\tilde{\calX}_{\mathrm{in}}}w_{\mathrm{in},2}(\tilde{\bx})\tilde{\bx}\tilde{\bx}^\top}{\tr\left(\sum_{\tilde{\bx}\in\tilde{\calX}_{\mathrm{in}}}w_{\mathrm{in},2}(\tilde{\bx})\tilde{\bx}\tilde{\bx}^\top\right)}\right\|_F\leq O(\epsilon).
\end{equation}
Combining the last equation with \eqref{eq:inlier_operator1}, \eqref{eq:inlier_operator2}, \eqref{eq:operator_decomposition_proof2}, and \eqref{eq:operator_decomposition_proof3}, 
\eqref{eq:operator_decomposition} is proved.

Finally, \eqref{eq:operator_decomposition2} follows from \eqref{eq:lemma_step2_proof2}. Indeed, both $T_{\widehat{\calX}_{\mathrm{out}}}
\!\left(
[\bSigma]_{L_*^\perp,L_*^\perp}
\right)$ and  $
[T_{\calX}(\bSigma)]_{L_*^\perp,L_*^\perp}
$ are weighted sums of $(\bU_{L_*^\perp}^\top\bx)(\bU_{L_*^\perp}^\top\bx)^\top$, where the first has weights $w_{\mathrm{out},1}(\bx)={1}/{(\bU_{L_*^\perp}^\top \bx)^\top [\bSigma]_{L_*^\perp,L_*^\perp}^{-1} (\bU_{L_*^\perp}^\top \bx)}$ and the second has weights  $w_{\mathrm{out},2}(\bx)= {1}/{\bx^\top\bSigma^{-1}\bx}$  (note that both weights will be normalized so that the trace is $1$). That is,
\begin{align*}
T_{\widehat{\calX}_{\mathrm{out}}}
\!\left(
[\bSigma]_{L_*^\perp,L_*^\perp}
\right)=\frac{\sum_{{\bx}\in{\calX}_{\mathrm{out}}}w_{\mathrm{out},1}(\bx)(\bU_{L_*^\perp}^\top\bx)(\bU_{L_*^\perp}^\top\bx)^\top}{\tr\left(\sum_{{\bx}\in{\calX}_{\mathrm{out}}}w_{\mathrm{out},1}(\bx)(\bU_{L_*^\perp}^\top\bx)(\bU_{L_*^\perp}^\top\bx)^\top\right)},\\
[T_{\calX}(\bSigma)]_{L_*^\perp,L_*^\perp}=\frac{\sum_{{\bx}\in{\calX}_{\mathrm{out}}}w_{\mathrm{out},2}(\bx)(\bU_{L_*^\perp}^\top\bx)(\bU_{L_*^\perp}^\top\bx)^\top}{\tr\left(\sum_{{\bx}\in{\calX}_{\mathrm{out}}}w_{\mathrm{out},2}(\bx)(\bU_{L_*^\perp}^\top\bx)(\bU_{L_*^\perp}^\top\bx)^\top\right)}.
\end{align*}
By \eqref{eq:lemma_step2_proof2}, the ratio of the two weights $w_{\mathrm{out},1}(\bx)/w_{\mathrm{out},2}(\bx)$ is on the order of $1+O(\epsilon)$, so using the same argument as in the proof of  \eqref{eq:operator_decomposition_proof4}, $\|
[T_{\calX}(\bSigma)]_{L_*^\perp,L_*^\perp}
-T_{\widehat{\calX}_{\mathrm{out}}}
\!\left(
[\bSigma]_{L_*^\perp,L_*^\perp}
\right)\|_F$ is also on the order of $O(\epsilon)$ and \eqref{eq:operator_decomposition2} is proved.

\qed


\section{Proof of Theorem \ref{thm:zhang16_extend} for the Case of $\boldsymbol{\dssnr>1}$}
\label{sec:proof_thm_zhang16_extend_part2}

We first prove the following lemma, which shows that there exists a linear transformation represented by $\bT \in \R^{D \times D}$ such that the transformed dataset $\calX$ satisfies certain properties for both the inliers and the outliers.

\begin{lemma}\label{lemma:largest_dssnr}
Let $\mathcal{X} \subset \R^D$ be a dataset with underlying subspace $L_*$ such that $\dssnr > 1$, and suppose that $L_*$ is the unique subspace satisfying \eqref{eq:assumption3_thm}. Then there exists $\mathbf{T} \in \R^{D \times D}$ such that the objects obtained from its application to $\bx$, $L_*$ and $\calX$, which we denote by $\hat{\bx}$, $\hat{L}_*$ and $\hat{\calX}$, respectively, satisfy
\begin{align}\label{eq:equality_in3}
&
\frac{D}{N}\sum_{\hat{\bx}\in \hat{\calX}_{\mathrm{in}}}
\frac{(\bU_{\hat{L}_*}^\top \hat{\bx})(\bU_{\hat{L}_*}^\top \hat{\bx})^\top}
{\|\bU_{\hat{L}_*}^\top \hat{\bx}\|^2}
= \dssnr_{\mathrm{in}}\!\cdot  \bI, \ \text{ where }  \dssnr_{\mathrm{in}}:=\frac{D}{N}\frac{|\calX_1|}{\dim(L_1)},
\\
& \dssnr_{\mathrm{out}}:=\frac{D}{N}\left\|
\sum_{\hat{\bx}\in \hat{\calX}_{\mathrm{out}}}
\frac{(\bU_{\hat{L}_*^\perp}^\top \hat{\bx})(\bU_{\hat{L}_*^\perp}^\top \hat{\bx})^\top}
{\|\bU_{\hat{L}_*^\perp}^\top \hat{\bx}\|^2}
\right\|
< \dssnr_{\mathrm{in}}.\label{eq:equality_out3}
\end{align}

\end{lemma}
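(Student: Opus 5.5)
The plan is to build $\bT$ block-diagonally with respect to $\R^D=L_*\oplus L_*^\perp$, taking $\bT=\bU_{L_*}\bA\bU_{L_*}^\top+\bU_{L_*^\perp}\bB\bU_{L_*^\perp}^\top$ with $\bA\in S_{++}(d)$ and $\bB\in S_{++}(D-d)$. I read $\calX_1,L_1$ in \eqref{eq:equality_in3} as $\calX_{\mathrm{in}},L_*$, so that $\dssnr_{\mathrm{in}}=\frac{D}{N}\frac{n_1}{d}$. With this choice $\hat L_*=L_*$ and $\hat L_*^\perp=L_*^\perp$, so we may keep the same bases. For an inlier we get $\bU_{L_*}^\top\hat\bx=\bA\bU_{L_*}^\top\bx$, and for an outlier $\bU_{L_*^\perp}^\top\hat\bx=\bB\bU_{L_*^\perp}^\top\bx$. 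The two requirements therefore decouple into a statement about $\tilde\calX_{\mathrm{in}}\subset\R^d$ and one about $\widehat\calX_{\mathrm{out}}\subset\R^{D-d}$.

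\emph{Inlier block.} Take $\bA=\bSigma_{\mathrm{in},*}^{-1/2}$, where $\bSigma_{\mathrm{in},*}$ exists and is unique by \Cref{lem:existence_uniqueness} under \cref{item:convergence}. The fixed-point equation \eqref{eq:zhang16_extend1} for $\tilde\calX_{\mathrm{in}}$ in dimension $d$ (obtained as in \eqref{eq:fixed_from_Ftilde}) reads $\bSigma_{\mathrm{in},*}=\frac{d}{n_1}\sum \tilde\bx\tilde\bx^\top/(\tilde\bx^\top\bSigma_{\mathrm{in},*}^{-1}\tilde\bx)$. Conjugating by $\bSigma_{\mathrm{in},*}^{-1/2}$ gives $\sum (\bA\tilde\bx)(\bA\tilde\bx)^\top/\|\bA\tilde\bx\|^2=\frac{n_1}{d}\bI$. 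Multiplying by $D/N$ yields \eqref{eq:equality_in3}.

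\emph{Counting bound for outliers.} Let $V\subseteq L_*^\perp$ have dimension $k\ge1$. An outlier projects into $V$ iff it lies in $L=L_*\oplus V$. Applying \eqref{eq:assumption3_thm} to $L$ (for $L=\R^D$ we use $\dssnr>1$ directly) gives $n_1+|\{\bx\in\calX_{\mathrm{out}}:\bU_{L_*^\perp}^\top\bx\in V\}|<(d+k)n_1/d$. Hence the number of projected outliers in $V$ is strictly less than $k\,n_1/d$. Only finitely many subspaces are spanned by data points, so there is $\lambda<n_1/d$ with $|\widehat\calX_{\mathrm{out}}\cap V|\le\lambda\dim V$ for all $V$.

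\emph{Outlier block (main obstacle).} We need $\bB$ with $\bigl\|\sum_{\bu\in\widehat\calX_{\mathrm{out}}}(\bB\bu)(\bB\bu)^\top/\|\bB\bu\|^2\bigr\|<n_1/d$. The TME of $\widehat\calX_{\mathrm{out}}$ need not exist, since the counting bound above is weaker than the Kent--Tyler condition $|\widehat\calX_{\mathrm{out}}\cap V|<k\,n_0/(D-d)$. I would therefore use a weighted radial-isotropy (Barthe/Forster-type) result, which I would prove or cite in the appendix. It says: if $\bc\in\R^{n_0}_{++}$ lies in the relative interior of the basis polytope of the linear matroid of $\widehat\calX_{\mathrm{out}}$ (scaled to total $D-d$), then there is $\bB$ with $\sum_i c_i(\bB\bu_i)(\bB\bu_i)^\top/\|\bB\bu_i\|^2=\bI$. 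This can be obtained by minimizing the geodesically convex weighted Tyler objective, in the spirit of \Cref{prop:nonsingular_fixed_point} and \eqref{eq:TME_objective2} with weights $c_i$.

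The counting bound says that the all-ones vector lies in $\lambda$ times the independence polytope, since $|S|\le\lambda\,\mathrm{rank}(S)$. By polymatroid theory, it is then dominated coordinatewise by some $\lambda\bc$ with $\bc$ a base vector. Slightly enlarging $\lambda$ but keeping $\lambda<n_1/d$ lets us perturb $\bc$ into the relative interior while keeping $\lambda c_i\ge1$. Then $\sum_i(\bB\bu_i)(\bB\bu_i)^\top/\|\bB\bu_i\|^2\psdleq\sum_i\lambda c_i(\cdot)=\lambda\bI$. Multiplying by $D/N$ gives $\dssnr_{\mathrm{out}}\le\frac{D}{N}\lambda<\dssnr_{\mathrm{in}}$, which is \eqref{eq:equality_out3}.

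I expect the delicate points to be the relative-interior perturbation and the treatment of reducible (direct-sum) matroid components, where the Barthe condition must hold componentwise. If that route stalls, I would instead run the weighted Tyler iteration on $\widehat\calX_{\mathrm{out}}$ with weights $\lambda c_i$, and argue existence of a minimizer via the same asymptotic-hierarchy log-det argument used in the proof of \Cref{lemma:step1}.
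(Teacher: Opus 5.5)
Your proposal is correct, and for the outlier block it takes a genuinely different route from the paper.

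\textbf{The paper's route.} The paper builds a recursive densest-subspace filtration $L_1=L_*,L_2,\dots,L_K$. Each $L_{k+1}\perp V_k$ maximizes the point-to-dimension ratio of the data projected onto $V_k^\perp$, with ties broken by minimal dimension. It then shows that every layer $\widehat\calX_k$ satisfies the strict Kent--Tyler condition, so its TME $\bSigma_k$ exists. Finally it takes $\bT$ block diagonal on $L_1\oplus\cdots\oplus L_K$ with blocks $\epsilon^{1-k}\bSigma_k^{-1/2}$. The multiscale factors make each layer radially isotropic inside its own block up to $O(\epsilon)$, so $\dssnr_{\mathrm{out}}=\frac{D}{N}\frac{|\calX_2|}{\dim L_2}+O(\epsilon)$. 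The gap then comes from $|\calX_2|/\dim L_2<|\calX_1|/\dim L_1$.

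\textbf{What you share with it.} Your inlier block is exactly the paper's $k=1$ block. Your $\lambda$, taken as the maximum over spanned subspaces, coincides with the paper's $|\calX_2|/\dim L_2$. You bound it by the same device of testing \eqref{eq:assumption3_thm} on $L_*\oplus V$. You are in fact more careful here: you treat $V=L_*^\perp$ separately via $\dssnr>1$, a case to which \eqref{eq:assumption3_thm} does not apply and which the paper's Property~2 argument passes over.

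\textbf{What you change.} You replace the filtration and the $\epsilon$-asymptotics by polymatroid domination plus Barthe's weighted radial-isotropy theorem. This yields an exact inequality $\sum_i(\bB\bm{u}_i)(\bB\bm{u}_i)^\top/\|\bB\bm{u}_i\|^2\psdleq\lambda'\bI$ for any $\lambda'\in(\lambda,n_1/d)$. You also work explicitly with the $L_*\oplus L_*^\perp$ block structure, which the paper's $\bT$ also has but does not exploit. This makes it immediate that $\bU_{\hat L_*^\perp}^\top\hat\bx$ depends only on $\bB$. The paper's $O(\epsilon)$ estimate, by contrast, is written for $\bT\bx/\|\bT\bx\|$ rather than for the projected vector.

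\textbf{Trade-off.} Your route costs importing Barthe's theorem and the fact that every point of a matroid's independence polytope is dominated by a point of its base polytope. The paper needs only the unweighted Kent--Tyler existence theorem applied layer by layer; its filtration is in effect an explicit, approximate substitute for Barthe.

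\textbf{Points to tighten.}
\begin{itemize}
\item Cite Barthe in its relative-interior form. That form already handles direct-sum decompositions, because the relative interior of a disconnected matroid's base polytope is the product of the components' relative interiors. The componentwise issue you anticipate arises only if you re-derive Barthe from weighted Tyler minimization, where the strict weighted Kent--Tyler condition necessarily fails on separators.
\item Barthe's map need not be symmetric. Either drop the requirement $\bB\in S_{++}(D-d)$ (invertibility is all that is used) or pass to its positive-definite polar factor.
\item When $\dssnr>1$ the outliers need not span $L_*^\perp$. In that case apply Barthe on their span, with weights summing to its dimension rather than $D-d$, and extend $\bB$ invertibly; the operator norm is unaffected.
\end{itemize}
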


Since the TME estimator is invariant under linear transformations and rescaling of individual data points, WLOG we may replace each $\bx$ by $\hat{\bx}=\bT\bx$ and then scale each inlier by $\| \hat{\bx}\|$ and each  outlier by $\|\bU_{\hat{L}_*^\perp}^\top\hat{\bx}\|$.  Hence, we may assume that
\begin{align}\label{eq:equality_in2}
&\text{if $\bx \in \calX_{\mathrm{in}}$, then $\bx \in L_*$, $\|\bx\|=1$, and }  \frac{D}{N}\sum_{\bx\in \calX_{\mathrm{in}}} (\bU_{L_*}^\top\bx)(\bU_{L_*}^\top\bx)^\top
= \dssnr_{\mathrm{in}}\cdot\,\bI,\\
&\text{if $\bx \in \calX_{\mathrm{out}}$, then $\|\bU_{L_*^\perp}^\top\bx\|=1$, and } \frac{D}{N}\left\|
\sum_{\bx\in \calX_{\mathrm{out}}}
(\bU_{L_*^\perp}^\top\bx)(\bU_{L_*^\perp}^\top\bx)^\top
\right\|
= \dssnr_{\mathrm{out}}.
\label{eq:equality_out2}
\end{align}


We first show that the ratio between the outlier weights and the inlier weights
\[
r_k := \frac{\max_{\bx_i\in \calX_{\mathrm{out}}} w_i^{(k)}}{\min_{\bx_i\in \calX_{\mathrm{in}}} w_i^{(k)}}
\]
converges to zero. 
Applying \eqref{eq:w_update}, basic bounds, and finally \eqref{eq:equality_in2}, we note that for any $\bx_i \in \calX_{\mathrm{in}}$, 
\begin{align}\label{eq:weight_inlier}
w_i^{(k+1)}=&\frac{D}{N\,\bx_i^\top \left(\sum_{\bx_j\in\calX} w_j^{(k)} \bx_j \bx_j^\top \right)^{-1}\bx_i}
\geq
\frac{D}{N\,\bx_i^\top \left(\sum_{\bx_j\in\calX_{\mathrm{in}}} w_j^{(k)} \bx_j \bx_j^\top \right)^{-1}\bx_i}
\\\geq&\nonumber
\frac{D\,\min_{\bx_j\in \calX_{\mathrm{in}}} w_j^{(k)}}{N\,\bx_i^\top \left(\sum_{\bx_j\in\calX_{\mathrm{in}}} \bx_j \bx_j^\top \right)^{-1}\bx_i}
=
\dssnr_{\mathrm{in}}\,\min_{\bx_j\in \calX_{\mathrm{in}}} w_j^{(k)}.
\end{align}

Similarly, for any $\bx_i \in \calX_{\mathrm{out}}$, define $[\bx]_{L_*^\perp}=\bU_{L_*^\perp}^\top\bx$ and apply a similar analysis to obtain 
\begin{align}\nonumber
w_i^{(k+1)}
\leq&
\frac{D}{N\,\bx_i^\top \lim_{\lambda\rightarrow\infty}\left(\lambda\cdot \bP_{L_*}+\sum_{\bx_j\in\calX_{\mathrm{out}}} w_j^{(k)} \bx_j \bx_j^\top \right)^{-1}\bx_i}\\
=&\frac{D}{N\,[\bx_i]_{L_*^\perp}^\top \left[\sum_{\bx_j\in\calX_{\mathrm{out}}} w_j^{(k)} \bx_j \bx_j^\top\right]_{L_*^\perp,L_*^\perp}^{-1}[\bx_i]_{L_*^\perp}}
\label{eq:weight_outlier}
\\
\leq&
\frac{D\max_{\bx_j\in \calX_{\mathrm{out}}} w_j^{(k)}}{N\,[\bx_i]_{L_*^\perp}^\top \left([\sum_{\bx_j\in\calX_{\mathrm{out}}} \bx_j \bx_j^\top]_{L_*^\perp,L_*^\perp} \right)^{-1}[\bx_i]_{L_*^\perp}}
\leq
\dssnr_{\mathrm{out}}\,\max_{\bx_j\in \calX_{\mathrm{out}}} w_j^{(k)}.
\nonumber
\end{align}

Since $\dssnr > \dssnr_{\mathrm{out}}$, it follows from \eqref{eq:weight_inlier} and \eqref{eq:weight_outlier} that
\begin{equation}\label{eq:ratio_convergence}
r_{k+1} \leq \frac{\dssnr_{\mathrm{out}}}{\dssnr_{\mathrm{in}}}\, r_k.
\end{equation}
Therefore, $r_k$ converges to $0$ at a linear rate.

Next, we show that $\min_{\bx_i\in \calX_{\mathrm{in}}} w_i^{(k)}$ does not converge to $0$. 
We first note that for any $\bx_i \in \calX_{\mathrm{in}}$
\begin{equation*}
w_i^{(k+1)}
=
\frac{D}{N\,\bx_i^\top \left(\sum_{\bx_j\in\calX} w_j^{(k)} \bx_j \bx_j^\top\right)^{-1}\bx_i} =
\frac{D}{N\,[\bx_i]_{L_*}^\top
\left[\left(\sum_{\bx_j\in\calX} w_j^{(k)} \bx_j \bx_j^\top\right)^{-1}\right]_{L_*,L_*}
[\bx_i]_{L_*}}. 
\end{equation*}
To bound the RHS of the above equality, 
we recall that the $(L_*, L_*)$ block of $\mathbf{X}^{-1}$ can be expressed as a Schur complement (see, e.g., \cite[Section 9.1.5]{IMM2012-03274}). Specifically, if $\mathbf{X} \succeq 0$, then
\[[\bX^{-1}]_{L_*,L_*}
=
\left([\bX]_{L_*,L_*}-[\bX]_{L_*,L_*^\perp}[\bX]_{L_*^\perp,L_*^\perp}^{-1}[\bX]_{L_*^\perp,L_*}\right)^{-1}
\psdgeq [\bX]_{L_*,L_*}^{-1}.\] 
Combining the above two equations yields 
\begin{align*}
w_i^{(k+1)}
\le \frac{D}{N\,[\bx_i]_{L_*}^\top
\left(\left[\sum_{\bx_j\in\calX} w_j^{(k)} \bx_j \bx_j^\top\right]_{L_*,L_*}\right)^{-1}
[\bx_i]_{L_*}}.
\end{align*}
To further bound the RHS of the above equation, we apply 
\[\max_{\bx_j\in \calX_{\mathrm{out}}} w_j^{(k)}=r_k\min_{\bx_j\in \calX_{\mathrm{in}}} w_j^{(k)}\leq r_k\max_{\bx_j\in \calX_{\mathrm{in}}} w_j^{(k)}.\]
Therefore, combining the above two equations, then applying \eqref{eq:equality_in2}, and next applying basic estimates using the following constant:
$$C_0
:=\frac{\left\|[\sum_{\bx_j\in\calX_{\mathrm{out}}} \bx_j \bx_j^\top]_{L_*,L_*}\right\|}{\dssnr_{\mathrm{in}} \!\cdot  N/D},$$
yields
\begin{align*}
w_i^{(k+1)}
&\le 
\frac{D\,\max_{\bx_j\in \calX_{\mathrm{in}}} w_j^{(k)}}
{N\,[\bx_i]_{L_*}^\top
\left(
\left[\sum_{\bx_j\in\calX_{\mathrm{in}}} \bx_j \bx_j^\top
+ r_k \sum_{\bx_j\in\calX_{\mathrm{out}}} \bx_j \bx_j^\top\right]_{L_*,L_*}
\right)^{-1}
[\bx_i]_{L_*}} \\
&=\frac{D\,\max_{\bx_j\in \calX_{\mathrm{in}}} w_j^{(k)}}
{N\,[\bx_i]_{L_*}^\top
\left(\dssnr_{\mathrm{in}} \frac{N}{D}\bI
+ \left[r_k \sum_{\bx_j\in\calX_{\mathrm{out}}} \bx_j \bx_j^\top\right]_{L_*,L_*}\right)^{-1}
[\bx_i]_{L_*}} \\
&\leq \frac{D\,\max_{\bx_j\in \calX_{\mathrm{in}}} w_j^{(k)}}
{N\,[\bx_i]_{L_*}^\top
\left(\dssnr_{\mathrm{in}} \frac{N}{D}\bI
+ C_0 r_k \dssnr_{\mathrm{in}} \frac{N}{D}\bI\right)^{-1}
[\bx_i]_{L_*}} \\
&=
\dssnr_{\mathrm{in}}(1+r_k C_0)\,\max_{\bx_j\in \calX_{\mathrm{in}}} w_j^{(k)}.
\end{align*}
Combining the above equation with \eqref{eq:weight_inlier}, results in
\begin{equation}\label{eq:ratio_convergence2}
\frac{\max_{\bx_i\in \calX_{\mathrm{in}}} w_i^{(k+1)} \big/ \min_{\bx_i\in \calX_{\mathrm{in}}} w_i^{(k+1)}}
{\max_{\bx_i\in \calX_{\mathrm{in}}} w_i^{(k)} \big/ \min_{\bx_i\in \calX_{\mathrm{in}}} w_i^{(k)}}
\le 1+r_k C_0.
\end{equation}

By \eqref{eq:ratio_convergence}, the sequence \(r_k\) is dominated by a convergent geometric sequence. It follows that the infinite product
$\prod_{k=1}^{\infty} (1+r_kC_0 )$
converges. Indeed, for all sufficiently large \(k\), we have \(0 \le C_0 r_k \le \tfrac12\), and hence
$\log(1+r_kC_0) \le 2 C_0 r_k$. Since $\sum_kr_k$ converges, $\sum_{k=1}^{\infty} \log(1+r_kC_0)$
converges as well, which implies
\[
\prod_{k=1}^{\infty} (1+r_kC_0)
= \exp\!\left(\sum_{k=1}^{\infty} \log(1+r_kC_0)\right)
\]
converges as well.
Therefore, \eqref{eq:ratio_convergence2} implies  
\[
\frac{\max_{\bx_i\in \calX_{\mathrm{in}}} w_i^{(K)}}{\min_{\bx_i\in \calX_{\mathrm{in}}} w_i^{(K)}}\leq \frac{\max_{\bx_i\in \calX_{\mathrm{in}}} w_i^{(1)}}{\min_{\bx_i\in \calX_{\mathrm{in}}} w_i^{(1)}} \prod_{k=1}^{K-1} (1+r_k C_0)
\]remains bounded for all $K$. 

The above inequality implies $\liminf_{k\to\infty}\min_{\bx_i\in \calX_{\mathrm{in}}} w_i^{(k)} > 0$.  Indeed, if there exists a subsequence with indices $\calK$  such that $\lim_{k\in\calK, k\to\infty}\min_{\bx_i\in \calX_{\mathrm{in}}} w_i^{(k)} = 0$, then it would follow from the above inequality and from $\lim_{k\rightarrow\infty}r_k=0$ that
\[
\lim_{\substack{k\to\infty \\ k\in\calK}}\max_{\bx_i\in \calX_{\mathrm{in}}} w_i^{(k)} = 0
\qquad\text{and}\qquad
\lim_{\substack{k\to\infty \\ k\in\calK}}\max_{\bx_i\in \calX_{\mathrm{out}}} w_i^{(k)} = \lim_{\substack{k\to\infty \\ k\in\calK}}(r_k\min_{\bx_i\in \calX_{\mathrm{in}}} w_i^{(k)})=0.
\]
Hence all weights would converge to zero, so   $\lim_{k\in\calK, k\to\infty}\bSigma^{(k)}=\lim_{k\in\calK, k\to\infty}\sum_{i=1}^N w_i^{(k)}\bx_i\bx_i^\top$ converges to zero, contradicting the normalization of the TME iterates such that $\tr(\bSigma^{(k)})=1$.

Finally, since $\lim_{k\rightarrow\infty}r_k=0$, the total weight assigned to the outliers becomes negligible for sufficiently large $k$, and $\bSigma^{(k)}$ is effectively a weighted sum of the inliers alone. Consequently,  for large $k$, the TME iteration for $\calX$ converges to the TME iteration for the set of inliers $\calX_{\mathrm{in}}$. By \cite[Theorem 1.2]{zhang2016robust}, the latter converges to the TME solution of $\calX_{\mathrm{in}}$, so the TME iteration for $\calX$ also converges to it.

\subsection{Proof of \Cref{lemma:largest_dssnr}}

We first introduce some notation. For any subspace \(L \neq \{0\} \), we denote
\[
f(\calX,L)
=
\frac{|\calX \cap L|}{\dim(L)}.
\]
Moreover, we denote
\[
P_{L^\perp}(\calX)
=
\left\{
P_{L^\perp}\bx
:
\bx \in \calX \setminus L
\right\}.
\]
Note that we removed all points in $L$ so that the projected set does not contain zeros.

We now construct a sequence of subspaces recursively. Let \(L_0\) be the trivial \(0\)-dimensional subspace. For each \(k \geq 0\), define
\[
V_k = \bigoplus_{i=0}^k L_i,
\]
and choose
\begin{equation}\label{eq:define_Lk}
L_{k+1}
=
\argmax_{L \, : \,  L\perp V_k, L\neq \{0\}}
f\bigl(P_{V_k^\perp}(\calX),L\bigr).
\end{equation}
If the maximizer is not unique, we choose among all maximizers one with the smallest dimension. We note that \(\dim(V_k)\) strictly increases with \(k\) and thus there exists \(K \in \N\) such that $V_K = \Sp(\calX)$.  
At this point, the iteration terminates.

For each \(k=1,\dots,K\), define
\[
\calX_k
:=
\left\{
\bx_i \in \calX :
\bx_i \in V_k \setminus V_{k-1}
\right\}.
\]
Since \(V_K=\mathbb R^D\), the dataset can be decomposed as $\calX
=
\calX_1 \cup \calX_2 \cup \cdots \cup \calX_K .$

We shall use the following properties, which we prove later in \Cref{sec:verify_3_prop}. 

\begin{enumerate}
    \item The first selected subspace satisfies
    $L_1=L_*$, 
    and \(\calX_1\) is the set of inliers.

    \item The sequence of ratios
    \[
    \frac{|\calX_k|}{\dim(L_k)}
    \]
    is nonincreasing in \(k\). Moreover, the inequality is strict between the first and second terms; that is,
    \[
    \frac{|\calX_K|}{\dim(L_K)}
    \leq
    \frac{|\calX_{K-1}|}{\dim(L_{K-1})}
    \leq
    \cdots
    \leq
    \frac{|\calX_2|}{\dim(L_2)}
    <
    \frac{|\calX_1|}{\dim(L_1)} .
    \]

    \item For every \(k=1,\dots,K\), the dataset 
    $\widehat{\calX}_k
    :=
    \left\{
    \bU_{L_k}^{\top}\bx
    :
    \bx \in \calX_k
    \right\} \subseteq \R^{d_k}$, where $d_k=\dim(L_k)$, 
    admits a unique TME solution, denoted by \(\bSigma_k \in S_{++}(d_k)\).
\end{enumerate}

Assuming that the three properties above hold, we construct a linear operator
\(\bT \colon \R^D \to \R^D\) that is block diagonal with respect to the decomposition
$\Sp(\calX) = L_1 \oplus L_2 \oplus \cdots \oplus L_K$. It depends on an arbitrary parameter $\epsilon>0$, which we will later tune. 
Specifically, for each \(k=1,\dots,K\), define the block of \(\bT\) on \(L_k\) by
\[
[\bT]_{L_k,L_k}
=
\epsilon^{1-k}\bSigma_k^{-1/2},
\]
where property 3 above provides $\bSigma_k$. 
That is, $[\bT\bx]_{L_k}=\epsilon^{1-k}\bSigma_k^{-1/2}[\bx]_{L_k}$, where $[\bx]_{L_k}=\bU_{L_k}^\top \bx\in \R^{d_k}$.

For any \(\bx\in \calX_k\), $[\bx]_{L_k}\neq 0$ and $[\bx]_{L_j}=0$ for all $j>k$.  
For a small $\epsilon>0$, the vector \(\bT\bx\) is dominated by its component in \(L_k\), i.e., $\epsilon^{1-k}\bSigma_k^{-1/2}[\bx]_{L_k}$. Indeed, the higher components are zeros and the next lower one is $\epsilon \cdot \epsilon^{1-k}\bSigma_{k-1}^{-1/2}[\bx]_{L_{k-1}}$. We thus expect an error of order \(O(\epsilon)\) and we show it rigorously as follows: 
For $\bx\in\calX_k$,
\[
\bx=\sum_{j=1}^k\bU_{L_j}[\bx]_{L_j}, \bT\bx=\sum_{j=1}^k\bU_{L_j}\epsilon^{1-j}\bSigma_j^{-1/2}[\bx]_{L_j},
\]
so $\bT\bx=\by^{(1)}+\by^{(2)}$, where $\by^{(1)}=\bU_{L_k}\epsilon^{1-k}\bSigma_k^{-1/2}[\bx]_{L_k}$ and $\by^{(2)}=\sum_{j=1}^{k-1}\bU_{L_j}\epsilon^{1-j}\bSigma_j^{-1/2}[\bx]_{L_j}$. We note that  $\by^{(1)}\perp \by^{(2)}$, and for a sufficiently small $\epsilon$, $\|\by^{(1)}\|=O(\epsilon^{1-k})$ and $\|\by^{(2)}\|\leq O(\epsilon^{2-k})$, which implies $\|\by^{(2)}\|/\|\by^{(1)}\|\leq O(\epsilon)$. Then, 
\begin{align*}
&\left\|\frac{\bT\bx}{\|\bT\bx\|}-\frac{\by^{(1)}}{\|\by^{(1)}\|}\right\|
=\left\|\frac{\by^{(2)}}{\|\bT\bx\|}+\frac{\by^{(1)}}{\|\bT\bx\|}-\frac{\by^{(1)}}{\|\by^{(1)}\|}\right\|\leq\left\|\frac{\by^{(2)}}{\|\bT\bx\|}\right\|+\left\|\frac{\by^{(1)}}{\|\bT\bx\|}-\frac{\by^{(1)}}{\|\by^{(1)}\|}\right\| 
\\\leq&\left\|\frac{\by^{(2)}}{\|\by^{(1)}\|-\|\by^{(2)}\|}\right\|+\left\|\frac{\by^{(1)}}{\|\by^{(1)}\|-\|\by^{(2)}\|}-\frac{\by^{(1)}}{\|\by^{(1)}\|}\right\|\\
=&\frac{\|\by^{(2)}\|}{\|\by^{(1)}\|-\|\by^{(2)}\|}+\left|\frac{\|\by^{(1)}\|}{\|\by^{(1)}\|-\|\by^{(2)}\|}-\frac{\|\by^{(1)}\|}{\|\by^{(1)}\|}\right|=2\frac{\|\by^{(2)}\|}{\|\by^{(1)}\|-\|\by^{(2)}\|}\leq O(\epsilon). 
\end{align*}
In addition, when \(k=1\), this error term vanishes. Indeed, if $\bx\in\calX_1$, then $\bx=\bU_{L_1}[\bx]_{L_1}$ and $\by^{(2)}=0$. Hence,
\[
\frac{\bT\bx}{\|\bT\bx\|}=\frac{\by^{(1)}}{\|\by^{(1)}\|} +\mathbf 1_{\{k\geq 2\}}\,O(\epsilon) =\bU_{L_k}\frac{\bSigma_k^{-1/2}[\bx]_{L_k}}{\|\bSigma_k^{-1/2}[\bx]_{L_k}\|}+\mathbf 1_{\{k\geq 2\}}\,O(\epsilon),
\]
and
\[
\frac{(\bT\bx)(\bT\bx)^\top}
{\|\bT\bx\|^2}
=
\bU_{L_k}
\frac{
\bSigma_k^{-1/2}
[\bx]_{L_k}
[\bx]_{L_k}^{\top}
\bSigma_k^{-1/2}
}{
[\bx]_{L_k}^{\top}
\bSigma_k^{-1}
[\bx]_{L_k}
}
\bU_{L_k}^{\top}
+
\mathbf 1_{\{k\geq 2\}}\,O(\epsilon).
\]
Recalling that \(\bSigma_k\) is the unique TME solution for $\widehat{\calX}_k$ (see property 3 above) and combining this property with \eqref{eq:fixed_from_Ftilde}, yields
\[
\sum_{\bz\in \widehat{\calX}_k}
\frac{\bSigma_k^{-1/2} \bz\bz^\top\bSigma_k^{-1/2}}
{\bz^\top\bSigma_k^{-1}\bz}
=
\frac{|\calX_k|}{\dim(L_k)}\bI_{d_k\times d_k}.
\]
Noting that $\widehat{\calX}_k=\{[\bx]_{L_k}: \bx \in \calX_k\}$, we obtain
\begin{multline*}
\sum_{\bx\in\calX_k}
\frac{(\bT\bx)(\bT\bx)^\top}
{\|\bT\bx\|^2}
=\bU_{L_k}\left(\sum_{\bz\in \widehat{\calX}_k}
\frac{\bSigma_k^{-1/2} \bz\bz^\top\bSigma_k^{-1/2}}
{\bz^\top\bSigma_k^{-1}\bz}\right)\bU_{L_k}^\top+
\mathbf 1_{\{k\geq 2\}}\,O(\epsilon)\\=
\frac{|\calX_k|}{\dim(L_k)}
\bP_{L_k}
+
\mathbf 1_{\{k\geq 2\}}\,O(\epsilon).    
\end{multline*}
Assume first that \(k=1\) and recall that property 1 above implies \(L_1=L_*\). Combining this assumption with the above equation, applied with $k=1$, yields \eqref{eq:equality_in3}. 
In order to conclude \eqref{eq:equality_out3}, we sum the above equation over all \(2\leq k\leq K\) 
and define 
$$\dssnr_{\mathrm{out}}=\frac{D}{N}\left\|\sum_{k=2}^K \frac{|\calX_k|}{\dim(L_k)}\bP_{L_k}+O(\epsilon)\right\|.$$ 
We note that property 2 above implies that $\dssnr_{\mathrm{out}}=\frac{D}{N}\frac{|\calX_2|}{\dim(L_2)}+O(\epsilon)$. Finally, the strict inequality in  Property~2 implies that we can choose $\epsilon$ sufficiently small such that 
$\dssnr_{\mathrm{out}} < \dssnr_{\mathrm{in}}$, which indeed concludes \eqref{eq:equality_out3}.

\subsubsection{Verification of the three properties}
\label{sec:verify_3_prop}
The first property follows from the definition of $L_1$ (see \eqref{eq:define_Lk}), where $f\bigl(P_{V_0^\perp}(\calX),L\bigr)=\frac{|\calX \cap L|}{\dim(L)}$. By assumption \eqref{eq:assumption3_thm}, the corresponding maximizer must be
$L_1=L_*$. This observation and the fact that $\calX$ does not contain the origin imply that \(\calX_1\) is the set of inliers.

We next prove the second property by contradiction. Assume that for some \(k\geq 1\),
$
\frac{|\calX_{k+1}|}{\dim(L_{k+1})}
>
\frac{|\calX_k|}{\dim(L_k)}.
$
Note that by definition, $f\bigl(P_{V_k^\perp}(\calX),L\bigr)=\frac{|\{P_{V_k^\perp}\bx \,:\, \bx\in \calX\setminus V_k\}\cap L|}{\dim(L)}=\frac{|\calX\cap ((V_k\oplus L)\setminus V_k)|}{\dim(L)}$, where the last equality follows from observing that if $P_{V_k^\perp}\bx \in L$, then $\bx = P_{V_k}\bx +P_{V_k^\perp}\bx \in V_k \oplus L$. Similarly, replacing $V_k^\perp$ by $V_{k-1}^\perp$ and $L$ by $L_k\oplus L_{k+1}$, while noting that $V_{k+1}=V_{k-1} \oplus L_k\oplus L_{k+1}$, $L_{k} \perp L_{k+1}$ and $\calX_k \cap \calX_{k+1}=\emptyset$,
\begin{align}\nonumber
f\bigl(P_{V_{k-1}^{\perp}}(\calX),L_k\oplus L_{k+1}\bigr)
&=\frac{|(\calX\cap V_{k+1})\setminus V_{k-1}|}
{\dim(L_k\oplus L_{k+1})}=
\frac{|\calX_k|+|\calX_{k+1}|}
{\dim(L_k)+\dim(L_{k+1})}  \\
&>
\frac{|\calX_k|}{\dim(L_k)} =
f\bigl(P_{V_{k-1}^{\perp}}(\calX),L_k\bigr).\label{eq:verification_property2}
\end{align}
This contradicts the definition of \(L_k\) as a maximizer in \eqref{eq:define_Lk}. Therefore,
$\frac{|\calX_{k+1}|}{\dim(L_{k+1})}
\leq
\frac{|\calX_k|}{\dim(L_k)}$
 holds for every \(k\geq 1\).

It remains to show that the  inequality is strict for $k=1$. If, on the contrary, $
\frac{|\calX_2|}{\dim(L_2)}
=
\frac{|\calX_1|}{\dim(L_1)}$, 
then the same averaging argument as in \eqref{eq:verification_property2} gives
\[
f(P_{V_0^{\perp}}(\calX),L_1\oplus L_2)
=
f(P_{V_0^{\perp}}(\calX),L_1),
\]
that is,
\[
\frac{|\calX \cap L_*|}{\dim(L_*)}=\frac{|\calX \cap (L_2\oplus L_*)|}{\dim(L_2\oplus L_*)}
\]
contradicting the strict inequality in \eqref{eq:assumption3_thm} when $L=L_1\oplus L_2$. This concludes the proof of property 2. 

Finally, we prove property 3. By the definition of \(L_{k+1}\), the set
$
\widehat{\calX}_{k+1}
=
\left\{
\bU_{L_{k+1}}^\top \bx:
\bx\in \calX_{k+1}
\right\}, 
$
which is contained in $\R^{d_{k+1}}$,
has the following property: 
\begin{equation}\label{eq:third_property_proof}
\text{for every proper nontrivial subspace \(L\subset \R^{d_{k+1}}\),} \,\, |\widehat{\calX}_{k+1}\cap L|
<
\frac{|\widehat{\calX}_{k+1}|}{\dim(L_{k+1})}\dim(L).
\end{equation}
Indeed, suppose  \( \b0 \subsetneq L \subsetneq \R^{d_{k+1}}\) contains at least 
${|\widehat{\calX}_{k+1}|\dim(L)}/{\dim(L_{k+1})}$
points. 
Let $\hat{L}\in\R^D$ be the subspace spanned by $\{\bU_{L_{k+1}}\by: \by\in L\}$. Note that $\{\bU_{L_{k+1}}\by: \by\in L\} \subset V_{k+1} \setminus V_k$ and $\dim(\hat{L})=\dim(L)<\dim(L_{k+1})$. Consequently,   
\[
f\bigl(P_{V_k^\perp}(\calX),\hat{L}\bigr)
\geq \frac{\frac{|\widehat{\calX}_{k+1}|}{\dim(L_{k+1})}\dim(L)}{\dim(L)}=\frac{|{\calX}_{k+1}|}{\dim(L_{k+1})}=
f\bigl(P_{V_k^\perp}(\calX),L_{k+1}\bigr).
\]
On the other hand, note that \eqref{eq:define_Lk} implies the reverse inequality and thus equality in the equation above. Moreover, the tie-breaking rule defining $L_{k+1}$ requires in this case that $\dim(L_{k+1}) \leq \dim(\hat{L})$, which results in contradiction. We thus conclude \eqref{eq:third_property_proof}. Combining this with \cite[Theorem~1.2]{zhang2016robust} we establish the existence and uniqueness of the TME estimator for each set \(\widehat{\calX}_{k+1}\). 

\section{Discussion}
\label{sec:conclusion}

In this paper, we presented a rigorous geometric analysis of the stability and convergence properties of TME for the RSR problem. By extending the foundational stability conditions of \cite{zhang2016robust}, we established a generalized framework that guarantees the exact recovery of the underlying inlier subspace under a stability condition that is less restrictive than standard general position assumptions. 

Most notably, our analysis resolves an open problem in the literature by characterizing the algorithmic behavior of TME at the critical boundary $\dssnr = 1$, thereby formally establishing a sharp computational phase transition for an RSR algorithm. 

From a technical standpoint, these guarantees were established by formulating an equivalent Majorization-Minimization (MM) framework for the TME iterates and introducing a novel decomposition of the objective function, which allowed us to precisely isolate the asymptotic behavior of the inliers from the outliers. 

We note that this work has direct implications for the setting of noiseless hybrid linear modeling (or subspace clustering) with outliers. In this setting, data points lie on a union of multiple subspaces, corrupted by additional outliers that do not belong to any of these subspaces. An early theoretical approach to this problem~\cite{lp_recovery_part1_11} focused on identifying a ``most significant'' subspace. This naturally motivates an iterative procedure: finding the dominant subspace, removing its inliers, and repeating the process for the remaining data. Our current work rigorously clarifies the weakest generic conditions under which such an iterative process can succeed. Specifically, \cref{item:dssnr1,item:convergence} of \Cref{thm:zhang16_extend} must hold at each step of the iteration; if they fail, one cannot expect to efficiently recover the next most significant subspace. Furthermore, our work implies that iterative application of TME exactly recovers these multiple subspaces when these weak generic conditions hold. 

For future work, an important direction is the formulation of a fully noisy RSR setting to study its computational limits and approximate recovery properties. Transitioning from exact geometric recovery to this noisy regime fundamentally shifts the problem to average-case statistical inference. While recent advances in theoretical computer science have begun to map out statistical-computational gaps for robust subspace recovery (e.g., via Sum-of-Squares lower bounds~\cite{Bakshi2021}), these frameworks predominantly operate under a strictly noiseless signal model where inliers lie perfectly on the target subspace, or allow a very small perturbation by noise. Extending these computational hardness thresholds to the true noisy regime, and bridging them with efficient algorithms, remains a compelling, wide-open frontier for future research.

A more direct open question for future research is the formal convergence rate of TME when $\dssnr=1$ and \cref{item:convergence} of \cref{thm:zhang16_extend} holds. Note that our proof for the case $\dssnr > 1$ establishes linear convergence. Indeed, this follows directly from \eqref{eq:ratio_convergence}. This linear convergence cannot be implied by \cite{franks2020rigorous}, since their result guarantees linear convergence only to a strictly nonsingular matrix, whereas in our case, the global limiting matrix is singular. Numerical experiments seem to indicate sublinear convergence when $\dssnr=1$. It will be interesting to mathematically quantify this sublinear rate. This expected phenomenon of marked deceleration at the critical boundary mirrors the ``critical slowing down'' commonly observed in dynamical systems at a bifurcation boundary \cite{strogatz2014nonlinear}. In the context of iterative optimization, this corresponds to the spectral radius of the algorithmic map approaching unity at the boundary of stability. A mathematically analogous phenomenon regarding the degradation of convergence rates for iterative methods appears in \cite{ortega2000iterative}.


\section*{Acknowledgments}
GL was  supported by NSF award DMS-2152766  and TZ was supported by NSF award DMS-2318926.

\appendix
\section{Supplemental Results}
\label{sec:supplement}
\subsection{Clarifications and Properties of the Stability Condition}
\label{sec:clarify_zhang16_extend}

In this section, we provide essential clarifications regarding the assumptions of our main result (\Cref{thm:zhang16_extend}) and the constraints present in prior works. We first establish that \cref{item:convergence} guarantees the existence and uniqueness of the target TME estimator, as well as necessary dimensional spanning properties.

\begin{lemma}[Existence and Uniqueness of the TME Solution for Inliers]
\label{lem:existence_uniqueness}
Under \cref{item:convergence} of \Cref{thm:zhang16_extend}, the dimension-reduced inlier dataset $\tilde{\calX}_{\mathrm{in}} := \{\bU_{L_*}^\top\bx: \bx\in\calX_{\mathrm{in}}\}$ admits a unique TME estimator $\bSigma_{\mathrm{in},*} \in S_{++}(d)$.
\end{lemma}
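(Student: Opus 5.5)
The plan is to reduce the claim to the existence–uniqueness criterion of \cite[Theorem~1.2]{zhang2016robust}, which is already invoked several times in the paper. For a dataset $\calY\subset\R^m$ not containing the origin, that criterion asks that every proper nontrivial subspace $L\subsetneq\R^m$ satisfy $|\calY\cap L|<\frac{|\calY|}{m}\dim(L)$. Under this condition the TME estimator exists in $S_{++}(m)$ and is unique up to scale. Fixing the trace to $1$ then pins down $\bSigma_{\mathrm{in},*}$ uniquely. So the whole task is to verify this strict inequality for $\calY=\tilde{\calX}_{\mathrm{in}}$, with $m=d$ and $|\tilde{\calX}_{\mathrm{in}}|=n_1$.

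\textbf{Step 1: lift subspaces of $\R^d$ to subspaces of $L_*$.} Let $\tilde L\subsetneq\R^d$ be a nontrivial subspace, and set $L:=\bU_{L_*}\tilde L\subset L_*$. Since $\bU_{L_*}$ has orthonormal columns, $\dim(L)=\dim(\tilde L)$, and $L\subsetneq L_*$, so in particular $L\neq L_*$ and $\dim(L)\ge1$. Moreover, for every $\bx\in L_*$ we have $\bx=\bU_{L_*}\bU_{L_*}^\top\bx$. Hence $\bU_{L_*}^\top\bx\in\tilde L$ holds if and only if $\bx\in L$, which gives
\[
|\tilde{\calX}_{\mathrm{in}}\cap\tilde L|=|\calX_{\mathrm{in}}\cap L|=|\calX\cap L|,
\]
where the last equality uses $L\subset L_*$. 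The multiset convention needs a little care: $\bx\mapsto\bU_{L_*}^\top\bx$ is injective on $L_*$, so points are counted consistently. Also, no point of $\tilde{\calX}_{\mathrm{in}}$ is zero, because $\calX$ does not contain the origin.

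\textbf{Step 2: apply \cref{item:convergence}.} By \eqref{eq:assumption3_thm} applied to $L$,
\[
|\tilde{\calX}_{\mathrm{in}}\cap\tilde L|=|\calX\cap L|<\frac{|\calX\cap L_*|}{d}\dim(L)=\frac{n_1}{d}\dim(\tilde L).
\]
This is exactly the hypothesis of \cite[Theorem~1.2]{zhang2016robust} for $\tilde{\calX}_{\mathrm{in}}$ in $\R^d$, and that theorem yields existence in $S_{++}(d)$ and uniqueness up to scale.

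\textbf{Step 3: the edge cases, which are the main points to check.} The first is that $\tilde{\calX}_{\mathrm{in}}$ must span $\R^d$ for the estimator to be positive definite. Suppose instead that $\mathrm{span}(\tilde{\calX}_{\mathrm{in}})=\tilde L\subsetneq\R^d$. Then $\tilde L$ would contain all $n_1$ points. If $\tilde L\ne\{0\}$, Step 2 gives $n_1<\frac{n_1}{d}\dim(\tilde L)<n_1$, a contradiction. If $\tilde L=\{0\}$, then $n_1=0$, which contradicts $\dssnr\ge1$, or equivalently the requirement that the right-hand side of \eqref{eq:assumption3_thm} be positive.

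The second is the case $d=1$. Here there are no proper nontrivial subspaces, and TME is trivially the scalar $1$ after normalization, so the claim holds directly.

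The last point is to confirm that the version of \cite[Theorem~1.2]{zhang2016robust} being used gives uniqueness up to scaling, and that the strict inequality is exactly its hypothesis, including for repeated points. I expect this matching of hypotheses, rather than any computation, to be the only real obstacle.
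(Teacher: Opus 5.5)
Your proposal is correct and follows essentially the same route as the paper: lift each proper nontrivial subspace of $\R^d$ via $\bU_{L_*}$ to a proper subspace of $L_*$, note that dimensions and inlier counts are preserved (and no outlier can lie there), apply \eqref{eq:assumption3_thm}, and invoke \cite[Theorem~1.2]{zhang2016robust}. Your extra checks (spanning of $\R^d$, $n_1>0$, and the case $d=1$) are harmless additions that the paper leaves implicit in the cited criterion.
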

\begin{proof}
For any subspace $  \b0 \subsetneq L \subsetneq \R^d$, we evaluate the fraction of dimension-reduced inliers it contains:
\begin{equation}
\label{eq:teng_exist_cond}
\frac{|\tilde{\calX}_{\mathrm{in}} \cap L|}{\dim(L)} = \frac{|{\calX}_{\mathrm{in}} \cap \{\bU_{L_*}\bx: \bx\in L\}|}{\dim(\{\bU_{L_*}\bx: \bx\in L\})} = \frac{|{\calX} \cap \{\bU_{L_*}\bx: \bx\in L\}|}{\dim(\{\bU_{L_*}\bx: \bx\in L\})}.
\end{equation}
The first equality holds because the mapping $\bU_{L_*}$ preserves both cardinality and dimension. The second equality follows from the fact that outliers, by definition, do not lie in $L_*$; therefore, they cannot intersect the subspace $\{\bU_{L_*}\bx: \bx\in L\} \subset L_*$. 

We now apply \cref{item:convergence} of \Cref{thm:zhang16_extend} to the subspace $\{\bU_{L_*}\bx: \bx\in L\} \subsetneq \R^D$, yielding:
\begin{equation*}
\frac{|{\calX} \cap \{\bU_{L_*}\bx: \bx\in L\}|}{\dim(\{\bU_{L_*}\bx: \bx\in L\})} < \frac{|{\calX} \cap L_*|}{\dim(L_*)} = \frac{|\tilde{\calX}_{\mathrm{in}}|}{d}.
\end{equation*}
Combining this with \eqref{eq:teng_exist_cond} gives $\frac{|\tilde{\calX}_{\mathrm{in}} \cap L|}{\dim(L)} < \frac{|\tilde{\calX}_{\mathrm{in}}|}{d}$. According to Theorem 1.2 of \cite{zhang2016robust} (which builds upon \cite[Theorem 2]{Kent_Tyler88} and \cite[Proposition 1(a)]{tyler2005}), this strict inequality holding for all $\b0 \subsetneq L \subsetneq \R^d$ is the necessary and sufficient condition for the existence and uniqueness of $\bSigma_{\mathrm{in},*}$.
\end{proof}

\begin{lemma}[Implicit Dimensionality Constraints]
\label{lem:implicit_dimension}
If a noiseless inlier-outlier dataset $\calX$ satisfies \cref{item:dssnr1,item:convergence} of \Cref{thm:zhang16_extend}, then the total number of points satisfies $N > D$. Furthermore, if $\dssnr=1$, the dataset fully spans the ambient space, i.e., $\operatorname{span}(\calX) = \R^D$.
\end{lemma}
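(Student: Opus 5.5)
The plan is to apply \cref{item:convergence} of \Cref{thm:zhang16_extend} to a few well-chosen test subspaces and compare the resulting ratios with $|\calX\cap L_*|/\dim(L_*)=n_1/d$. The order is: first show $n_1>d$, then settle the spanning claim when $\dssnr=1$, then rule out $N=D$ in that case, and finally treat $N>D$ when $\dssnr>1$. I assume throughout that $n_0\ge 1$. Otherwise $\dssnr$ in \eqref{eq:dssnr} is not a finite ratio, and the dataset is trivially contained in $L_*$.

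\emph{Step 1 ($n_1>d$).} I take $L=\Sp(\bx)$ for a single inlier $\bx$. If $d\ge 2$, then $L\neq L_*$, and \eqref{eq:assumption3_thm} gives $1\le |\calX\cap L|<n_1/d$, so $n_1>d$. If $d=1$, I instead take $L=\Sp(\bx)$ for an outlier $\bx$, which gives $1\le|\calX\cap L|<n_1$. In both cases $n_1>d$.

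\emph{Step 2 (spanning when $\dssnr=1$).} When $\dssnr=1$ we have $n_1/d=n_0/(D-d)$, hence $N/D=n_1/d$. Let $S=\operatorname{span}(\calX)$. Since $n_0\ge1$ there is an outlier, which lies in $S$ but not in $L_*$, so $S\neq L_*$. If $S\subsetneq\R^D$, then \eqref{eq:assumption3_thm} gives $N/\dim S=|\calX\cap S|/\dim S<n_1/d=N/D$. This forces $\dim S>D$, which is absurd. Hence $S=\R^D$, and in particular $N\ge D$.

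\emph{Step 3 (excluding $N=D$ when $\dssnr=1$).} Suppose $N=D$. By Step 2 the $D$ points span $\R^D$, so they are linearly independent. Then $n_1/d=N/D=1$, which contradicts $n_1>d$ from Step 1. This gives $N>D$ when $\dssnr=1$.

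\emph{Step 4 ($\dssnr>1$), the main obstacle.} Here $N/D<n_1/d$, so the counting in Step 2 yields no contradiction. I would first try to compare against $V=L_*\oplus\Sp(\bx)$ for outliers $\bx$, and against spans of subsets of outliers, to force $N>D$. I do not see how to make this work. These comparisons only bound the number of outliers in any subspace from above; they cannot force many outliers. Small configurations also seem to rule out a general proof. For $D=3$, $d=1$, two inliers on a line and one outlier, we get $N=D=3$. Yet $\dssnr=4$, and \eqref{eq:assumption3_thm} appears to hold: the plane through $L_*$ and the outlier has ratio $3/2<2$, and every other line or plane contains at most one point. So the statement as worded seems to need an extra hypothesis when $\dssnr>1$. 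Examples are that $\calX$ spans $\R^D$, or that there are at least $D-d$ outliers as in \Cref{thm:hm_doable}. Under that extra hypothesis I would run Step 3 with $n_1>d$ and $n_0\ge D-d$, giving $N=n_1+n_0>D$. Without it, I expect Step 4 to fail as a proof of $N>D$.
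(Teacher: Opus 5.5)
Your Steps 1--3 follow the paper's proof of the part of the lemma that is true. A line through a data point gives $n_1>d$. Applying \eqref{eq:assumption3_thm} to $\operatorname{span}(\calX)$, together with $N/D=n_1/d$, gives spanning and $N>D$ when $\dssnr=1$. You are slightly more careful than the paper in two places. For $d=1$ you correctly take the line through an outlier, since the line through an inlier is $L_*$ itself and \eqref{eq:assumption3_thm} says nothing about it. You also make explicit that $\operatorname{span}(\calX)\neq L_*$. The appeal to linear independence in Step 3 is unnecessary: $N=Dn_1/d>D$ follows directly from $n_1>d$.

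Your Step 4 is also correct, and you can state it as a disproof rather than an expected failure. The paper derives $N>D$ for every $\dssnr\ge 1$ by writing $n_0\ge n_1(D-d)/d>D-d$. But $\dssnr\ge 1$ is equivalent to $n_0\le n_1(D-d)/d$, so the inequality is reversed. The argument is valid only at $\dssnr=1$, where it becomes an equality and coincides with yours.

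Your example ($D=3$, $d=1$, inliers $e_1$ and $2e_1$, outlier $e_2$) satisfies both hypotheses with $\dssnr=4$ and $N=D$. There is one slip in your check: a plane containing $L_*$ but not the outlier contains two points, not one. Its ratio is $1<2$, so the conclusion is unchanged. The phenomenon is not special to $d=1$. Take $d+1$ inliers in general position in $L_*$ and $D-d-1\ge 1$ generic outliers; this also gives $N=D$ with $\dssnr>1$ and \eqref{eq:assumption3_thm} intact.

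So the claim $N>D$ is false for $\dssnr>1$ as stated. It needs an extra hypothesis such as $n_0\ge D-d$ or $\operatorname{span}(\calX)=\R^D$. Either one, combined with $n_1>d$, gives $N>D$ exactly as you indicate.
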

\begin{proof}
Consider any 1-dimensional subspace (a line) $L$ that contains at least one point from $\calX$. For this line, $\frac{|\mathcal{X} \cap L|}{\dim(L)} \geq 1$. By \cref{item:convergence}, this must be strictly less than $\frac{|\mathcal{X} \cap L_*|}{\dim(L_*)} = \frac{n_1}{d}$. Therefore, we must have $n_1 > d$. Since \cref{item:dssnr1} requires $\dssnr \geq 1$, it follows from the definition of $\dssnr$ that $n_0 \geq \frac{n_1(D-d)}{d} > D-d$. Consequently, the total number of points is $N = n_0 + n_1 > D$.

Next, consider the case where $\dssnr=1$. Let $L = \operatorname{span}(\calX)$ and assume for the sake of contradiction that $L \subsetneq \R^D$. Note that for any value of $\dssnr$, straightforward algebraic manipulation yields:
\begin{equation}\label{eq:dssnr_N}
\frac{n_1/d}{N/D} = \frac{n_1/d}{(n_1+n_0)/D} = \frac{n_1 D}{\left(n_1 + n_1\frac{D-d}{d\cdot\dssnr}\right)d} = \frac{D}{(D-d)/\dssnr + d}.
\end{equation}
When $\dssnr=1$, \eqref{eq:dssnr_N} simplifies to $\frac{n_1/d}{N/D} = 1$, implying $\frac{N}{D} = \frac{|\calX\cap L_*|}{\dim(L_*)}$. However, if $L \subsetneq \R^D$, then:
\begin{equation*}
\frac{|\calX\cap L|}{\dim(L)} = \frac{N}{\dim(L)} > \frac{N}{D} = \frac{|\calX\cap L_*|}{\dim(L_*)},
\end{equation*}
which directly contradicts \cref{item:convergence}. Thus, we must have $\operatorname{span}(\calX) = \R^D$.
\end{proof}

Finally, we clarify why the theoretical guarantees for the deterministic algorithm in \cite{hardt2013algorithms} strictly require the conditions outlined in \Cref{thm:hm_doable}.

\begin{remark}[Necessity of $n_0 \geq D-d$ and $N>D$ for \cite{hardt2013algorithms}]
\label{rem:hardt_moitra_conditions}

We start with the necessity of having at least $D-d$ outliers. We first claim that without this requirement Condition 7 in \cite{hardt2013algorithms} imposes no effective restriction on the outliers. We recall that this condition states that a subset of $D$ points is linearly independent if and only if at most $d$ of them are inliers. 
Thus, if there are fewer than $D-d$ outliers, any subset of $D$ points must contain at least $d+1$ inliers. Since these $d+1$ inliers lie on a $d$-subspace they must be dependent. Consequently, Condition 7 is satisfied trivially and imposes no effective restriction on the outliers.

Next, we note that if two outliers are dependent, then Claim 10 in \cite{hardt2013algorithms} may fail. Claim 10 can be translated into our notation as follows: suppose that
\begin{equation}\label{eq:claim10}
\bx_{i_1}c_{i_1}+\cdots+\bx_{i_D}c_{i_D}=0
\end{equation}
for $D$ data points $\bx_{i_1},\ldots,\bx_{i_D}$, where $1 \leq i_1 < i_2< \ldots <i_D \leq N$, and scalars
$c_{i_1},\ldots,c_{i_D}$ that are not all zero, and let $\calI=\{i_j: 1\leq j\leq D, c_{i_j}\neq 0\}$ denote the set of
indices corresponding to the nonzero coefficients. Then $|\calI|\ge d+1$, and
moreover, for every $k\in\calI$, the point $\bx_k$ is an inlier.

Now, consider the example where $\bx_1$ and $\bx_2$ are two outliers satisfying
$\bx_1=-\bx_2$. Then \eqref{eq:claim10} holds with
$i_j=j$ for all $1\leq j\leq D$, and
$(c_1,\cdots,c_D)=(1,1,0,\ldots,0).$
In this case, we have $\calI=\{1,2\}$. Clearly, it violates Claim 10, which implies that for any $i \in \calI$, $\bx_i$ is an outlier. We note that the proof of Theorem \ref{thm:hm_doable} in \cite{hardt2013algorithms} (which is referred to there as Theorem~22) relies on Claim 10. Therefore, without the assumption $n_0 \geq D-d$ the proof of this theorem generally fails. 

Lastly, we establish the necessity of $N>D$ for Theorem \ref{thm:hm_doable}, or equivalently, 
\cite[Theorem 22]{hardt2013algorithms}. We note that the formulation of \cite[Theorem 22]{hardt2013algorithms} already requires $N\ge D$. Moreover,
when $N=D$, the assumption $\dssnr>1$ of \cite[Theorem 22]{hardt2013algorithms} implies that the number
of outliers must be strictly smaller than $D-d$. This contradicts the earlier
discussion that at least $D-d$ outliers are necessary.

\end{remark}

\subsection{Proof of \Cref{prop:moitra}}
The proof uses the fact that the number of inliers in $\calX$ is at least $d+1$, which follows from $N > D$ and $\dssnr \geq 1$. We will first show that 
\begin{equation}
\label{eq:basic_fact_L_subset_Lstart}
\text{any subspace } \ L \subsetneq L_* \ \text{ contains at most } \ \dim(L) \ \text{ inliers.}    
\end{equation}
Suppose, on the contrary, that $L$ contains at least $\dim(L)+1$ inliers. Since these points lie in a subspace of dimension $\dim(L)$, they must be linearly dependent. By combining this set with an additional $d - (\dim(L)+1)$ inliers and $D - d$ outliers, we obtain a set of $D$ points that is linearly dependent while containing $d$ inliers, contradicting 
the assumption that $\calX$ is in general position with respect to $L_*$ (every set of $D$ points is linearly independent if and only if it contains at most $d$ inliers).

We now prove the proposition by showing that, for any subspace $L\neq L_*$, \eqref{eq:assumption3_thm} holds, 
by considering two cases.

\begin{itemize}
    \item $L$ is a subspace that does not contain $L_*$:     
    We will first show that $L$ contains at most $\dim(L)$ data points. 
    Suppose instead that a set $\calX_0$ of $\dim(L)+1$ data points lies in $L$. According to \eqref{eq:basic_fact_L_subset_Lstart}, at most $\dim(L \cap L_*)$ of these points can be inliers. Let $x$ denote the number of inliers in $\calX_0$, then $x\leq \dim(L \cap L_*)<d$  and the number of outliers in $\calX_0$ is $\dim(L)+1-x$. 
    We construct $\calX_1$, a set of $D$ points that contradicts 
    the assumption that $\calX$ is in general position with respect to $L_*$ (every set of $D$ points is linearly independent if and only if it contains at most $d$ inliers) 
    in two complementary cases: 
    \begin{itemize}
    \item When $D-d\geq \dim(L)+1-x$, $\calX_1$ contains $\calX_0$ and an additional $d-x$ inliers from $\calX$ (which is possible as there are at least $d$ inliers in $\calX$), and an additional $D-d-(\dim(L)+1-x)$ outliers from $\calX$ (which is possible as there are at least $D-d$ outliers in $\calX$). For this case,  $\calX_1$ contains $d$ inliers. 
    \item When $D-d< \dim(L)+1-x$,  $\calX_1$ contains $\calX_0$ and an additional $D-(\dim(L)+1)$ inliers. We can select the additional $D-(\dim(L)+1)$ inliers because the number of inliers that are in $\calX\setminus \calX_0$ is at least $d-x$, which is larger than $D-(\dim(L)+1)$ as $x+(D-(\dim(L)+1))=d+\left((D-d)-(\dim(L)+1-x)\right)<d$.
    For this case, the number of inliers in  $\calX_1$ is $x+(D-(\dim(L)+1))$, which is smaller than $d$ by the argument above.
    \end{itemize}
    
    Then  $\calX_1$ is a set of $D$ points containing at most $d$ inliers, but also contains $\calX_0$, a linearly dependent subset, and hence  $\calX_1$ is also linearly dependent. This  contradicts the assumption that $\calX$ is in general position with respect to $L_*$. Indeed, every set of $D$ points is linearly independent if and only if it contains at most $d$ inliers. Consequently, we can assume in this case that $L$ contains at most $\dim(L)$ data points. 
    Therefore,
    \begin{equation}\label{eq:XL=1}
\frac{|\mathcal{X} \cap L|}{\dim(L)} \leq      \frac{\dim(L)}{\dim(L)} = 1.
    \end{equation}
    In comparison, applying \eqref{eq:dssnr_N}, $N>D$, and $\dssnr\geq 1$, \begin{equation}\label{eq:XL*>1}\frac{|\mathcal{X} \cap L_*|}{\dim(L_*)}>1.\end{equation} Combining \eqref{eq:XL=1} and \eqref{eq:XL*>1}, \eqref{eq:assumption3_thm} is proved.
    
    \item $L$ contains $ L_*$ and $L\neq L_*$, i.e., $L\supsetneq L_*$: We will first show that  $L$ contains at most $\dim(L)-d$ outliers. 
    Suppose instead that $\calX_0$ is a set of  $\dim(L)-d+1$ outliers that lie in $L$. Let  $\calX_1$ contain  $\calX_0$, any additional
    $ D-d-(\dim(L)-d+1)=D-\dim(L)-1$ 
    outliers from $\calX$ (it is possible as there are at least $D-d$ outliers from $\calX$), and any $d$ inliers. Then  $\calX_1$ is a set of $D$ points containing exactly $d$ inliers. However, $L\cap \calX_1$ contains $\calX_0$ and $d$ inliers, so at least
    $
    (\dim(L)-d+1)+d=\dim(L)+1
    $
 points of  $\calX_1$ lie in $L$, so the set $\calX_1$ is linearly dependent, again contradicting the assumption that $\calX$ is in general position with respect to $L_*$ (every set of $D$ points is linearly independent if and only if it contains at most $d$ inliers).
    
Then we conclude that, $L$ contains at most $\dim(L)-d$ outliers, that is, $|\calX\cap (L\setminus L_*)|\leq \dim(L)-d$. Then  \eqref{eq:assumption3_thm} can be verified: 
\[
\frac{|\mathcal{X} \cap L|}{\dim(L)}
=\frac{|\mathcal{X} \cap L_*|+|\calX\cap (L\setminus L_*)|}{\dim(L_*)+(\dim(L)-d)}\leq 
\frac{|\mathcal{X} \cap L_*|+(\dim(L)-d)}{\dim(L_*)+(\dim(L)-d)}
<
\frac{|\mathcal{X} \cap L_*|}{\dim(L_*)},
\]
where the last step follows from $\dim(L)-d>0$ and \eqref{eq:XL*>1}. \qed
\end{itemize}

\subsection{Proof of \Cref{prop:thirdassumption}}
For any subspace \(L\), we may decompose it as
\[
L = L_1 \oplus L_2, \quad 
L_1 = L \cap L_*, \quad
L_2 = L \cap L_1^\perp .
\]
By the general position assumption on the inliers, if \(L_1 \neq L_*\), the number of inliers contained in \(L\) is at most \(\dim(L_1)\). Furthermore, if \(L_1 = L_*\), then \(L\) contains exactly \(n_1\) inliers.

Similarly, by the general position assumption on the projected outliers and the fact that for any \(\bx \in L\),
\[
P_{L_*^\perp}\bx \in P_{L_*^\perp}(L) = P_{L_*^\perp}(L_2),
\] if \(P_{L_*^\perp} L_2 \neq L_*^\perp\), then the number of outliers contained in \(L\) is at most \(\dim(L_2)\). Furthermore, if \(P_{L_*^\perp} L_2 = L_*^\perp\), then \(L\) contains exactly \(n_0\) outliers.

We now prove \Cref{prop:thirdassumption} by considering four cases.

\paragraph{Case 1: \(L_1 \neq L_*\) and \(P_{L_*^\perp}L_2 \neq L_*^\perp\).}  
Note that \(L\) contains at most \(\dim(L_1)+\dim(L_2)=\dim(L)\) points.  
Since \(N > D\), the fraction of points in \(L\) is at most
\[
\frac{\#(L \cap \calX)}{N}\leq \frac{\dim(L)}{N} < \frac{\dim(L)}{D}.
\]

\paragraph{Case 2: \(L_1 = L_*\) and \(P_{L_*^\perp}L_2 \neq L_*^\perp\).}  
In this case \(L\) contains at most \(n_1+\dim(L_2)\) points.  
We further note that since \(\dssnr=1\), \(\frac{n_1}{N}=\frac{d}{D}\).  Lastly, observe that 
\(\dim(L)=\dim(L_1)+\dim(L_2)=d+\dim(L_2)\). Combining these observations and the inequality $N>D$ results in
\[
\frac{\#(L \cap \calX)}{N} \leq
\frac{n_1+\dim(L_2)}{N} =\frac{d}{D}+\frac{\dim(L_2)}{N} < \frac{d+\dim(L_2)}{D}=\frac{\dim(L)}{D}.
\]

\paragraph{Case 3:  \(L_1 \neq L_*\) and \(P_{L_*^\perp}L_2 = L_*^\perp\).}  
In this case \(L\) contains at most \(\dim(L_1)+n_0\) points.  Note that since \(\dssnr=1\), \(\frac{n_0}{N}=\frac{D-d}{D}\). 
Lastly, since \(\dim(L_2)=D-d\),
\[
\dim(L)=\dim(L_1)+\dim(L_2)=\dim(L_1)+(D-d).
\]
Combining these estimates and $N>D$ yields 
\[
\frac{\#(L \cap \calX)}{N} \leq \frac{\dim(L_1)+n_0}{N}
   = \frac{\dim(L_1)}{N} + \frac{D-d}{D}
   < \frac{\dim(L_1)}{D} + \frac{D-d}{D}
   = \frac{\dim(L)}{D}.
\]

\paragraph{Case 4: \(L_1 = L_*\) and \(P_{L_*^\perp}L_2 = L_*^\perp\).}   
In this case $\dim(L)=\dim(L_1)+\dim(L_2)=d+(D-d)=D$, and thus $L=\R^D$, which contradicts the assumption that $L$ is a proper subspace. 
Combining all four cases completes the proof of the lemma. \qed

\subsection{Existence and Uniqueness of the TME Solution for Outliers}
\label{sec:existence_uniqueness_outliers}
We recall that \eqref{eq:x_in_out_def} defines 
$\widehat{\calX}_{\mathrm{out}}=\{\bU_{L_*^\perp}^\top\bx: \bx\in\calX_{\mathrm{out}}\}$  
and we remind the reader of the definition of $\tilde{\calX}_{\mathrm{in}}$. The existence and uniqueness of the TME estimator for $\tilde{\calX}_{\mathrm{in}}$ was proved in \cref{lem:existence_uniqueness}. Here we verify the existence and uniqueness of $\widehat{\calX}_{\mathrm{out}}$.
\begin{lemma}
\label{lem:existence_uniqueness_outliers}
Under \cref{item:convergence} of \Cref{thm:zhang16_extend} and the condition $\dssnr=1$, the dimension-reduced outlier dataset $\widehat{\calX}_{\mathrm{out}} := \{\bU_{L_*^\perp}^\top\bx: \bx\in\calX_{\mathrm{out}}\}$ admits a unique TME estimator $\bSigma_{\mathrm{out},*} \in S_{++}(D-d)$.
\end{lemma}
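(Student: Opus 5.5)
We will follow the template of the proof of \Cref{lem:existence_uniqueness}: by \cite[Theorem~1.2]{zhang2016robust}, it suffices to show that every subspace $\b0 \subsetneq L \subsetneq \R^{D-d}$ satisfies
\[
\frac{|\widehat{\calX}_{\mathrm{out}}\cap L|}{\dim(L)} < \frac{|\widehat{\calX}_{\mathrm{out}}|}{D-d} = \frac{n_0}{D-d}.
\]
Here $\widehat{\calX}_{\mathrm{out}}$ is counted as a multiset; it contains no zeros because outliers lie off $L_*$. Since $\dssnr=1$, we have $n_0/(D-d)=n_1/d=N/D$ by \eqref{eq:dssnr_N}. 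So the target inequality reads $|\widehat{\calX}_{\mathrm{out}}\cap L| < \dim(L)\, n_1/d$.

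The key step is to lift $L$ back to $\R^D$. Set
\[
\hat L := L_* \oplus \{\bU_{L_*^\perp}\by : \by\in L\},
\]
so that $\dim(\hat L) = d+\dim(L) < D$ and $\hat L$ is a proper subspace strictly containing $L_*$. An outlier $\bx$ satisfies $\bU_{L_*^\perp}^\top\bx \in L$ exactly when $P_{L_*^\perp}\bx \in \bU_{L_*^\perp}L$, which is exactly when $\bx\in\hat L$. Therefore
\[
|\calX\cap\hat L| = n_1 + |\widehat{\calX}_{\mathrm{out}}\cap L|.
\]
Since $\hat L \neq L_*$, \cref{item:convergence} applies and gives
\[
\frac{n_1+|\widehat{\calX}_{\mathrm{out}}\cap L|}{d+\dim(L)} < \frac{n_1}{d}.
\]
Rearranging yields $|\widehat{\calX}_{\mathrm{out}}\cap L| < \dim(L)\,n_1/d = \dim(L)\,n_0/(D-d)$, which is the required inequality.

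Finally, \cite[Theorem~1.2]{zhang2016robust} gives existence and uniqueness of the trace-normalized TME $\bSigma_{\mathrm{out},*}\in S_{++}(D-d)$. That theorem also needs $\widehat{\calX}_{\mathrm{out}}$ to span $\R^{D-d}$. This follows from \Cref{lem:implicit_dimension}: $\calX$ spans $\R^D$ when $\dssnr=1$, so projecting onto $L_*^\perp$ shows that the outliers' projections span $L_*^\perp$.

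The main point to check carefully is the correspondence between $\widehat{\calX}_{\mathrm{out}}\cap L$ and $\calX_{\mathrm{out}}\cap\hat L$. Multiplicities must be handled consistently: distinct outliers can have the same projection, so we count with multiplicity, as the TME objective does. We also need to confirm that no inlier is counted twice.
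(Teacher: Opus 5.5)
Your proposal is correct and is essentially the paper's proof. Both lift $L$ to $L_*\oplus\{\bU_{L_*^\perp}\by:\by\in L\}$, count the $n_1$ inliers plus the outliers whose projections lie in $L$, and combine $n_0/(D-d)=n_1/d$ with \cref{item:convergence}; the paper argues by contradiction while you argue directly, and your extra spanning check via \Cref{lem:implicit_dimension} is harmless but redundant, since the strict subspace inequality already forces $\widehat{\calX}_{\mathrm{out}}$ to span $\R^{D-d}$.
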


\begin{proof}
According to \cite[Theorem 1.2]{zhang2016robust}, the unique estimator exists if and only if for all proper subspaces $ \b0 \subsetneq L' \subsetneq \R^{D-d}$, the fraction of points satisfies:
$$ \frac{|\widehat{\calX}_{\mathrm{out}} \cap L'|}{\dim(L')} < \frac{|\widehat{\calX}_{\mathrm{out}}|}{D-d} = \frac{n_0}{D-d}. $$

We proceed by contradiction. Assume there exists a proper subspace $L' \subsetneq \R^{D-d}$ such that $|\widehat{\calX}_{\mathrm{out}} \cap L'| \geq n_0 \frac{\dim(L')}{D-d}$. We construct a lifted subspace in $\R^D$ defined by:
$$ L_{\mathrm{lift}} = \left\{ \bx + \bU_{L_*^\perp}\by : \bx \in L_*, \ \by \in L' \right\}. $$
Since $L_*$ and the range of $\bU_{L_*^\perp}$ are orthogonal complements, the dimension of this new subspace is $\dim(L_{\mathrm{lift}}) = d + \dim(L')$. Furthermore, because $L' \subsetneq \R^{D-d}$, $L_{\mathrm{lift}}$ is strictly a proper subspace of $\R^D$.

Next, we count the total number of points from $\calX$ contained within $L_{\mathrm{lift}}$. By definition, it contains all $n_1$ inliers because $L_* \subset L_{\mathrm{lift}}$. It also contains all outliers $\bx \in \calX_{\mathrm{out}}$ whose projections $\bU_{L_*^\perp}^\top\bx$ lie in $L'$. By our contradiction hypothesis, there are at least $n_0 \frac{\dim(L')}{D-d}$ such points. Thus, the total number of points in $L_{\mathrm{lift}}$ is at least:
$$ |\calX \cap L_{\mathrm{lift}}| \geq n_1 + n_0\frac{\dim(L')}{D-d}. $$

Recall that the condition $\dssnr=1$ implies $\frac{n_1}{d} = \frac{N}{D}$. Since $N = n_1 + n_0$ and $D = d + (D-d)$, it algebraically follows that $\frac{n_0}{D-d} = \frac{N}{D} = \frac{n_1}{d}$. Substituting this into our inequality yields:
$$ |\calX \cap L_{\mathrm{lift}}| \geq \frac{n_1}{d}d + \frac{n_1}{d}\dim(L') = \frac{n_1}{d}\bigl(d+\dim(L')\bigr) = \frac{n_1}{d}\dim(L_{\mathrm{lift}}). $$

Rearranging this gives:
$$ \frac{|\calX \cap L_{\mathrm{lift}}|}{\dim(L_{\mathrm{lift}})} \geq \frac{n_1}{d}. $$
However, \cref{item:convergence} of \Cref{thm:zhang16_extend} strictly requires that for any proper subspace of $\R^D$, the fraction of points is strictly less than $\frac{n_1}{d}$. We have thus reached a contradiction. Therefore, no such subspace $L'$ can exist, and the existence and uniqueness of $\bSigma_{\mathrm{out},*}$ is guaranteed.
\end{proof}

\bibliographystyle{plain}
\bibliography{RSR-ref}

\end{document}